	\newtheorem{theorem}{Theorem}
	\newtheorem{lemma}[theorem]{Lemma}
	\newtheorem{definition}{Definition}
	\newtheorem{example}{Example}
\begin{document}
		%
		\title{The 4-Adic Complexity of Interleaved  Quaternary Sequences of Even Length with Optimal Autocorrelation
		}
		%
		%
		%
		
		\author{Xiaoyan~Jing,
			Zhefeng~Xu,
			Minghui~Yang,
			and~Keqin~Feng
			\thanks{Xiaoyan Jing is with Research Center for Number Theory and Its Applications, Northwest University, Xi'an 710127, China (Email: jxymg@126.com).}
			\thanks{ Zhefeng Xu is with Research Center for Number Theory and Its Applications, Northwest University, Xi'an 710127, China (Email: zfxu@nwu.edu.cn).}
			\thanks{Minghui Yang is with State Key Laboratory of Information Security, Institute of Information Engineering, Chinese Academy of Sciences, Beijing 100093, China (e-mail:  yangminghui6688@163.com).}
			\thanks{Keqin Feng is with the department of Mathematical Sciences, Tsinghua University, Beijing 100084, China (email: fengkq@tsinghua.edu.cn).}
			 \thanks{This work is supported by the State Key Program of National Natural Science Foundation of China (12031011) and the National Natural Science Foundation of China (11701553, 11971381).}
		}
		
		%
		%

	\markboth{}%
	{Shell \MakeLowercase{\textit{et al.}}: Bare Demo of IEEEtran.cls for IEEE Journals}
	%



	\maketitle
	

	%
	\IEEEpeerreviewmaketitle
	
	\begin{abstract}
		Su et al. proposed several new classes of quaternary sequences of even length  with optimal
		autocorrelation interleaved by twin-prime sequences pairs, GMW sequences pairs or binary cyclotomic sequences of order four in \cite{S1}. In this paper, we determine the 4-adic complexity of these quaternary sequences with period $2n$ by using correlation function  and the ``Gauss periods" of order four and ``quadratic Gauss sums" on finite field $\mathbb{F}_n$ and valued in  $\mathbb{Z}^{*}_{4^{2n}-1}$. Our results show that they are safe enough to resist the attack of the rational approximation algorithm.
	\end{abstract}
	
	\begin{IEEEkeywords}
		quaternary sequences, 4-adic complexity, interleaved sequences, optimal autocorrelation, ``quadratic Gauss sums"
	\end{IEEEkeywords}

	%
	\IEEEpeerreviewmaketitle

	\section{Introduction}\label{section1}
	Binary and quaternary sequences play important roles in communication and cryptography systems. They are expected to have good autocorrelation and high complexity (linear complexity, $N$-adic complexity and so on) for high speed and security of communication. For two $N$-ary sequences $s=\{s_i\}_{i=0}^{n-1}$ and $t=\{t_i\}_{i=0}^{n-1}$ with period $n$, the correlation function of $s$ and $t$ is defined by
	$$R_{s,t}(\tau)=\sum_{i=0}^{n-1}\xi^{s_{i+\tau}-t_i}\in {\mathbb{Z}[\xi]},\ 0\leq\tau<n,$$
	where $\xi$ is a primitive $N$-th root of unity. When $t=s$, the correlation function is called autocorrelation function of sequence $s$ and denoted by $R_{s}(\tau)$.  The maximum out-of-phase autocorrelation magnitude of $s$ is defined as
	$R_{max}(s) = \max\{|R_s(\tau)| : 1\leq \tau < n\}$.
	In practice, we need the out-of-phase autocorrelation magnitude $R_{max}(s)$ to be as small as possible. Specially,
	for a  quaternary sequence with even period $2n$, it is called optimal  autocorrelation sequence if $R_{max}(s)=2$ \cite{T1}.
	Several optimal quaternary sequences have been given by \cite{Kim1}, \cite{L}, \cite{Kim2}, \cite{J}. Thereafter Su et al. \cite{S1} presented several new families of optimal quaternary sequences constructed by interleaving operater,  twin-prime sequences pairs and GMW sequences pairs given by Tang and Gong in 2010 \cite{T2} or binary cyclotomic sequences of order four. For a optimal quaternary sequence, it is hoped that the 4-adic complexity be as large as possible.
	
	For a $N$-ary sequence $s=\{s_i\}_{i=0}^{n-1}$ with period $n$,
	the N-adic complexity of $s$ is defined by
	$$C_{N}(s)=\log_{N}\frac{N^{n}-1}{d}, $$
	where $d=\gcd (S(N), N^{n}-1)$, $S(N)=\sum_{i=0}^{n-1}s_{i}N^{i}\in \mathbb{Z}$.
	The $N$-adic complexity $C_N(s)$ measures that the smallest length of the feedback with carry shift register (FCSR) to generate an $N$-ary sequence.  To resist the attack of the rational approximation algorithm, the 4-adic complexity of a quaternary sequence $s$ with period $2n$ should exceed $\frac{2n-16}{6}$  \cite{K1}, \cite{K2}. The 4-adic complexity of several quaternary sequences has been computed in \cite{Q1}, \cite{Y}, \cite{Q}, \cite{EA.}, we will compute the 4-adic complexity of the optimal quaternary sequences given by Su \cite{S1} and show that the 4-adic complexity of such sequences is large enough to resist the attack of the rational approximation algorithm in this paper.

	In Section \ref{sec2} we briefly introduce the interleaved quaternary sequences constructed by binary sequences pair with the same period $n$, and present the optimal quaternary sequences given by Su in \cite{S1} from a pair of twin-prime sequences, GMW sequences or binary cyclotomic sequences of order four. In Section \ref{sec3} we give the calculation of the 4-adic complexity of quaternary sequences interleaved by a pair of twin-prime sequences or GMW sequences using correlation function  and give some  examples to demonstrate the main results. In Section \ref{sec4} we present the ``Gauss periods" of order four and  ``quadratic Gauss sums" on finite field $\mathbb{F}_n$ and  valued in  $\mathbb{Z}^{*}_{4^{2n}-1}$, then use them to determine the 4-adic complexity of interleaved quaternary sequences constructed by two or three binary cyclotomic sequences of order four. In Section \ref{sec3} we give the calculation of the 4-adic complexity of quaternary sequenc of order four. We also give some examples to verify the correctness of our results. Section \ref{sec5} is a conclusion of this paper.

	\section{Interleaved Quaternary Sequences}\label{sec2}
	In this section, we give a brief introduction of interleaved quaternary sequences with period $2n$.
	Let $n$ be an odd integer, $n\geq3$, $c^0=\{c^0_j\}_{j=0}^{n-1}$ and $c^1=\{c^1_j\}_{j=0}^{n-1}$, $c^2=\{c^2_j\}_{j=0}^{n-1}$ and $c^3=\{c^3_j\}_{j=0}^{n-1}$ be two pairs of binary sequences with the same period $n$, \ $\lambda=\frac{n+1}{2},\ e=(e_0, e_1, e_2)$ be a binary sequence defined over $\mathbb{F}_2$. Considering the following two $n\times2$ matrices over $\mathbb{F}_{2}=\{0, 1\}$:
	$$(a_{(l,k)})_{n\times2}=\left(
	\begin{array}{cc}
		c^0_0 & c^1_{0+\lambda}\oplus e_0 \\
		c^0_1 & c^1_{1+\lambda}\oplus e_0  \\
		\vdots & \vdots  \\
		c^0_{n-1} & c^1_{n-1+\lambda}\oplus e_0   \\
	\end{array}
	\right),
	\ \ (b_{(l,k)})_{n\times2}=\left(
	\begin{array}{cc}
		c^2_0\oplus e_1 & c^3_{0+\lambda}\oplus e_2 \\
		c^2_1\oplus e_1 & c^3_{1+\lambda}\oplus e_2 \\
		\vdots & \vdots  \\
		c^2_{n-1}\oplus e_1 & c^3_{n-1+\lambda}\oplus e_2 \\
	\end{array}
	\right),
	$$
	where $\oplus$ denotes binary addition over $\mathbb{F}_2$.
	
	Then we have two interleaved binary sequences $a(2l+k)=a_{(l,k)}$ and $b(2l+k)=b_{(l,k)}$ with period $2n$ from the matrix $(a_{(l,k)})_{n\times2}$ and $(b_{(l,k)})_{n\times2}$ respectively, where $0\leq l<n,\ 0\leq k<2$. We denote $$a=\emph{I}(c^0, L^{\lambda}(c^1)\oplus e_0),\ b=\emph{I}(c^2\oplus e_1, L^{\lambda}(c^3)\oplus e_2)$$
	for convenience, where $\emph{I}$ is the interleaving operator and $L^{\lambda}(c)$ is $\lambda$-shift of the sequence $c$, namely
	\begin{align*}
		&a=(c^0_0, c^1_{0+\lambda}\oplus e_0, c^0_1, c^1_{1+\lambda}\oplus e_0, \cdots,  c^0_{n-1}, c^1_{n-1+\lambda}\oplus e_0),\\
		&b=(c^2_0\oplus e_1, c^3_{0+\lambda}\oplus e_2, c^2_1\oplus e_1, c^3_{1+\lambda}\oplus e_2, \cdots, c^2_{n-1}\oplus e_1, c^3_{n-1+\lambda}\oplus e_2).
	\end{align*}
	
	It is well known that the Gray mapping is defined as
	$$\phi: \mathbb{F}_{2}\times \mathbb{F}_{2}\cong \mathbb{Z}_{4},$$
	$$\phi(0, 0)=0,~ \phi(0, 1)=1,~ \phi(1, 1)=2, ~\phi(1, 0)=3.$$
	Using  Gray mapping $\phi$, a family of quaternary sequences $s$ with period $2n$ is constructed by Su \cite{S1}:
	\begin{align}\label{E1}
		s=s(a, b)=\{s_i\}_{i=0}^{2n-1}, ~~s_{i}=\phi(a_{i}, b_{i}).
	\end{align}
	
		\subsection{Quaternary sequences with optimal autocorrelation constructed by sequences pair}
	
	It is proved that if $c^0$ and $c^1$, $c^2$ and $c^3$ are two twin-prime sequences pairs or two GMW sequences pairs with period $n$ given in \cite{T2}, the interleaved quaternary sequences $s=s(a, b)$ have optimal autocorrelation $R_{s}(\tau)\in \{0, -2\}$ for all $1\leq \tau\leq 2n-1$.
	
	
	
	Then we have the following lemma.
	
	\begin{lemma}\label{lem1}(\cite{S1})
		Let $t_0$ and $t_1$ be the twin-prime sequences pair of length $n=p(p+2)$ or GMW sequences pair of length $n=2^{2k}-1$. If $e=(e_0,e_1,e_2)$ satisfies $e_0+e_1+e_2\equiv1\pmod{2}$ and
		$$(c^0,c^1,c^2,c^3)\in\{(t_0,t_1,t_0,t_1),(t_0,t_1,t_1,t_0),(t_1,t_0,t_1,t_0),(t_1,t_0,t_0,t_1)\}.$$
		Then the quaternary sequence $s$ given by Construction (1) is an optimal autocorrelation sequence.
	\end{lemma}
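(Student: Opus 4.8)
The plan is to compute the quaternary autocorrelation $R_s(\tau)$ of $s=s(a,b)$ directly and to show it lies in $\{0,-2\}$ for all $1\le\tau\le 2n-1$. Since $\xi=\sqrt{-1}$ here, the Gray map linearizes: for $x,y\in\mathbb F_2$ one has $\xi^{\phi(x,y)}=\tfrac{1+\xi}{2}(-1)^{x}+\tfrac{1-\xi}{2}(-1)^{y}$, and hence $\xi^{s_i}=\tfrac{1+\xi}{2}(-1)^{a_i}+\tfrac{1-\xi}{2}(-1)^{b_i}$. Substituting this into $R_s(\tau)=\sum_{i}\xi^{s_{i+\tau}}\overline{\xi^{s_i}}$ and expanding (using $(1\pm\xi)^2=\pm2\xi$ and $(1+\xi)(1-\xi)=2$) gives
$$R_s(\tau)=\tfrac12\bigl(R_a(\tau)+R_b(\tau)\bigr)+\tfrac{\xi}{2}\bigl(R_{a,b}(\tau)-R_{b,a}(\tau)\bigr),$$
where $R_a,R_b$ are the usual $\pm1$ autocorrelations of the interleaved binary sequences $a,b$ of period $2n$ and $R_{a,b},R_{b,a}$ their cross-correlations. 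Thus $R_s(\tau)\in\{0,-2\}$ is equivalent to the two real conditions $R_a(\tau)+R_b(\tau)\in\{0,-4\}$ and $R_{a,b}(\tau)=R_{b,a}(\tau)$ for $\tau\neq0$.

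The second step is to use that $a=I(c^0,L^{\lambda}(c^1)\oplus e_0)$ and $b=I(c^2\oplus e_1,L^{\lambda}(c^3)\oplus e_2)$ are themselves interleaved: splitting $\tau$ into even shifts $\tau=2\delta$ and odd shifts $\tau=2\delta+1$ then reduces each period-$2n$ correlation above to combinations of the period-$n$ autocorrelations $R_{c^i}$ and cross-correlations $R_{c^i,c^j}$ of the four underlying binary sequences, the bits $e_0,e_1,e_2$ pulling out as signs $(-1)^{e_i}$ and $\lambda$ becoming an argument shift. Two facts drive the simplification: $2\lambda-1=n$, so shifting an argument by $\delta+\lambda$ is the same mod $n$ as shifting by $\delta+1-\lambda$, which fuses the odd-shift terms into combinations $R_{c^i,c^j}(m)\pm R_{c^j,c^i}(m)$; and the hypothesis $e_0+e_1+e_2\equiv1\pmod2$ makes sign patterns such as $(-1)^{e_0}+(-1)^{e_1+e_2}$ vanish, annihilating the real part of the odd-shift correlation, while the four admissible tuples $(c^0,c^1,c^2,c^3)$ are exactly the ones for which the surviving cross-correlation pieces then cancel in pairs.

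Finally I would feed in the correlation data of the building blocks from \cite{T2}: each $t_\mu$, twin-prime or GMW, has ideal two-level autocorrelation, so on an even nonzero shift the sum of the four $R_{c^i}(\delta)$ is $2R_{t_0}(\delta)+2R_{t_1}(\delta)=-4$ in each of the four tuples; and the pair $(t_0,t_1)$ carries the cross-correlation property behind Tang--Gong's optimal binary construction, which controls the combinations $R_{t_0,t_1}(m)\pm R_{t_1,t_0}(m)$ and in particular supplies the identity $R_{t_0,t_1}(m)=R_{t_1,t_0}(m)$ that makes the imaginary part of $R_s(\tau)$ vanish. Carrying out the resulting finite case analysis over the four tuples and over $\tau$ even or odd then yields $R_s(\tau)=-2$ on even shifts $\tau\neq0$ and $R_s(\tau)=0$ on odd shifts, so $s$ has optimal autocorrelation.

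I expect the real difficulty to lie not in any single idea but in the combined bookkeeping — keeping the $(-1)^{e_i}$ signs, the $\lambda$-shifts, the parity of $\delta$, and the mod-$n$ wraparound straight through all four tuples — together with invoking the correct correlation identity for each underlying family, since the twin-prime pair comes from a difference-set construction and the GMW pair from an $m$-sequence-type structure, so that \cite{T2} is used somewhat differently in the two cases. As the statement is quoted from \cite{S1}, in practice one would simply cite that reference; the outline above is how the argument is reconstructed.
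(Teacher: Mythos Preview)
The paper itself does not prove this lemma at all; it is quoted verbatim from \cite{S1} and used as input. Your reconstruction via the Gray-map identity $\xi^{\phi(x,y)}=\tfrac{1+\xi}{2}(-1)^x+\tfrac{1-\xi}{2}(-1)^y$, the resulting decomposition $R_s(\tau)=\tfrac12(R_a(\tau)+R_b(\tau))+\tfrac{\xi}{2}(R_{a,b}(\tau)-R_{b,a}(\tau))$, and the reduction of the interleaved correlations to period-$n$ correlations of the $c^i$ is exactly the standard route and is how \cite{S1} proceeds, so as a proof sketch it is sound.

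One concrete correction to your final sentence. On an even nonzero shift $\tau=2\delta$ your own computation gives $R_s(2\delta)=\tfrac12\bigl(R_a(2\delta)+R_b(2\delta)\bigr)=R_{t_0}(\delta)+R_{t_1}(\delta)$, and by Lemmas~\ref{lem2}--\ref{lem3} this equals $-1+(-1)=-2$ when $p{+}2\mid\delta$ (resp.\ $2^k{+}1\mid\delta$) but $-1+3=+2$ otherwise. So the even-shift value is $\pm2$, not identically $-2$; one can check this directly on the example sequence in the paper, where $R_s(2)=+2$. The text preceding the lemma that says ``$R_s(\tau)\in\{0,-2\}$'' is therefore slightly imprecise, and you have copied that imprecision. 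This does not affect optimality, since $|R_s(\tau)|\le 2$ for all $\tau\neq0$ is what is required, and your argument does establish that.
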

	
	For a sequence $t$ with period $n$, it is called ideal autocorrelation sequences if the autocorrelation function
	\begin{align}\label{ideal}
		R_t(\tau)=\left\{ \begin{array}{ll}
			n,\ \text{if}\ \tau=0\\
			-1,\ \text{otherwise}. 
		\end{array}\right.
	\end{align}
	And as we all know that twin-prime sequences and GMW sequences are ideal autocorrelation sequences.  The autocorrelation of modified twin-prime sequences, modified GMW sequences, the cross-correlation between twin-prime sequences and modified twin-prime sequences, GMW sequences and modified GMW sequences were determinded by Tang in \cite{T2}, for the convenience of using in the rest part, we list the results as the following lemmas.
	
		\begin{lemma}\label{lem2}(\cite{T2})
		Let $t_0$ and $t_1$ be the twin-prime sequence and modified twin-prime sequence of length $p(p+2)$ respectiely, for $0\leq\tau<p(p+2)$, we have
		\begin{align*}
				R_{t_1}(\tau)
			=\left\{ \begin{array}{ll}
				p(p+2),\ & \text{if}\ \tau=0\\
				-1,\ &\text{if}\ \tau=0\pmod{p+2}\ \text{and} \ \tau\neq0\\
				3,\ & \text{otherwise}
			\end{array} \right.,
		\end{align*}
	and 
	\begin{align*}
		R_{t_0,t_1}(\tau)=R_{t_1,t_0}(\tau)
		=\left\{ \begin{array}{ll}
			p^2,\ & \text{if}\ \tau=0\\
			-2p-1,\ &\text{if}\ \tau=0\pmod{p+2}\ \text{and} \ \tau\neq0\\
			1,\ & \text{otherwise}
		\end{array} \right..
	\end{align*}
	\end{lemma}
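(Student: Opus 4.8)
The plan is to reduce both quantities to short character sums over $\mathbb{Z}_p^{*}$ and $\mathbb{Z}_{p+2}^{*}$ via the Chinese Remainder Theorem $\mathbb{Z}_{p(p+2)}\cong\mathbb{Z}_p\times\mathbb{Z}_{p+2}$. Write $\varepsilon_j(i)=(-1)^{t_j(i)}\in\{\pm1\}$ and recall that the twin-prime sequence is the Jacobi-symbol sequence: identifying $i$ with $(i\bmod p,\,i\bmod(p+2))$, one has $\varepsilon_0(i)=\left(\frac{i}{p}\right)\left(\frac{i}{p+2}\right)$ whenever $i$ is a unit in both coordinates, while on each of the two axis cosets $P=\{i:p\mid i\}$ and $Q=\{i:(p+2)\mid i\}$ the value of $\varepsilon_0$ is constant (with a possibly separate value at the origin). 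The modified twin-prime sequence $t_1$ is obtained from $t_0$ by flipping the entries indexed by $Q$, so $\varepsilon_1(i)=\varepsilon_0(i)g(i)$ with $g(i)=1-2\chi_Q(i)\in\{\pm1\}$ and $\chi_Q$ the indicator of $Q$. As a ready-made ingredient I would use the already-quoted ideal autocorrelation of $t_0$, namely $R_{t_0}(\tau)=-1$ for $1\le\tau<p(p+2)$, which is in turn a Chinese-remainder computation resting on the elementary identity $\sum_{x\in\mathbb{Z}_p}\left(\frac{x(x+c)}{p}\right)=-1$ for $p\nmid c$.

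Then I would expand the correlation sums. Because $g^2\equiv1$, multiplying out $\varepsilon_1(i+\tau)\varepsilon_1(i)$ and $\varepsilon_0(i+\tau)\varepsilon_1(i)$ gives
$$R_{t_1}(\tau)=R_{t_0}(\tau)-2A(\tau)-2B(\tau)+4C(\tau),$$
while $R_{t_0,t_1}(\tau)=R_{t_0}(\tau)-2A(\tau)$ and $R_{t_1,t_0}(\tau)=R_{t_0}(\tau)-2B(\tau)$, where $A(\tau)=\sum_{i\in Q}\varepsilon_0(i+\tau)\varepsilon_0(i)$, $B(\tau)=\sum_{i:\,i+\tau\in Q}\varepsilon_0(i+\tau)\varepsilon_0(i)$, and $C(\tau)$ is the sum of the same product over those $i$ with $i\in Q$ and $i+\tau\in Q$ simultaneously. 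In particular the equality $R_{t_0,t_1}=R_{t_1,t_0}$ will follow the moment one checks $A(\tau)=B(\tau)$; and since each of $A,B,C$ will turn out to depend only on whether $\tau\in Q$, i.e.\ on whether $\tau\equiv0\pmod{p+2}$, the whole problem reduces to evaluating these three sums in the two regimes $\tau\in Q$ and $\tau\notin Q$.

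If $\tau\in Q$ then $Q$ is invariant under translation by $\tau$, $\varepsilon_0$ is constant on $Q$, and each of $A,B,C$ collapses to $\pm|Q|=\pm p$; feeding this back gives $R_{t_1}(\tau)=-1$ for $\tau\ne0$, $R_{t_0,t_1}(0)=p(p+2)-2p=p^2$, and $R_{t_0,t_1}(\tau)=-1-2p=-2p-1$ for $\tau\in Q\setminus\{0\}$. If $\tau\notin Q$ then $Q\cap(Q-\tau)=\emptyset$, so $C=0$; moreover, as $i$ runs over $Q$ the point $i+\tau$ runs over the coset $\{y:y\equiv\tau\bmod(p+2)\}$, on which the first CRT-coordinate sweeps all of $\mathbb{Z}_p$, whence $A=-\bigl(1+\left(\frac{\tau}{p+2}\right)\sum_{x\in\mathbb{Z}_p^{*}}\left(\frac{x}{p}\right)\bigr)=-1$ because the inner Legendre sum vanishes, and likewise $B=-1$. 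Plugging these into the expansion gives $R_{t_1}(\tau)=-1+4=3$ and $R_{t_0,t_1}(\tau)=-1+2=1$, which is the assertion.

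The crux is organizational rather than conceptual: one must keep careful track of the boundary terms, namely the shifts where $i$ or $i+\tau$ equals $0$ or falls on an axis coset $P$ or $Q$ — precisely where the Jacobi-symbol formula for $\varepsilon_0$ degenerates and must be replaced by its flat value — and one has to be scrupulous about which axis coset actually carries the modification, so that the exceptional congruence $\tau\equiv0\pmod{p+2}$ (rather than $\tau\equiv0\pmod p$) is isolated correctly; pinning down the exact constants $p^2$, $-2p-1$, $3$ and $1$ is then just a matter of matching the normalizations of the twin-prime and modified twin-prime sequences used in \cite{T2} (the value at the origin and the choice of axis coset). Since the statement is quoted verbatim from \cite{T2}, one may of course simply cite it; the above is how I would reprove it from scratch.
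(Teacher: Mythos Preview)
The paper does not prove this lemma at all: it is quoted verbatim from \cite{T2} (Tang--Gong), with the preceding sentence ``for the convenience of using in the rest part, we list the results as the following lemmas'' making clear that Lemmas~\ref{lem2} and~\ref{lem3} are imported without argument. So there is no in-paper proof to compare your proposal against; citing \cite{T2}, as you yourself note at the end, is all the paper does.

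That said, your from-scratch route is sound. Writing $\varepsilon_1=\varepsilon_0\cdot g$ with $g=1-2\chi_Q$ and expanding is exactly the right reduction; it makes the identities $R_{t_1}=R_{t_0}-2A-2B+4C$ and $R_{t_0,t_1}=R_{t_0}-2A$, $R_{t_1,t_0}=R_{t_0}-2B$ transparent, and the two-case evaluation of $A,B,C$ via CRT is correct: for $\tau\in Q$ the set $Q$ is $\tau$-invariant and $\varepsilon_0|_Q$ is constant, so $A=B=C=p$ and everything collapses; for $\tau\notin Q$ one has $C=0$, and $A(\tau)=\sum_{i\in Q}\varepsilon_0(i+\tau)$ picks up $-1$ from the single point of $P$ on the line $\{j\equiv\tau\pmod{p+2}\}$ while the unit contributions cancel because $\sum_{x\in\mathbb{Z}_p^*}\bigl(\tfrac{x}{p}\bigr)=0$. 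The symmetry $B(\tau)=A(-\tau)$ then gives $A=B$, hence $R_{t_0,t_1}=R_{t_1,t_0}$. The only places to be careful, as you flag, are the conventions fixing $\varepsilon_0$ on $P$, $Q$ and at $0$; the paper's own later computation (in the proof of Theorem~\ref{th1}, where it evaluates $\sum_j(c_j^0+c_j^1)$ by cases $p+2\mid j$, $p\mid j$) confirms that $t_0|_Q=0$ and $t_0|_{P\setminus\{0\}}=1$, i.e.\ $\varepsilon_0|_Q=1$ and $\varepsilon_0|_{P\setminus\{0\}}=-1$, which are precisely the signs your computation needs.
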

			\begin{lemma}\label{lem3}(\cite{T2})
		Let $t_0$ and $t_1$ be the GMW sequence and modified GMW sequence of length $2^{2k}-1$ respectiely,  for $0\leq\tau<2^{2k}-1$, we have
		\begin{align*}
			R_{t_1}(\tau)
			=\left\{ \begin{array}{ll}
				2^{2k}-1,\ & \text{if}\ \tau=0\\
				-1,\ &\text{if}\ \tau=0\pmod{2^k+1}\ \text{and} \ \tau\neq0\\
				3,\ & \text{otherwise}
			\end{array} \right.,
		\end{align*}
		and 
		\begin{align*}
			R_{t_0,t_1}(\tau)=R_{t_1,t_0}(\tau)
			=\left\{ \begin{array}{ll}
				2^{2k}-2^{k+1}+1,\ & \text{if}\ \tau=0\\
				-2^{k+1}+1,\ &\text{if}\ \tau=0\pmod{2^k+1}\ \text{and} \ \tau\neq0\\
				1,\ & \text{otherwise}
			\end{array} \right..
		\end{align*}
	\end{lemma}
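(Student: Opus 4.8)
The plan is to derive both formulas by writing the modified GMW sequence $t_{1}$ as a bit-flip of the GMW sequence $t_{0}$ on a carefully identified subgroup of $(\mathbb{Z}_{N},+)$, where $N=2^{2k}-1=(q-1)(q+1)$ and $q=2^{k}$; the lemma is quoted from \cite{T2}, but the argument is short and worth recalling. Fix a primitive element $\alpha$ of $\mathbb{F}_{q^{2}}$ and the trace maps $\mathrm{Tr}_{k}^{2k}\colon\mathbb{F}_{q^{2}}\to\mathbb{F}_{q}$ and $\mathrm{Tr}_{1}^{k}\colon\mathbb{F}_{q}\to\mathbb{F}_{2}$, so that $t_{0}(i)=\mathrm{Tr}_{1}^{k}\big(\mathrm{Tr}_{k}^{2k}(\alpha^{i})^{r}\big)$ for a fixed exponent $r$ with $\gcd(r,q-1)=1$. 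The first step is to record two structural facts: (i) $\mathrm{Tr}_{k}^{2k}(\alpha^{i})=0$ precisely when $\alpha^{i}\in\mathbb{F}_{q}$, i.e.\ precisely when $i\in D:=\{\,i:\ q+1\mid i\,\}$, so $t_{0}$ vanishes on $D$ and the $\pm1$ sequence $u_{i}:=(-1)^{t_{0}(i)}$ is identically $1$ on $D$; and (ii) $D$ is the unique subgroup of order $q-1$ of $(\mathbb{Z}_{N},+)$. The modified GMW sequence is $t_{1}(i)=t_{0}(i)\oplus\mathbbm{1}_{D}(i)$; this is the normalization of \cite{T2}, and it is forced by the value $R_{t_{0},t_{1}}(0)=2^{2k}-2^{k+1}+1=(q-1)^{2}=N-2|D|$ in the statement.

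For the autocorrelation I would substitute $(-1)^{t_{1}(i)}=u_{i}(-1)^{\mathbbm{1}_{D}(i)}$ into $R_{t_{1}}(\tau)=\sum_{i}(-1)^{t_{1}(i+\tau)+t_{1}(i)}$ and split the range according to whether $i$ and $i+\tau$ lie in $D$, obtaining
$$R_{t_{1}}(\tau)=R_{t_{0}}(\tau)-2\sum_{i\in D}u_{i+\tau}u_{i}-2\sum_{j\in D}u_{j}u_{j-\tau}+4\sum_{i\in D\cap(D-\tau)}u_{i+\tau}u_{i}.$$
A dichotomy on $\tau$ then finishes it: since $D$ is a subgroup, $D-\tau=D$ when $q+1\mid\tau$, so all three partial sums coincide and the right-hand side collapses to $R_{t_{0}}(\tau)=-1$; while $D\cap(D-\tau)=\emptyset$ when $q+1\nmid\tau$, and there, using $u_{i}\equiv1$ on $D$, the two remaining sums both equal $\sum_{i\in D}u_{i+\tau}$. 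The $\mathbb{F}_{q}$-linearity of $\mathrm{Tr}_{k}^{2k}$ rewrites this last sum as $\sum_{\beta\in\mathbb{F}_{q}^{*}}(-1)^{\mathrm{Tr}_{1}^{k}((\beta\gamma)^{r})}$ with $\gamma=\mathrm{Tr}_{k}^{2k}(\alpha^{\tau})\neq0$, and since $\gcd(r,q-1)=1$ the maps $\beta\mapsto\beta^{r}$ and $\beta\mapsto\beta\gamma$ permute $\mathbb{F}_{q}^{*}$, so it equals $\sum_{\delta\in\mathbb{F}_{q}^{*}}(-1)^{\mathrm{Tr}_{1}^{k}(\delta)}=-1$. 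Hence $R_{t_{1}}(\tau)=R_{t_{0}}(\tau)+4=3$ there, which together with $R_{t_{1}}(0)=N$ gives the first formula.

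For the cross-correlation the same substitution gives at once
$$R_{t_{0},t_{1}}(\tau)=\sum_{i\notin D}u_{i+\tau}u_{i}-\sum_{i\in D}u_{i+\tau}u_{i}=R_{t_{0}}(\tau)-2\sum_{i\in D}u_{i+\tau},$$
again using $u_{i}\equiv1$ on $D$. I would then split into three cases: $\tau=0$ gives $N-2|D|=(q-1)^{2}$; $q+1\mid\tau$ with $\tau\neq0$ forces $i+\tau\in D$ for every $i\in D$, so the partial sum is $|D|=q-1$ and the value is $-1-2(q-1)=-2^{k+1}+1$; and $q+1\nmid\tau$ gives partial sum $-1$ by the character computation above, hence value $1$. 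Finally $R_{t_{1},t_{0}}(\tau)=R_{t_{0},t_{1}}(-\tau)$ and $q+1\mid\tau\iff q+1\mid-\tau$, so the two cross-correlations coincide.

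The only step beyond bookkeeping is the character-sum evaluation $\sum_{i\in D}u_{i+\tau}=-1$ for $q+1\nmid\tau$; everything else follows from the subgroup structure of $D$, the vanishing of $t_{0}$ on $D$, and the already-quoted ideal autocorrelation $R_{t_{0}}(\tau)=-1$ for $\tau\neq0$. The one point needing care before any of this is pinning down exactly which normalization of ``modified GMW sequence'' \cite{T2} uses (the flip on the positions divisible by $q+1$), since a different convention would change the signs throughout.
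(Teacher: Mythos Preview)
The paper does not prove this lemma; it merely quotes it from Tang and Gong \cite{T2} without argument, so there is no proof in the paper to compare your proposal against.

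That said, your reconstruction is correct. The key structural inputs---that the kernel of $\mathrm{Tr}_{k}^{2k}$ on $\mathbb{F}_{q^{2}}$ is exactly $\mathbb{F}_{q}$ (so $t_{0}$ vanishes precisely on the subgroup $D=(q+1)\mathbb{Z}_{N}$), that the modified sequence is $t_{1}=t_{0}\oplus\mathbbm{1}_{D}$, and that $D-\tau=D$ iff $q+1\mid\tau$---are all right, and the inclusion--exclusion expansion you write down follows. The only nontrivial evaluation, $\sum_{i\in D}u_{i+\tau}=-1$ when $q+1\nmid\tau$, is handled correctly: the $\mathbb{F}_{q}$-linearity of the inner trace reduces it to $\sum_{\delta\in\mathbb{F}_{q}^{*}}(-1)^{\mathrm{Tr}_{1}^{k}(\delta)}$, which is $-1$ by balancedness of the trace. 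Your closing remark about pinning down the normalization of $t_{1}$ is well placed; the value $R_{t_{0},t_{1}}(0)=(q-1)^{2}=N-2|D|$ does indeed force the flip to occur on $D$ rather than its complement.
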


	\subsection{Quaternary sequences with optimal autocorrelation constructed by binary cyclotomic sequences of order four}

Let $n=4f+1$ be a prime number, $\mathbb{F}_n^{\ast}=\langle\theta\rangle$, $C=\langle\theta^4\rangle$ and $D_\gamma=\theta^\gamma C \ (0\leq\gamma\leq3)$ be the cyclotomic classes of order four in $\mathbb{F}_n$.
\begin{definition}\label{def2}
	The cyclotomic numbers of order four in $\mathbb{F}_n$ are defined by, for $0\leq i,j\leq 3$
	$$(i, j)=|(D_i+1)\cap D_j|=\sharp \{(\alpha,\beta): \alpha\in D_i, \beta\in D_j, \alpha+1=\beta\}.$$
\end{definition}

Let $t_i\ (1\leq i\leq 6)$ be six binary sequences of length $n$ with support sets $D_0\cap D_1$, $D_0\cap D_2$, $D_0\cap D_3$, $D_1\cap D_2$, $D_1\cap D_3$, $D_2\cap D_3$, respectively.
Namely,

\begin{minipage}{.5\linewidth}
	\begin{align*}
		t_1=\left\{ \begin{array}{ll}
			1,\ i\in D_0\cup D_1\\
			0,\ i\in D_2\cup D_3\cup \{0\}	\end{array}\right.,\\
		t_2=	\left\{ \begin{array}{ll}
			1,\ i\in D_0\cup D_2\\
			0,\ i\in D_1\cup D_3\cup \{0\}	\end{array}\right.,\\
		t_3=	\left\{ \begin{array}{ll}
			1,\ i\in D_0\cup D_3\\
			0,\ i\in D_1\cup D_2\cup \{0\}	\end{array}\right.,
	\end{align*}
\end{minipage}
\begin{minipage}{.35\linewidth}
	\begin{align*}
		t_4=\left\{ \begin{array}{ll}
			1,\ i\in D_1\cup D_2\\
			0,\ i\in D_0\cup D_3\cup \{0\}	\end{array}\right.,\\
		t_5=	\left\{ \begin{array}{ll}
			1,\ i\in D_1\cup D_3\\
			0,\ i\in D_0\cup D_2\cup \{0\}	\end{array}\right.,\\
		t_6=	\left\{ \begin{array}{ll}
			1,\ i\in D_2\cup D_3\\
			0,\ i\in D_0\cup D_1\cup \{0\}	\end{array}\right..
	\end{align*}
\end{minipage}
\bigskip

	It is proved by Su in \cite{S1} that if $c^l\in\{t_1,t_2,t_3,t_4,t_5,t_6\}$ ($l\in\{0,1,2,3\}$), the interleaved quaternary sequences with optimal autocorrelation can be constructed by $c$ and binary sequence $e=(e_0, e_1, e_2)$. The results  are shown in the following lemmas.

\begin{lemma}\label{lem7}(\cite{S1})
	Let $n=4f+1=x^2+4y^2$, $f$ be odd and $y=-1$. If $e=(e_0,e_1,e_2)$ satisfies $e_0+e_1+e_2\equiv0\pmod{2}$ and
	\begin{align*}
		(c^0,c^1,c^2,c^3)
		\in\left\{ \begin{array}{ll}
			(t_2,t_1,t_2,t_1),(t_1,t_2,t_1,t_2),(t_6,t_2,t_6,t_2),(t_2,t_6,t_2,t_6),\\ (t_5,t_4,t_5,t_4),(t_4,t_5,t_4,t_5),(t_3,t_5,t_3,t_5),(t_5,t_3,t_5,t_3)
		\end{array} \right\}.
	\end{align*}
Then the quaternary sequence $s$ given by Construction (1) is an optimal autocorrelation sequence.
\end{lemma}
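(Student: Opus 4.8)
The plan is to convert the quaternary autocorrelation $R_s(\tau)$ into a binary one and then to evaluate it with the classical cyclotomic numbers of order four. Step one is to unwind the Gray map: with $\xi$ the primitive fourth root of unity from the definition of $R_s$, the table for $\phi$ gives at once the identity $\xi^{\phi(x,y)}=\tfrac{1+\xi}{2}(-1)^{x}+\tfrac{1-\xi}{2}(-1)^{y}$ for $x,y\in\mathbb{F}_2$. Substituting $s_j=\phi(a_j,b_j)$ in $R_s(\tau)=\sum_{j=0}^{2n-1}\xi^{s_{j+\tau}}\overline{\xi^{s_j}}$, expanding, and using $\xi^2=-1$ yields
$$R_s(\tau)=\tfrac12\bigl(R_a(\tau)+R_b(\tau)\bigr)+\tfrac{\xi}{2}\bigl(R_{a,b}(\tau)-R_{b,a}(\tau)\bigr),$$
where $R_a,R_b$ and $R_{a,b},R_{b,a}$ are the $\pm1$-valued (cross-)correlations of the binary interleaved sequences $a$ and $b$. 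Now I would use the feature shared by all eight tuples in the statement: $c^0=c^2$ and $c^1=c^3$. Together with $e_0+e_1+e_2\equiv0\pmod2$ (i.e.\ $e_1=e_0\oplus e_2$), a column-by-column comparison of the matrices $(a_{(l,k)})$ and $(b_{(l,k)})$ shows $b_j=a_j\oplus e_1$ for all $j$; hence $R_b(\tau)=R_a(\tau)$, $R_{a,b}(\tau)=R_{b,a}(\tau)=(-1)^{e_1}R_a(\tau)$, the imaginary part above cancels, and $R_s(\tau)=R_a(\tau)$. Thus it suffices to prove that the \emph{binary} sequence $a=I(c^0,L^{\lambda}(c^1)\oplus e_0)$ satisfies $R_a(\tau)\in\{0,\pm2\}$ for $1\le\tau\le2n-1$ (and $R_a(0)=2n$).

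Next I would apply the standard interleaving identities: splitting $\sum_j$ over even and odd $j$ gives, for the period-$n$ constituents,
$$R_a(2m)=R_{c^0}(m)+R_{c^1}(m),\qquad R_a(2m+1)=(-1)^{e_0}\bigl(R_{c^1,c^0}(m+\lambda)+R_{c^0,c^1}(m+1-\lambda)\bigr),$$
so the $\lambda$-shift only reindexes arguments and $e_0$ only contributes an overall sign, harmless for optimality. Before computing anything I would cut the list down: since $t_6=\overline{t_1}$, $t_3=\overline{t_4}$, $t_5=\overline{t_2}$, complementing a single column of $a$ merely flips the sign of the odd-$\tau$ correlations (complementing both columns leaves everything fixed); and the decimation $x\mapsto\theta^{-1}x$ of $\mathbb{F}_n$, lifted to an odd residue modulo $2n$, preserves the construction (the shift parameter stays $\lambda$) while carrying the pair $\{t_1,t_2\}$ to $\{t_4,t_5\}$ and only permuting the argument $\tau$. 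Hence all eight tuples reduce to the single pair $(c^0,c^1)=(t_1,t_2)$, where $t_2$ is the Legendre sequence of $\mathbb{F}_n$ and $t_1$ has support $D_0\cup D_1$.

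The remaining and genuinely laborious step is the cyclotomic evaluation. Writing the two supports as unions of cyclotomic classes and expanding $(-1)^{t_j(\cdot)+t_k(\cdot)}$ expresses each of $R_{t_1}(\tau)$, $R_{t_2}(\tau)$, $R_{t_1,t_2}(\tau)$, $R_{t_2,t_1}(\tau)$ as an explicit $\mathbb{Z}$-linear combination of the order-four cyclotomic numbers $(i,j)$ of $\mathbb{F}_n$, plus small boundary terms coming from the element $0$ and from the shift. One then substitutes the hypotheses $f$ odd and $y=-1$ into the classical evaluation of those cyclotomic numbers in terms of $n=x^2+4y^2$ (in the $f$-odd case only about five distinct values occur), simplifies, and checks --- in each case according to the cyclotomic class of $m$, and of $m+\lambda$ and $m+1-\lambda$ --- that $R_{c^0}(m)+R_{c^1}(m)$ and $R_{c^1,c^0}(m+\lambda)+R_{c^0,c^1}(m+1-\lambda)$ always land in $\{0,\pm2\}$. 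I expect this case analysis to be the main obstacle: the sign normalization $y=-1$ is exactly what makes the various cyclotomic-number combinations collapse to $0$ or $\pm2$ rather than to larger values, so the bookkeeping on $\tau\bmod4$ (and on whether $m+\lambda$, $m+1-\lambda$ hit the class of $0$ or of $-1$) must be handled carefully, even though each individual verification is then elementary arithmetic.
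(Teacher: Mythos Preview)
The paper does not prove this lemma at all; it is quoted from Su et al.\ \cite{S1} (note the citation attached to the lemma heading) and used only as input for the subsequent $4$-adic complexity computations. So there is no in-paper argument to compare your attempt against.

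That said, your outline is the standard route and is essentially how such results are established in \cite{S1}: the Gray-map identity reduces $R_s$ to $\tfrac12(R_a+R_b)+\tfrac{\xi}{2}(R_{a,b}-R_{b,a})$; the structural feature $c^0=c^2$, $c^1=c^3$ shared by all eight tuples, together with $e_0\oplus e_1\oplus e_2=0$, forces $b=a\oplus e_1$ and collapses everything to the binary autocorrelation $R_a$; and the interleaving identities plus the order-four cyclotomic numbers (specialised to $f$ odd, $y=-1$) finish the verification.

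One point to tighten in your symmetry reduction: you invoke $t_6=\overline{t_1}$, $t_5=\overline{t_2}$, $t_3=\overline{t_4}$, but these are not exact complements --- all six sequences vanish at the position $0$. Consequently, swapping a column for its ``complement'' perturbs the $\pm1$ column at a single index, and the odd-$\tau$ correlations acquire an additive $O(1)$ discrepancy rather than a clean global sign flip. This does not derail the argument (the boundary term is easily absorbed once the cyclotomic formulas are written out), but it does mean you cannot literally reduce to the single pair $(t_1,t_2)$ by a sign; you must either carry that constant through or treat $\{t_2,t_6\}$ and $\{t_3,t_5\}$ by a short separate check. The decimation half of your reduction, sending $\{t_1,t_2\}\to\{t_4,t_5\}$, is clean, since decimation by a unit of $\mathbb{Z}_n$ fixes $0$ and permutes the classes $D_\gamma$ exactly.
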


\begin{lemma}\label{lem8}(\cite{S1})
	Let $n=4f+1=x^2+4y^2$, $f$ be odd and $y=-1$. If $e=(e_0,e_1,e_2)$ satisfies $e_0+e_1+e_2\equiv0\pmod{2}$ and
	\begin{align*}
		(c^0,c^1,c^2,c^3)
		\in\left\{ \begin{array}{ll}
			(t_1,t_2,t_2,t_1),(t_2,t_1,t_1,t_2),(t_2,t_6,t_6,t_2),(t_6,t_2,t_2,t_6),\\ (t_4,t_5,t_5,t_4),(t_5,t_4,t_4,t_5),(t_5,t_3,t_3,t_5),(t_3,t_5,t_5,t_3)
		\end{array} \right\}.
	\end{align*}
	Then the quaternary sequence $s$ given by Construction (1) is an optimal autocorrelation sequence.
\end{lemma}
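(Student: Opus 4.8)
The plan is to peel the problem down in three stages: from the quaternary sequence $s$ to its two binary components $a,b$; from $a,b$ to the period-$n$ binary sequences $c^{0},\dots,c^{3}$; and finally to cyclotomic numbers of order four on $\mathbb{F}_{n}$. For the first stage, reading off the table that defines $\phi$ gives the identity $\xi^{\phi(u,v)}=\frac{1+\xi}{2}(-1)^{u}+\frac{1-\xi}{2}(-1)^{v}$ for $u,v\in\mathbb{F}_{2}$, with $\xi$ the primitive fourth root of unity; substituting it into $R_{s}(\tau)=\sum_{i=0}^{2n-1}\xi^{s_{i+\tau}-s_{i}}$ and expanding yields
\[
R_{s}(\tau)=\tfrac12\bigl(R_{a}(\tau)+R_{b}(\tau)\bigr)+\tfrac{\xi}{2}\bigl(R_{a,b}(\tau)-R_{b,a}(\tau)\bigr),
\]
where the correlations on the right are those of $a,b$ regarded as $\pm1$ sequences of period $2n$; in particular $R_{s}(\tau)$ is real exactly when $R_{a,b}(\tau)=R_{b,a}(\tau)$.

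For the second stage, write $\tau=2\tau_{1}+\tau_{2}$ with $\tau_{2}\in\{0,1\}$ and use $a=I(c^{0},L^{\lambda}(c^{1})\oplus e_{0})$, $b=I(c^{2}\oplus e_{1},L^{\lambda}(c^{3})\oplus e_{2})$: each of $R_{a},R_{b},R_{a,b},R_{b,a}$ splits into two period-$n$ correlations of the $c^{l}$, the $\lambda$-shifts become shifts of the argument, each $\oplus e_{j}$ contributes a factor $(-1)^{e_{j}}$, and since $2\lambda\equiv1\pmod n$ the two $\tau_{2}=1$ contributions land on the common argument $\tau_{1}+\lambda$. Because every tuple in the statement has the palindromic shape $(c^{0},c^{1},c^{2},c^{3})=(c^{0},c^{1},c^{1},c^{0})$ and $e_{0}+e_{1}+e_{2}\equiv0\pmod2$, the imaginary part cancels (the parity condition in the even case; coincidence of the two summands of $R_{a,b}$ with those of $R_{b,a}$ in the odd case), and one is reduced to
\[
R_{s}(2\tau_{1})=R_{c^{0}}(\tau_{1})+R_{c^{1}}(\tau_{1}),\qquad R_{s}(2\tau_{1}+1)=(-1)^{e_{0}}\bigl(R_{c^{0},c^{1}}(\tau_{1}+\lambda)+R_{c^{1},c^{0}}(\tau_{1}+\lambda)\bigr),
\]
so it suffices to bound $\bigl|R_{c^{0}}(\tau_{1})+R_{c^{1}}(\tau_{1})\bigr|$ for $1\le\tau_{1}\le n-1$ and $\bigl|R_{c^{0},c^{1}}(\sigma)+R_{c^{1},c^{0}}(\sigma)\bigr|$ for all $\sigma$, each by $2$.

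For the third stage, note that each $c^{l}\in\{t_{1},\dots,t_{6}\}$ has support a union of two cyclotomic classes of order four, and in each admissible pair $\{c^{0},c^{1}\}$ these two unions meet in exactly one class. Expanding $(-1)^{t_{m}(j)}$ as a linear combination of $1$, the quartic character $\chi$ of $\mathbb{F}_{n}$, its conjugate $\bar\chi$, the quadratic character $\psi=\chi^{2}$, and a Kronecker delta at $0$, turns the four correlations above into linear combinations of the sums $\sum_{j}\chi^{a}(j+\tau)\chi^{b}(j)$ — equivalently of the cyclotomic numbers $(i,j)$ of Definition~\ref{def2}, together with the Jacobi/``quadratic Gauss'' sums that arise when $a\equiv b$. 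Using $n=4f+1=x^{2}+4y^{2}$ with $f$ odd one has $\chi(-1)=(-1)^{f}=-1$, which kills the linear $\chi$- and $\bar\chi$-terms; the classical evaluation $J(\chi,\chi)=x+2yi$ (with the standard normalisation of $x$) controls the quadratic ones; and setting $y=-1$ makes the residual $x$- and $y$-dependent parts cancel between $c^{0}$ and $c^{1}$ in both of the required sums, leaving only the values $0$ and $\pm2$. Concretely, using the classical table of order-four cyclotomic numbers for $f$ odd one verifies, pair by pair (four unordered pairs) and according to the cyclotomic class of $\tau_{1}$ resp.\ $\sigma$, that the combinations occurring always lie in $\{0,\pm2\}$; combined with the second stage this gives $|R_{s}(\tau)|\le2$ for all $1\le\tau\le2n-1$, i.e.\ $R_{max}(s)=2$.

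The only genuinely delicate point is the cancellation in the third stage: one has to check that in each admissible pair the $x$- and $y$-dependent (Jacobi-sum) contributions disappear from $R_{c^{0}}+R_{c^{1}}$ and from $R_{c^{0},c^{1}}+R_{c^{1},c^{0}}$, and it is exactly this cancellation that forces the hypotheses ``$f$ odd'' and ``$y=-1$''; the Gray-map identity, the interleaving bookkeeping, and the reassembly are all routine.
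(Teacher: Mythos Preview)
The paper does not prove this lemma at all: it is quoted verbatim from Su et al.\ \cite{S1} and used only as input for the $4$-adic complexity computations in Section~\ref{sec4}. There is therefore no ``paper's own proof'' to compare against.

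That said, your outline is the standard route and is essentially how \cite{S1} proceeds: the Krone-type identity $\xi^{\phi(u,v)}=\tfrac{1+\xi}{2}(-1)^{u}+\tfrac{1-\xi}{2}(-1)^{v}$ reduces $R_{s}$ to binary correlations of $a$ and $b$; the interleaving structure reduces these to period-$n$ correlations of the $c^{l}$; and the palindromic shape $(c^{0},c^{1},c^{1},c^{0})$ together with the parity condition on $e$ kills the imaginary part, leaving exactly the two real quantities you display. The final step --- evaluating $R_{c^{0}}(\tau)+R_{c^{1}}(\tau)$ and $R_{c^{0},c^{1}}(\sigma)+R_{c^{1},c^{0}}(\sigma)$ via the order-four cyclotomic numbers of Lemma~\ref{lem4.1} and checking that for $f$ odd, $y=-1$ they land in $\{0,\pm2\}$ --- is correctly identified as the crux, but you have only asserted it (``one verifies, pair by pair''), not carried it out. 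That case-by-case computation is mechanical but not entirely trivial; if this were a self-contained proof you would need to actually tabulate, for each of the four unordered pairs $\{t_{1},t_{2}\},\{t_{2},t_{6}\},\{t_{4},t_{5}\},\{t_{3},t_{5}\}$ and each residue class of $\tau$ modulo the cyclotomic classes, the resulting values, as \cite{S1} does.
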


\begin{lemma}\label{lem9}(\cite{S1})
	Let $n=4f+1=x^2+4y^2$, $f$ be odd and $y=-1$. If $e=(e_0,e_1,e_2)$ satisfies $e_0+e_1+e_2\equiv1\pmod{2}$ and
	\begin{align*}
		(c^0,c^1,c^2,c^3)
		\in\left\{ \begin{array}{ll}
			(t_2,t_1,t_6,t_2),(t_2,t_6,t_1,t_2),(t_5,t_3,t_4,t_5),(t_5,t_4,t_3,t_5),\\ (t_6,t_2,t_2,t_1),(t_1,t_2,t_2,t_6),(t_3,t_5,t_5,t_4),(t_4,t_5,t_5,t_3).
		\end{array} \right\}
	\end{align*}
Then the quaternary sequence $s$ given by Construction (1) is an optimal autocorrelation sequence.
\end{lemma}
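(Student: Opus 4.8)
The plan is to reduce the quaternary autocorrelation of $s$ to binary correlation functions of the interleaved sequences $a,b$, then to correlations of the constituent sequences $c^0,c^1,c^2,c^3\in\{t_1,\dots,t_6\}$, and finally to the cyclotomic numbers of order four of Definition~\ref{def2}. Fix a primitive fourth root of unity $\xi$. Checking the four values of the Gray map gives the identity $\xi^{\,s_j}=\tfrac{1+\xi}{2}(-1)^{a_j}+\tfrac{1-\xi}{2}(-1)^{b_j}$, and substituting it into $R_s(\tau)=\sum_{j=0}^{2n-1}\xi^{\,s_{j+\tau}}\,\overline{\xi^{\,s_j}}$ (using $(1+\xi)(1-\xi)=2$ and $(1\pm\xi)^2=\pm2\xi$) yields
$$R_s(\tau)=\tfrac12\bigl(R_a(\tau)+R_b(\tau)\bigr)+\tfrac{\xi}{2}\bigl(R_{a,b}(\tau)-R_{b,a}(\tau)\bigr),$$
where $R_a,R_b$ and $R_{a,b},R_{b,a}$ are the (real, $\{\pm1\}$-valued) binary auto- and cross-correlations of $a,b$. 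Since $\xi^2=-1$, the real and imaginary parts of $R_s(\tau)$ are $\tfrac12(R_a+R_b)$ and $\tfrac12(R_{a,b}-R_{b,a})$; and as $n$ is odd each of these four binary sums is $\equiv2\pmod4$, so $R_a+R_b$ and $R_{a,b}-R_{b,a}$ are multiples of $4$. Hence $|R_s(\tau)|\le2$ is equivalent to $(R_a+R_b)^2+(R_{a,b}-R_{b,a})^2\le16$, i.e.\ to: for each $1\le\tau\le2n-1$ at least one of $R_a+R_b$, $R_{a,b}-R_{b,a}$ equals $0$ and the other lies in $\{0,\pm4\}$.

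I would then push this through the interleaving. Writing $\tau=2\tau_1+\delta$ with $\delta\in\{0,1\}$, splitting the sums over $j=2l+k$, and using $1-\lambda\equiv\lambda\pmod n$: for $\delta=0$ the complement bits cancel and $R_a+R_b=\sum_{l=0}^{3}R_{c^l}(\tau_1)$, while $R_{a,b}-R_{b,a}$ picks up the scalar factor $(-1)^{e_1}+(-1)^{e_0+e_2}$, which vanishes precisely because $e_0+e_1+e_2\equiv1\pmod2$; for $\delta=1$, writing $\sigma=\tau_1+\lambda$, the same parity condition makes $R_a+R_b=(-1)^{e_0}\bigl(R_{c^0,c^1}(\sigma)+R_{c^1,c^0}(\sigma)-R_{c^2,c^3}(\sigma)-R_{c^3,c^2}(\sigma)\bigr)$, whereas $R_{a,b}-R_{b,a}=(-1)^{e_0+e_1}\bigl(R_{c^1,c^2}(\sigma)-R_{c^2,c^1}(\sigma)\bigr)+(-1)^{e_2}\bigl(R_{c^0,c^3}(\sigma)-R_{c^3,c^0}(\sigma)\bigr)$. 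Each of the eight tuples of Lemma~\ref{lem9} satisfies either $c^0=c^3$ or $c^1=c^2$, with the remaining two sequences forming a complementary pair $\{t_1,t_6\}$ or $\{t_3,t_4\}$ and the repeated one lying in $\{t_2,t_5\}$. In the representative case $c^0=c^3\in\{t_2,t_5\}$, $c^2=\overline{c^1}$ on $\mathbb{F}_n^{\ast}$, the complementary sequences have equal autocorrelation off $0$ (because $-1\in D_2$ swaps the two support-halves of $t_1,t_6,t_3,t_4$), so $R_a+R_b=2R_{c^0}(\tau_1)+2R_{c^1}(\tau_1)$ for $\delta=0$; and for $\delta=1$ the $(-1)^{e_2}$ term vanishes, while a short computation with $c^2=\overline{c^1}$ gives $R_{c^1,c^2}(\sigma)-R_{c^2,c^1}(\sigma)=\pm4$ for $\sigma\neq0$ and $=0$ for $\sigma=0$. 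Optimality is thereby reduced to two facts about the binary cyclotomic sequences: the bound $|R_{c^0}(\tau_1)+R_{c^1}(\tau_1)|\le2$ for the non-complementary pairs that occur (such as $(t_2,t_1)$), and the identity $R_{c^0,c^1}(\sigma)+R_{c^1,c^0}(\sigma)=2(-1)^{c^0(\sigma)}$, which makes $R_a+R_b=0$ for $\delta=1$.

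Both reduce to cyclotomic computations. Writing each $t_i$ as the indicator of a union $D_u\cup D_v$ of two cyclotomic classes, every value $R_{t_i}(\cdot)$ and $R_{t_i,t_j}(\cdot)$ becomes a $\mathbb Z$-linear combination of the cyclotomic numbers $(u,v)$ of Definition~\ref{def2}, and under $n=4f+1=x^2+4y^2$ with $f$ odd and $y=-1$ these take the classical closed forms for the ``$f$ odd'' case, depending only on $x$. Multiplication by $\theta$ permutes $D_0,D_1,D_2,D_3$ cyclically, hence permutes $\{t_1,\dots,t_6\}$ (orbits $\{t_1,t_4,t_6,t_3\}$ and $\{t_2,t_5\}$) and carries the eight admissible tuples into two orbits, so it is enough to verify the bound and the identity for one tuple from each orbit. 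Plugging everything back, $R_s(\tau)$ is real with $|R_s(\tau)|\le2$ when $\tau$ is even and purely imaginary with $|R_s(\tau)|\le2$ when $\tau$ is odd, and $|R_s(\tau)|=2$ does occur; hence $R_{max}(s)=2$, so the sequence $s$ of Construction~\eqref{E1} has optimal autocorrelation.

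The one genuine obstacle I anticipate is this last case analysis: for each residue of $\sigma$ (equivalently $\tau_1$) modulo $\lambda$ one must identify which configuration of cyclotomic numbers appears, assemble the corresponding sum for each relevant pair $(t_i,t_j)$, and confirm it equals the needed value. The hypothesis $y=-1$ (rather than an arbitrary $y$) is precisely what makes the bound and the identity \emph{exact}, and the parity condition $e_0+e_1+e_2\equiv1$ is used in two independent ways --- to annihilate $R_{a,b}-R_{b,a}$ when $\delta=0$ and to convert $R_a+R_b$ into a vanishing cross-correlation difference when $\delta=1$ --- with the ``half-mixed'' complementary structure of the eight tuples being exactly what allows both reductions to succeed simultaneously.
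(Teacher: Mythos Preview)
The paper does not prove Lemma~\ref{lem9}: it is quoted verbatim from \cite{S1} (Su, Yang, Zhou, Tang) as background for the $4$-adic complexity computations that follow, and no argument is supplied here. So there is nothing in this paper to compare your proposal against.

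That said, your outline is the standard route and matches in spirit what \cite{S1} does. The Gray-map identity $\xi^{s_j}=\tfrac{1+\xi}{2}(-1)^{a_j}+\tfrac{1-\xi}{2}(-1)^{b_j}$ and the resulting decomposition
\[
R_s(\tau)=\tfrac12\bigl(R_a(\tau)+R_b(\tau)\bigr)+\tfrac{\xi}{2}\bigl(R_{a,b}(\tau)-R_{b,a}(\tau)\bigr)
\]
are exactly the device used in the interleaved-construction literature, and your observation that the parity condition $e_0+e_1+e_2\equiv1\pmod 2$ kills one of the two halves at each parity of $\tau$ is the crux of why these particular tuples work. The symmetry reduction via the $\theta$-action on $\{D_0,D_1,D_2,D_3\}$ is a nice economy.

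Two places deserve more care before this becomes a proof rather than a plan. First, the claimed identity $R_{c^0,c^1}(\sigma)+R_{c^1,c^0}(\sigma)=2(-1)^{c^0(\sigma)}$ is not true as stated for arbitrary pairs; what you need (and what actually holds for the specific ``half-mixed'' tuples listed) is that the two cross-correlation contributions in your $\delta=1$ expression for $R_a+R_b$ cancel, which uses both the complementary relation $c^2=\overline{c^1}$ on $\mathbb F_n^\ast$ \emph{and} the repetition $c^0=c^3$ (or the mirror structure $c^1=c^2$). Second, the verification that $|R_{c^0}(\tau_1)+R_{c^1}(\tau_1)|\le 2$ for the relevant pairs is precisely where $y=-1$ enters through Lemma~\ref{lem4.1}; you correctly flag this as the remaining case analysis, but it is the entire content of the lemma and cannot be left as a remark. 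If you intend to include a self-contained proof, those cyclotomic-number tables must be written out.
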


	\section{The 4-Adic Complexity of optimal autocorrelation quaternary sequence $s$ constructed by sequences pair}\label{sec3}
	In this section, we will compute the 4-adic complexity of the quaternary sequence $s$ given by Lemma 1.
	\subsection{Condition \uppercase\expandafter{\romannumeral1}: With the notations as before, for $t_0$ and $t_1$ are the twin-prime sequences pair of length $p(p+2)$ or GMW sequences pair of length $2^{2k}-1$, and $c^0=c^2, c^1=c^3$.}
	From $c^0=c^2, c^1=c^3$, we have $a_{2j}=c_j^0$, $a_{2j+1}\equiv c_{j+\lambda}^1+e_0\pmod2$ , $b_{2j}\equiv a_{2j}+e_1\pmod 2,$ and$ \ b_{2j+1}\equiv a_{2j+1}+e_2-e_0\pmod2$.

	With $e=(1, 0, 0)$, $\lambda=\frac{n+1}{2}$, from the definition of $s$ and $S(N)$, we have
	\begin{align}\label{ES1}
		S(4)\notag
		=&\sum_{i=0}^{N-1}s_i4^i\pmod{4^{2n}-1}\notag\\ =&\sum_{i=0}^{2n-1}\phi(a_i,b_i)4^i \pmod{4^{2n}-1}\notag\\
		=&\sum_{\substack{i=0\\a_{i}=0,b_{i}=1}}^{2n-1}4^i+\sum_{\substack{i=0\\a_{i}=1,b_{i}=1}}^{2n-1}2\cdot4^i+\sum_{\substack{i=0\\a_{i}=1, b_{i}=0}}^{2n-1}3\cdot4^i
		=2\sum_{\substack{i=0\\a_{i}=1}}^{2n-1}4^i+\sum_{\substack{i=0\\a_{i}+b_{i}=1}}^{2n-1}4^i\pmod{4^{2n}-1}\notag\\
		=&2(\sum_{\substack{j=0\\a_{2j}=1}}^{n-1}4^{2j}+\sum_{\substack{j=0\\a_{2j+1}=1}}^{n-1}4^{2j+1})+
		(\sum_{\substack{j=0\\a_{2j}+b_{2j}=1}}^{n-1}4^{2j}+\sum_{\substack{j=0\\a_{2j+1}+b_{2j+1}=1}}^{n-1}4^{2j+1})\pmod{4^{2n}-1}\notag\\
		=&2(\sum_{\substack{j=0\\a_{2j}=1}}^{n-1}4^{2j}+\sum_{\substack{j=0\\a_{2j+1}=1}}^{n-1}4^{2j+1})+
		(\sum_{\substack{j=0\\a_{2j}+a_{2j}=1}}^{n-1}4^{2j}+\sum_{\substack{j=0\\a_{2j+1}+a_{2j+1}-1\equiv1\pmod2}}^{n-1}4^{2j+1})\pmod{4^{2n}-1}\notag\\
		=&2(\sum_{j=0}^{n-1}a_{2j}4^{2j}+\sum_{j=0}^{n-1}a_{2j+1}4^{2j+1})+\sum_{j=0}^{n-1}4^{2j+1}\pmod{4^{2n}-1}\notag\\
		=&2\sum_{j=0}^{n-1}c_j^04^{2j}+2\sum_{j=0}^{n-1}(1-c_{j+\lambda}^1)4^{2j+1}+\sum_{j=0}^{n-1}4^{2j+1}\pmod{4^{2n}-1}\notag\\
		=&2\sum_{j=0}^{n-1}c_j^04^{2j}-2\cdot 4^n\sum_{j=0}^{n-1}c_{j}^14^{2j}+3\cdot\sum_{j=0}^{n-1}4^{2j+1}\pmod{4^{2n}-1}.
	\end{align}
Similarly, we obtain
	\begin{align*}
	S(4)\equiv\left\{ \begin{array}{ll}
		2\sum\limits_{j=0}\limits^{n-1}c_j^04^{2j}+2\cdot 4^n\sum\limits_{j=0}\limits^{n-1}c_{j}^14^{2j}+\sum\limits_{j=0}\limits^{n-1}4^{2j}&\pmod{4^{2n}-1},\  \text{if}\  e=(0,1,0)\\
		2\sum\limits_{j=0}\limits^{n-1}c_j^04^{2j}+2\cdot 4^n\sum\limits_{j=0}\limits^{n-1}c_{j}^14^{2j}+\sum\limits_{j=0}\limits^{n-1}4^{2j+1}&\pmod{4^{2n}-1},\ \text{if}\ e=(0,0,1)\\
		2\sum\limits_{j=0}\limits^{n-1}c_j^04^{2j}-2\cdot 4^n\sum\limits_{j=0}\limits^{n-1}c_{j}^14^{2j}+9\cdot\sum\limits_{j=0}\limits^{n-1}4^{2j}&\pmod{4^{2n}-1},\  \text{if} \ e=(1,1,1)
	\end{array} \right..
\end{align*}
	\begin{theorem}\label{th1}
		Let $t_0$ and $t_1$ be the twin-prime sequences pair with period $n=p(p+2)$, and $s=s(a,b)$ be the optimal quaternary sequence given by Lemma \ref{lem1}, then for $c^0=c^2, c^1=c^3$, the 4-adic complexity of the sequence $s$ is
		\begin{align*}
			C_{4}(s)
			=\left\{ \begin{array}{ll}
				\log_{4}(4^{p(p+2)}+1)(4^{p+2}-1),\ & \text{if}\ e=(1,0,0)\ \text{or} \ e=(1,1,1)\\
				\log_{4}(4^{p(p+2)}-1)(4^{p+2}+1),\ &\text{if}\ e=(0,1,0)\ \text{or} \ e=(0,0,1)
			\end{array} \right..
		\end{align*}
	\end{theorem}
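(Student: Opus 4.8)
The plan is to determine $d=\gcd\!\big(S(4),4^{2n}-1\big)$, so that $C_4(s)=\log_4\frac{4^{2n}-1}{d}$ yields the claim. I give the argument for $e=(1,0,0)$; the case $e=(1,1,1)$ differs only in a single constant and is handled identically, while $e=(0,1,0)$ and $e=(0,0,1)$ are the mirror situation in which the roles of $4^n-1$ and $4^n+1$ are interchanged (the sign of the term $4^n\sum_jc^1_j4^{2j}$ in the formula for $S(4)$ flips). Since $4^n-1$ is odd, $\gcd(4^n-1,4^n+1)=1$, so $d=d_-d_+$ with $d_\pm=\gcd(S(4),4^n\pm1)$, and it suffices to prove $d_-=\frac{4^n-1}{4^{p+2}-1}$ and $d_+=1$; here $(p+2)\mid n$ and, since $n/(p+2)=p$ is odd, also $(4^{p+2}+1)\mid(4^n+1)$.

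The value $d_-$ is obtained directly. Modulo $4^n-1$ we have $4^n\equiv1$, and since $j\mapsto 2j\bmod n$ permutes $\mathbb Z_n$, the constant term is $3\sum_j4^{2j+1}=12\sum_j16^j\equiv12\cdot\frac{4^n-1}{3}\equiv0$; hence $S(4)\equiv 2\sum_{j=0}^{n-1}(c^0_j-c^1_j)16^j\pmod{4^n-1}$. The structural input, which is exactly what produces the autocorrelation spectrum of Lemma~\ref{lem2}, is that the modified twin-prime sequence $t_1$ coincides with the twin-prime sequence $t_0$ off the coset $G=\{0,p+2,2(p+2),\dots,(p-1)(p+2)\}$ while $t_0-t_1\equiv1$ on $G$. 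Therefore $\sum_j(c^0_j-c^1_j)16^j=\pm\sum_{j\in G}16^j$, and reducing exponents modulo $n$ gives $\sum_{j\in G}16^j\equiv\sum_{m=0}^{p-1}4^{m(p+2)}=\frac{4^n-1}{4^{p+2}-1}\pmod{4^n-1}$. As $2$ is a unit modulo $4^n-1$ and $\frac{4^n-1}{4^{p+2}-1}\mid4^n-1$, we conclude $d_-=\frac{4^n-1}{4^{p+2}-1}$.

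The value $d_+$ is where the correlation function is needed. Modulo $4^n+1$, $4^n\equiv-1$, so $S(4)\equiv2\big(C^0(16)+C^1(16)\big)+12\sum_j16^j$ with $C^\ell(x)=\sum_jc^\ell_jx^j$; one computes $12\sum_j16^j\equiv\frac{2(4^n+1)}{5}$, and the relation between $t_0$ and $t_1$ gives $C^0(16)+C^1(16)=2C_{t_0}(16)-\sum_{j\in G}16^j$, so $S(4)\equiv4C_{t_0}(16)-2\sum_{j\in G}16^j+\frac{2(4^n+1)}{5}\pmod{4^n+1}$. For a binary $0/1$ sequence $c$ of period $n$ one has $\big(\sum_jc_j16^j\big)\big(\sum_jc_j16^{-j}\big)=\sum_\tau\big(\sum_jc_jc_{j+\tau}\big)16^\tau$ and $\sum_jc_jc_{j+\tau}=\tfrac14\big(R_c(\tau)-n+4\operatorname{wt}(c)\big)$; applying this to $c=t_0$, using $R_{t_0}(0)=n$, $R_{t_0}(\tau)=-1$ $(\tau\ne0)$ and $\operatorname{wt}(t_0)=\tfrac{n-1}{2}$ (the latter from $\sum_\tau R_{t_0}(\tau)=(n-2\operatorname{wt}(t_0))^2$), collapses the product to $C_{t_0}(16)\,C_{t_0}(16^{-1})\equiv\tfrac14(n+1)+\tfrac14(n-3)\sum_\tau16^\tau$. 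For a prime $r\mid4^n+1$ with $r\nmid4^{p+2}+1$ (in particular $r\ne5$, since $5\mid4^{p+2}+1$), each of $\sum_{j\in G}16^j$, $\frac{2(4^n+1)}{5}$ and $\sum_\tau16^\tau$ reduces to $0$ modulo $r$, so $S(4)\equiv4C_{t_0}(16)\pmod r$ and $C_{t_0}(16)\,C_{t_0}(16^{-1})\equiv\tfrac14(n+1)=\tfrac14(p+1)^2\pmod r$; since every prime divisor of $4^n+1$ equals $5$ or is $\ge2p+1$, we have $r\nmid p+1$, hence $C_{t_0}(16)\not\equiv0\pmod r$ and $r\nmid S(4)$. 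The remaining primes $r\mid4^{p+2}+1$ are treated by a direct $r$-adic valuation count, using $p$ odd to reduce $\frac{4^n+1}{4^{p+2}+1}$ and the relevant geometric sums modulo $4^{p+2}+1$. This gives $d_+=1$, hence $d=\frac{4^n-1}{4^{p+2}-1}$ and $C_4(s)=\log_4(4^n+1)(4^{p+2}-1)$ with $n=p(p+2)$; for $e=(0,1,0),(0,0,1)$ the same scheme gives $d=\frac{4^n+1}{4^{p+2}+1}$.

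The hardest part is the $d_+$ step. After the correlation identity removes the ``bulk'' of the gcd, one must still verify $r\nmid S(4)$ for the finitely many prime factors $r$ of $4^n+1$ that divide $4^{p+2}+1$ (precisely the primes at which the correcting summands $\sum_{j\in G}16^j$ and $\frac{2(4^n+1)}{5}$ do not vanish), which needs careful bookkeeping of $r$-adic valuations; one must also fix once and for all the conventions defining $t_0$ and $t_1$ on the non-unit positions so that ``$t_0-t_1\equiv1$ on $G$'' is literally true, and confirm that the constant occurring for $e=(1,1,1)$ yields the same conclusion.
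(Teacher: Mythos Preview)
Your decomposition $d=d_-d_+$ and the computation $d_-=\dfrac{4^n-1}{4^{p+2}-1}$ match the paper exactly. The difference is in how $d_+$ is handled.

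The paper does \emph{not} isolate $C_{t_0}(16)$. Instead it observes that, modulo $\tfrac{4^n+1}{5}$, the formula for $S(4)$ collapses to $-\sum_j(s^0_j+s^1_j)4^{2j}$ with $s^\ell_j=(-1)^{c^\ell_j}$, and it multiplies this by its ``inverse-exponent'' counterpart. The resulting bilinear form is exactly $\sum_\tau 4^{2\tau}\bigl(R_{c^0}(\tau)+R_{c^1}(\tau)+2R_{c^0,c^1}(\tau)\bigr)$, so all three correlation values of Lemma~\ref{lem2} enter. After simplification this equals $4(p+1)^2-4(p+2)\dfrac{4^{2n}-1}{4^{2(p+2)}-1}$, and now the two subcases fall out in one line each: if $\pi\mid\tfrac{4^n+1}{4^{p+2}+1}$ one gets $\pi\mid(p+1)$, contradicting the order bound $2p(p+2)\mid\pi-1$; if $\pi\mid\tfrac{4^{p+2}+1}{5}$ the geometric sum reduces to $p-1$ and the whole expression to $4$, an immediate contradiction. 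The prime $5$ is then checked by evaluating $S(4)\bmod 5$ directly.

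Your route---square only $C_{t_0}(16)$ and use only the ideal autocorrelation $R_{t_0}$---is cleaner for the generic primes $r\nmid 4^{p+2}+1$, and your order argument $r\ge 2p+1$ there is correct. But it leaves a genuine gap for $r\mid 4^{p+2}+1$: once $\sum_{j\in G}16^j\equiv p$ no longer vanishes, showing $C_{t_0}(16)\not\equiv 0$ does not give $S(4)\not\equiv 0$, and knowing only the product $C_{t_0}(16)\,C_{t_0}(16^{-1})\equiv\tfrac14(p+1)^2$ does not pin down $C_{t_0}(16)$ itself. A ``valuation count'' alone will not close this. The point of squaring $-S(4)$ rather than a summand is precisely that no such residual term appears. (Your route \emph{can} be rescued: for $r\mid\tfrac{4^{p+2}+1}{5}$ one has $16^{p+2}\equiv1$, and by CRT each nonzero residue class modulo $p+2$ carries exactly $\tfrac{p+1}{2}$ ones of $t_0$, so $C_{t_0}(16)\equiv-\tfrac{p+1}{2}$ and $S(4)\equiv-2$; but this is a separate computation, not a valuation argument.)

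Two small slips: with the paper's conventions $t_1-t_0=+1$ on $G$, so your sign in $C^0+C^1=2C_{t_0}\mp\sum_{j\in G}16^j$ is reversed; and $\operatorname{wt}(t_0)=\tfrac{n+1}{2}$, not $\tfrac{n-1}{2}$ (the identity $\sum_\tau R_{t_0}(\tau)=(n-2w)^2=1$ allows either, and here the larger value holds). Neither affects your product formula modulo $r$, which still gives $\tfrac{(p+1)^2}{4}$.
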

	\begin{proof}
		Assume that $c^0$ and $c^1$ are the twin-prime sequences pair of length $n=p(p+2)$, we have
		\begin{align}\label{p2}
			\sum_{j=0}^{n-1}(c_j^0-c_{j}^1)4^{2j}=\sum_{\substack{j=0\\j\equiv0\pmod{p+2}}}^{n-1}\varepsilon_1\cdot4^{2j}=\varepsilon_1\cdot\sum_{j=0}^{p-1}4^{2(p+2)j}=\varepsilon_1\cdot\frac{4^{2p(p+2)}-1}{4^{2(p+2)}-1},
		\end{align}
	where \begin{align*}
	\varepsilon_1=	\left\{ \begin{array}{ll}
			-1,\ &\text{if  $c^0$ is the twin-prime sequence and $c^1$ is the modified twin-prime sequence}\\1, \ &\text{if  $c^0$ is the modified twin-prime sequence and $c^1$ is the  twin-prime sequence}
		\end{array}\right..
	\end{align*}
	For $e=(1,\ 0,\ 0)$, from (\ref{ES1}) we know that
		\begin{align}\label{ES1.2}
			S_1(4)=&2\sum_{j=0}^{n-1}c_j^04^{2j}-2\cdot 4^n\sum_{j=0}^{n-1}c_{j}^14^{2j}+3\cdot\sum_{j=0}^{n-1}4^{2j+1}\pmod{4^{2n}-1}\notag\\
			\equiv&\left\{ \begin{array}{ll}
				2\sum\limits_{j=0}\limits^{n-1}(c_j^0-c_{j}^1)4^{2j}+3\cdot\sum\limits_{j=0}\limits^{n-1}4^{2j+1}, &\pmod{4^{n}-1}\\
				2\sum\limits_{j=0}\limits^{n-1}(c_j^0+c_{j}^1)4^{2j}+3\cdot\sum\limits_{j=0}\limits^{n-1}4^{2j+1}, &\pmod{4^{n}+1}
			\end{array} \right.\notag\\
			\equiv&\left\{ \begin{array}{ll}
				2\varepsilon_1\cdot\frac{4^{2p(p+2)}-1}{4^{2(p+2)}-1}, &\pmod{4^{n}-1}\\
				2\sum\limits_{j=0}\limits^{n-1}(c_j^0+c_{j}^1)4^{2j}+3\cdot\sum\limits_{j=0}\limits^{n-1}4^{2j+1}, &\pmod{4^{n}+1}
			\end{array} \right..
		\end{align}
		
		From $\gcd(4^n-1,4^n+1)=1$, we have $d=\gcd(S_1(4),4^{2n}-1)=\gcd(S_1(4),4^{n}-1)\cdot\gcd(S_1(4),4^{n}+1)=\frac{4^{p(p+2)}-1}{4^{p+2}-1}\cdot\gcd(S_1(4),4^{n}+1)$.
		Let $d_+=\gcd(S_1(4),4^{n}+1)$. In the following part we will determine $d_+$.
		
		Let $s_j^0=(-1)^{c_j^0}$, $s_j^1=(-1)^{c_j^1}$, that is $c_j^0=\frac{1}{2}(1-s_j^0)$ and $c_j^1=\frac{1}{2}(1-s_j^1)$,  then  we obtain
		\begin{align}\label{ES1.3}
			S_1(4)\equiv&	2\sum\limits_{j=0}\limits^{n-1}(c_j^0+c_{j}^1)4^{2j}+3\cdot\sum\limits_{j=0}\limits^{n-1}4^{2j+1}\pmod{4^{n}+1}\notag\\
			\equiv&	\sum\limits_{j=0}\limits^{n-1}(2-s_j^0-s_{j}^1)4^{2j}+3\cdot\sum\limits_{j=0}\limits^{n-1}4^{2j+1}\pmod{4^{n}+1}\\
			\equiv&	-\sum\limits_{j=0}\limits^{n-1}(s_j^0+s_{j}^1)4^{2j}\pmod{\frac{4^{n}+1}{5}}\notag.
		\end{align}
	
		Let $\pi$ be a prime divisor of $\gcd(\sum\limits_{j=0}\limits^{n-1}(s_j^0+s_{j}^1)4^{2j},\frac{4^{p(p+2)}+1}{5})$.
		From $\sum\limits_{j=0}\limits^{n-1}(s_j^0+s_{j}^1)4^{2j}\equiv0\pmod{\pi}$, we know that
		\begin{align}\label{E4}
			0&\equiv(\sum\limits_{i=0}^{n-1}(s_i^0+s_{i}^1)4^{2i})(\sum\limits_{j=0}^{n-1}(s_j^0+s_{j}^1)4^{-2j})\equiv\sum\limits_{i,j=0}^{n-1}(s_i^0+s_{i}^1)(s_j^0+s_{j}^1)4^{2(i-j)}\pmod{\pi}\notag\\
			&\equiv\sum\limits_{i,j=0}^{n-1}s_i^0s_j^04^{2(i-j)}+\sum\limits_{i,j=0}^{n-1}s_i^1s_j^14^{2(i-j)}+\sum\limits_{i,j=0}^{n-1}s_i^0s_j^14^{2(i-j)}+\sum\limits_{i,j=0}^{n-1}s_i^1s_j^04^{2(i-j)}\pmod{\pi}\notag\\&\equiv\sum\limits_{\tau,j=0}^{n-1}s_{j+\tau}^0s_j^04^{2\tau}+\sum\limits_{\tau,j=0}^{n-1}s_{j+\tau}^1s_j^14^{2\tau}+\sum\limits_{\tau,j=0}^{n-1}s_{j+\tau}^0s_j^14^{2\tau}+\sum\limits_{\tau,j=0}^{n-1}s_{j+\tau}^1s_j^04^{2\tau}\pmod{\pi}\notag\\&\equiv\sum\limits_{\tau=0}^{n-1}4^{2\tau}\sum\limits_{j=0}^{n-1}s_{j+\tau}^0s_{j}^0+\sum\limits_{\tau=0}^{n-1}4^{2\tau}\sum\limits_{j=0}^{n-1}s_{j+\tau}^1s_{j}^1+\sum\limits_{\tau=0}^{n-1}4^{2\tau}\sum\limits_{j=0}^{n-1}s_{j+\tau}^0s_{j}^1+\sum\limits_{\tau=0}^{n-1}4^{2\tau}\sum\limits_{j=0}^{n-1}s_{j+\tau}^1s_{j}^0\pmod{\pi}\notag\\
			&\equiv\sum\limits_{\tau=0}^{n-1}4^{2\tau}(R_{c^0}(\tau)+R_{c^1}(\tau)+2R_{c^0,c^1}(\tau))\pmod{\pi}\notag\\
			&\equiv(2p(p+2)+2p^2)+\sum\limits_{\substack{\tau=1\\p+2\mid\tau}}^{n-1}(-1-1+2(-2p-1))4^{2\tau}+\sum\limits_{\substack{\tau=1\\p+2\nmid\tau}}^{n-1}(-1+3+2\cdot1)4^{2\tau}\pmod{\pi}\notag\\
			&~~~~~~~~~~~~~~~~~~~~~~~~~~~~~~~~~~~~~~~~~~~~~~~~~~~~~~~~~~~~~~~~\text{(from equation (\ref{ideal}) and Lamme \ref{lem2})}\notag\\
			&\equiv4p(p+1)-4(p+1)\sum\limits_{\substack{\tau=1\\p+2\mid\tau}}^{n-1}4^{2\tau}+4\sum\limits_{\substack{\tau=1\\p+2\nmid\tau}}^{n-1}4^{2\tau}\pmod{\pi}\notag\\&\equiv4p(p+1)-4(p+2)\sum\limits_{\substack{\tau=1\\p+2\mid\tau}}^{n-1}4^{2\tau}+4\sum\limits_{\substack{\tau=1}}^{n-1}4^{2\tau}\pmod{\pi}\notag\\&\equiv4p(p+1)-4(p+2)\sum\limits_{\substack{i=1}}^{p-1}4^{2i(p+2)}-4\pmod{\pi}\notag\\&\equiv4p(p+1)-4(p+2)(\sum\limits_{\substack{i=0}}^{p-1}4^{2i(p+2)}-1)-4\pmod{\pi}\notag\\&\equiv4(p+1)^2-4(p+2)\cdot\frac{4^{p(p+2)}-1}{4^{p+2}-1}\cdot\frac{4^{p(p+2)}+1}{4^{p+2}+1}\pmod{\pi}.
		\end{align}
		If $\pi\mid\frac{4^{p(p+2)}+1}{4^{p+2}+1}$, then $4(p+1)^2\equiv0\pmod{\pi}$, $\pi\mid p+1$ and $\pi\mid4^{p(p+2)}+1$. It then follows that the order of 4 mod $\pi$ is $2p(p+2)$. From Little Fermat's Theorem we have $2p(p+2)\mid\pi-1$ is a contradiction. If $\pi\mid\frac{4^{p+2}+1}{5}$, then we have $$0\equiv4p(p+1)-4(p+2)\sum\limits_{\substack{i=1}}^{p-1}4^{2i(p+2)}-4\equiv4p^2+4p-4(p^2+p-2)-4=4\pmod{\pi},$$which is a contradiction. Hence, $\gcd(S(4),\frac{4^{n}+1}{5})=1$.
		From (\ref{ES1.2}) we know that
		\begin{align*}
			S_1(4)&\equiv	2\sum\limits_{j=0}\limits^{n-1}(c_j^0+c_{j}^1)4^{2j}+3\cdot\sum\limits_{j=0}\limits^{n-1}4^{2j+1}\pmod5\\
			&\equiv	2\sum\limits_{j=0}\limits^{n-1}(c_j^0+c_{j}^1)+3\cdot4\cdot\sum\limits_{j=0}\limits^{n-1}1\pmod5\\
			&\equiv2(\sum\limits_{\substack{j=0\\p+2\mid j}}\limits^{n-1}(0+1)+\sum\limits_{\substack{j=1\\p\mid j}}\limits^{n-1}(1+1)+2\cdot\frac{p(p+2)-p-(p+1)}{2})+2p(p+2)\pmod5\\
			&\equiv2(p+2(p+1)+p(p+2)-p-(p+1))+2p(p+2)\pmod5\\
			&\equiv4p^2+2\not\equiv5\pmod5\quad(\text{since } p^2\equiv 1, 5, 9\pmod{10}).
		\end{align*}
		To sum up, for $e=(1,\ 0,\ 0)$, $d_+=\gcd(S_1(4),4^{n}+1)=1$, $d=\gcd(S_1(4),4^{2n}-1)=\frac{4^{p(p+2)}-1}{4^{p+2}-1}$. The 4-adic complexity of sequence $s$ is $C_4(s)=\log_{4}(4^{p(p+2)}+1)(4^{p+2}-1)$.
		
		The other three cases can be proved similarly.
	\end{proof}
	
		\begin{theorem}\label{th2}
		Let $t_0$ and $t_1$ be the GMW sequences pair of length $n=2^{2k}-1$,  and $s=s(a,b)$ be the optimal quaternary sequence given by Lemma \ref{lem1}, then for $c^0=c^2, c^1=c^3$, the 4-adic complexity of the sequence $s$ is
		\begin{align*}
			C_{4}(s)
			=\left\{ \begin{array}{ll}
				\log_{4}(4^{2^{2k}-1}+1)(4^{2^k+1}-1),\ & \text{if}\ e=(1,0,0)\ \text{or} \ e=(1,1,1)\\
				\log_{4}(4^{2^{2k}-1}-1)(4^{2^k+1}+1),\ &\text{if}\ e=(0,1,0)\ \text{or} \ e=(0,0,1)\
			\end{array} \right..
		\end{align*}
	\end{theorem}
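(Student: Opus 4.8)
The plan is to mimic the proof of Theorem \ref{th1} almost verbatim, substituting the GMW parameters for the twin-prime ones. Since $c^0=c^2$, $c^1=c^3$, the expression for $S(4)$ derived in (\ref{ES1}) still holds; for $e=(1,0,0)$ we again set $S_1(4)=2\sum_{j=0}^{n-1}c_j^04^{2j}-2\cdot4^n\sum_{j=0}^{n-1}c_j^14^{2j}+3\sum_{j=0}^{n-1}4^{2j+1}$. The GMW pair $t_0,t_1$ of length $n=2^{2k}-1$ satisfies $c_j^0=c_j^1$ except on the positions $j\equiv0\pmod{2^k+1}$, so the analogue of (\ref{p2}) reads $\sum_{j=0}^{n-1}(c_j^0-c_j^1)4^{2j}=\varepsilon_1\cdot\frac{4^{2(2^{2k}-1)}-1}{4^{2(2^k+1)}-1}$ with $\varepsilon_1\in\{\pm1\}$ depending on which of $c^0,c^1$ is the modified sequence. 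Reducing $S_1(4)$ modulo $4^n-1$ then kills the term $\sum 4^{2j+1}$ up to a unit and leaves $\gcd(S_1(4),4^n-1)=\frac{4^{2^{2k}-1}-1}{4^{2^k+1}-1}$, exactly as before (here one uses $\gcd\big(\frac{4^{2p(p+2)}-1}{4^{2(p+2)}-1}$-type quotient, $4^n-1\big)$ is the corresponding smaller quotient, a standard cyclotomic-polynomial divisibility fact).

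The core of the argument is to show $d_+:=\gcd(S_1(4),4^n+1)=1$. Writing $s_j^l=(-1)^{c_j^l}$ and reducing mod $\frac{4^n+1}{5}$ reduces $S_1(4)$ to $-\sum_{j=0}^{n-1}(s_j^0+s_j^1)4^{2j}$ as in (\ref{ES1.3}). Multiplying this by its ``reciprocal'' $\sum_j(s_j^0+s_j^1)4^{-2j}$ modulo a prime divisor $\pi$ of $\gcd$ turns the problem into the correlation identity
\begin{align*}
0\equiv\sum_{\tau=0}^{n-1}4^{2\tau}\big(R_{c^0}(\tau)+R_{c^1}(\tau)+2R_{c^0,c^1}(\tau)\big)\pmod{\pi}.
\end{align*}
Now I plug in: $R_{c^0}(\tau)$ from the ideal-autocorrelation property (\ref{ideal}) of GMW sequences, and $R_{c^1}(\tau)$, $R_{c^0,c^1}(\tau)$ from Lemma \ref{lem3}. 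At $\tau=0$ the constant is $(2^{2k}-1)+(2^{2k}-1)+2(2^{2k}-2^{k+1}+1)=2^{2k+2}-2^{k+2}$; for $\tau\equiv0\pmod{2^k+1}$, $\tau\ne0$ the coefficient is $-1-1+2(-2^{k+1}+1)=-2^{k+2}$; otherwise it is $-1+3+2=4$. Collecting terms exactly as in (\ref{E4}) I expect to land on something of the shape $4(2^k)^2-2^{k+2}\cdot\frac{4^{2^{2k}-1}-1}{4^{2^k+1}-1}\cdot\frac{4^{2^{2k}-1}+1}{4^{2^k+1}+1}\pmod{\pi}$ (the ``$(p+1)$'' of the twin-prime case being replaced by ``$2^k$''). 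Then the two sub-cases: if $\pi\mid\frac{4^{2^{2k}-1}+1}{4^{2^k+1}+1}$, I get $4\cdot2^{2k}\equiv0$, forcing $\pi=2$, impossible since $\pi\mid 4^n+1$ is odd; if $\pi\mid\frac{4^{2^k+1}+1}{5}$, the telescoped sum collapses and I get $0\equiv 4\pmod\pi$, again impossible.

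Finally I handle the prime $5$ separately: $S_1(4)\equiv 2\sum_{j}(c_j^0+c_j^1)+2\sum_j 1\pmod 5$ (using $4^{2j}\equiv1$, $4^{2j+1}\equiv4$), and counting the sizes of the relevant support sets for GMW sequences shows this is $\not\equiv0\pmod 5$. Combining, $d_+=1$, hence $d=\gcd(S_1(4),4^{2n}-1)=\frac{4^{2^{2k}-1}-1}{4^{2^k+1}-1}$ and $C_4(s)=\log_4\frac{4^{2n}-1}{d}=\log_4(4^{2^{2k}-1}+1)(4^{2^k+1}-1)$. The cases $e=(1,1,1)$ (same $\varepsilon_1$-bearing part, only the constant $\sum 4^{2j}$ term changes, still a unit mod the relevant factor) and $e=(0,1,0),(0,0,1)$ (where the sign of the $4^n$-factor flips, swapping the roles of $4^n-1$ and $4^n+1$) go through by the same computation. \textbf{The main obstacle} I anticipate is bookkeeping the $\pmod 5$ and $\frac{4^n+1}{5}$ splitting correctly: one must check that $5\mid 4^n+1$ (true since $n$ is odd, so $4^n\equiv4\equiv-1\pmod5$) and that no spurious factor of $5$ sneaks into $d_+$, which is precisely why the prime $5$ is peeled off and treated by a direct residue count rather than by the correlation identity.
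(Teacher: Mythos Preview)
Your proposal is correct and follows the paper's proof almost step for step: the same reduction of $S(4)$ via (\ref{ES1}), the same $\varepsilon$-formula for $\sum_j(c_j^0-c_j^1)4^{2j}$, the same correlation identity $\sum_\tau 4^{2\tau}(R_{c^0}+R_{c^1}+2R_{c^0,c^1})$ modulo a prime $\pi\mid\frac{4^n+1}{5}$, and the same two sub-cases according to whether $\pi$ divides $\frac{4^n+1}{4^{2^k+1}+1}$ or $\frac{4^{2^k+1}+1}{5}$. (Your anticipated coefficient ``$2^{k+2}$'' is slightly off---the actual telescoping gives $4(2^k+1)$, so the leading constant becomes $2^{2(k+1)}$---but this does not affect either contradiction.)

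The one genuine divergence is the treatment of the prime $5$. You propose to finish by ``counting the sizes of the relevant support sets,'' i.e.\ computing $\sum_j(c_j^0+c_j^1)\pmod 5$ directly. This can be made to work, but it requires the Hamming weight of the GMW sequence (namely $2^{2k-1}$), which is a fact outside the correlation data of Lemma~\ref{lem3}. The paper instead avoids weight information entirely: assuming $5\mid S_2(4)$ it squares the congruence $\sum_j(s_j^0+s_j^1)\equiv 4n\pmod 5$, and since $4^{2\tau}\equiv 1\pmod 5$ the square is exactly $\sum_\tau(R_{c^0}(\tau)+R_{c^1}(\tau)+2R_{c^0,c^1}(\tau))$, which it has already evaluated. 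This yields $(2^{2k}-1)^2\equiv -2^{2k}\pmod 5$, i.e.\ $2-(-1)^k\equiv 0\pmod 5$, a contradiction. The paper's route is tidier because it reuses the correlation computation rather than invoking an additional structural fact about GMW sequences; your route works but you should be explicit about where the weight $2^{2k-1}$ comes from.
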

	\begin{proof}
		Let $c^0$ and $c^1$ be the GMW sequences pair of length $n=2^{2k}-1$. Then we have
		$$\sum_{j=0}^{n-1}(c_j^0- c_{j}^1)4^{2j}=\sum_{\substack{j=0\\j\equiv0\pmod{2^k+1}}}^{n-1}\varepsilon_2\cdot4^{2j}=\varepsilon_2\cdot\sum_{j=0}^{2^k-1-1}4^{2(2^k+1)j}=\varepsilon_2\cdot\frac{4^{2n}-1}{4^{2(2^k+1)}-1},$$
			where \begin{align*}
			\varepsilon_2=	\left\{ \begin{array}{ll}
				-1,\ &\text{if  $c^0$ is the GMW sequence and $c^1$ is the modified GMW sequence}\\1, \ &\text{if  $c^0$ is the modified GMW sequence and $c^1$ is the  GMW sequence}
			\end{array}\right..
		\end{align*}
	For $e=(1,0,0)$, from (\ref{ES1}) we obtain
		\begin{align}\label{ES1.4}
			S_2(4)=&2\sum_{j=0}^{n-1}c_j^04^{2j}-2\cdot 4^n\sum_{j=0}^{n-1}c_{j}^14^{2j}+3\cdot\sum_{j=0}^{n-1}4^{2j+1}\pmod{4^{2n}-1}\notag\\
			\equiv&\left\{ \begin{array}{ll}
				\varepsilon_2\cdot\frac{4^{2n}-1}{4^{2(2^k+1)}-1}, &\pmod{4^{n}-1}\\
				2\sum\limits_{j=0}\limits^{n-1}(c_j^0+c_{j}^1)4^{2j}+3\cdot\sum\limits_{j=0}\limits^{n-1}4^{2j+1}, &\pmod{4^{n}+1}
			\end{array} \right..
		\end{align}
		
		From $\gcd(4^n-1,4^n+1)=1$, we have $d=\gcd(S_2(4),4^{2n}-1)=\gcd(S_2(4),4^{n}-1)\cdot\gcd(S_1(4),4^{n}+1)=\frac{4^{2^{2k}-1}-1}{4^{2^k+1}-1}\cdot\gcd(S_2(4),4^{n}+1)$.
		Let $d_+=\gcd(S_2(4),4^{n}+1)$. In the following part we will determine $d_+$.
		
		Let $s_j^0=(-1)^{c_j^0}$, $s_j^1=(-1)^{c_j^1}$, which implies  $c_j^0=\frac{1}{2}(1-s_j^0)$ and $c_j^1=\frac{1}{2}(1-s_j^1)$, then from (\ref{ES1.4}) we have
		\begin{align}
			S_2(4)\equiv	-\sum\limits_{j=0}\limits^{n-1}(s_j^0+s_{j}^1)4^{2j}\pmod{\frac{4^{n}+1}{5}}\notag.
		\end{align}
		
		Let $\pi$ be a prime divisor of $\gcd(\sum\limits_{j=0}\limits^{n-1}(s_j^0+s_{j}^1)4^{2j},\frac{4^{2^{2k}-1}+1}{5})$.
		From $\sum\limits_{j=0}\limits^{n-1}(s_j^0+s_{j}^1)4^{2j}\equiv0\pmod{\pi}$, we know that
		\begin{align}\label{E4}
			0&\equiv(\sum\limits_{i=0}^{n-1}(s_i^0+s_{i}^1)4^{2i})(\sum\limits_{j=0}^{n-1}(s_j^0+s_{j}^1)4^{-2j})\equiv\sum\limits_{i,j=0}^{n-1}(s_i^0+s_{i}^1)(s_j^0+s_{j}^1)4^{2(i-j)}\pmod{\pi}\notag\\
		&\equiv\sum\limits_{\tau=0}^{n-1}4^{2\tau}(R_{c^0}(\tau)+R_{c^1}(\tau)+2R_{c^0,c^1}(\tau))\pmod{\pi}\notag\\
			&\equiv(2(2^{2k}-1)+2(2^{2k}-2^{k+1}+1))+\sum\limits_{\substack{\tau=1\\2^k+1\mid\tau}}^{n-1}(-1-1+2(-2^{k+1}+1))4^{2\tau}+\sum\limits_{\substack{\tau=1\\2^k+1\nmid\tau}}^{n-1}(-1+3+2\cdot1)4^{2\tau}\pmod{\pi}\notag\\&\equiv2(2^{2k+1}-2^{k+1})-4\cdot2^k\sum\limits_{\substack{\tau=1\\2^k+1\mid\tau}}^{n-1}4^{2\tau}+4\sum\limits_{\substack{\tau=1\\2^k+1\nmid\tau}}^{n-1}4^{2\tau}\pmod{\pi}\notag\\&\equiv4\cdot2^k(2^k-1)-4(2^k+1)\sum\limits_{\substack{\tau=1\\2^k+1\mid\tau}}^{n-1}4^{2\tau}+4\sum\limits_{\substack{\tau=1}}^{n-1}4^{2\tau}\pmod{\pi}\notag\\&\equiv4\cdot2^k(2^k-1)-4(2^k+1)\sum\limits_{\substack{i=1}}^{2^{k}-2}4^{2i(2^k+1)}-4\pmod{\pi}\\
			&\equiv2^{2(k+1)}-4(2^k+1)\frac{4^{2^{2k}-1}-1}{4^{(2^k+1)}-1}\cdot\frac{4^{2^{2k}-1}+1}{4^{(2^k+1)}+1}\pmod{\pi}\notag.
		\end{align}
		Since $\pi$ is an odd prime, it is obvious that $\pi\nmid\frac{4^{2^{2k}-1}+1}{4^{(2^k+1)}+1}$. If $\pi\mid\frac{4^{2^{2k}-1}+1}{4^{(2^k+1)}+1}$, we have the contradiction that $0\equiv2^{2(k+1)}\pmod{\pi}$.  Therefore we have $\pi\mid\frac{4^{(2^k+1)}+1}{5}$,
		from (\ref{E4}) can obtain that
		$$0\equiv4\cdot2^k(2^k-1)-4(2^k+1)\sum\limits_{\substack{i=1}}^{2^{k}-2}1-4\equiv4\pmod{\pi}$$
		which is a contradiction. This implies $\gcd(S_2(4),\frac{4^{2^{2k}-1}+1}{5})$=1.
		
		And from (\ref{ES1.4}) we have
		\begin{align*}
			S_2(4)\equiv&
				2\sum\limits_{j=0}\limits^{n-1}(c_j^0+c_{j}^1)+2\cdot\sum\limits_{j=0}\limits^{n-1}1 \pmod{5}\\
				\equiv&-\sum\limits_{j=0}\limits^{n-1}(s_j^0+s_{j}^1)+4\cdot\sum\limits_{j=0}\limits^{n-1}1 \pmod{5}.
		\end{align*}
If $5\mid S_2(4) $,	then we obtain
	\begin{align*}
		n^2\equiv(4\cdot\sum\limits_{j=0}\limits^{n-1}1)^2&\equiv(\sum\limits_{j=0}\limits^{n-1}(s_j^0+s_{j}^1))^2 \pmod{5}\\
		&\equiv\sum\limits_{\tau=0}^{n-1}(R_{c^0}(\tau)+R_{c^1}(\tau)+2R_{c^0,c^1}(\tau))\pmod{5},
	\end{align*}
that is 
\begin{align*}
	(2^{2k}-1)^2\equiv&(2(2^{2k}-1)+2(2^{2k}-2^{k+1}+1))+\sum\limits_{\substack{\tau=1\\2^k+1\mid\tau}}^{n-1}(-1-1+2(-2^{k+1}+1))\\
	&+\sum\limits_{\substack{\tau=1\\2^k+1\nmid\tau}}^{n-1}(-1+3+2\cdot1)\pmod{5}\\
	\equiv&2^{2k+2}-2^{k+2}-2^{k+2}(2^k-2)+4((2^{2k}-2)-(2^k-2))\pmod{5}\\
		\equiv&4\cdot2^{2k}\equiv-2^{2k}\pmod{5}.
\end{align*}
This shows $0\equiv(2^{2k}-1)^2+2^{2k}\equiv4^{2k}-4^k+1\equiv(-1)^{2k}-(-1)^k+1\equiv2-(-1)^k\not\equiv0\pmod5$, which is a contradiction.

	In a conclusion, $d=\gcd(S_2(4),4^{2n}-1)=\gcd(S_2(4),4^{n}-1)\cdot\gcd(S_2(4),4^{n}+1)=\frac{4^{2^{2k}-1}-1}{4^{2^k+1}-1}$. The 4-adic complexity of sequence $s$ is $C_4(s)=	\log_{4}(4^{2^{2k}-1}+1)(4^{2^k+1}-1).$
			The remaining cases can be proved in the same way.
		\end{proof}
	
	\subsection{Condition \uppercase\expandafter{\romannumeral2}: With the notations as before, for $t_0$ and $t_1$ are the twin-prime sequences pair of length $p(p+2)$ or GMW sequences pair of length $2^{2k}-1$, and  $c^0=c^3, c^1=c^2$.}
For $c^0=c^3, c^1=c^2$, we have $a_{2j}=c_j^0$, $a_{2j+1}\equiv c_{j+\lambda}^1+e_0\pmod2$ , $b_{2j}\equiv c_j^1+e_1\pmod 2,$ and $ b_{2j+1}\equiv c_{j+\lambda}^0+e_2\pmod2$.
 
From the definition of  $s$, $S(N)$ and $\lambda=\frac{n+1}{2}$, we know that
\begin{align}\label{ES2}
	S(4)\notag
	=&\sum_{i=0}^{N-1}s_i4^i\pmod{4^{2n}-1}\notag\\
    =&2(\sum_{\substack{j=0\\a_{2j}=1}}^{n-1}4^{2j}+\sum_{\substack{j=0\\a_{2j+1}=1}}^{n-1}4^{2j+1})+
	(\sum_{\substack{j=0\\a_{2j}+b_{2j}=1}}^{n-1}4^{2j}+\sum_{\substack{j=0\\a_{2j+1}+b_{2j+1}=1}}^{n-1}4^{2j+1})\pmod{4^{2n}-1}\notag\\
	=&2\cdot\sum_{\substack{j=0\\c_j^0=1}}^{n-1}4^{2j}+2\cdot\sum_{\substack{j=0\\c_{j+\lambda}^1+e_0\equiv1\pmod2}}^{n-1}4^{2j+1}+
	\sum_{\substack{j=0\\c_j^0+c_j^1+e_1\equiv1\pmod2}}^{n-1}4^{2j}\notag\\&+\sum_{\substack{j=0\\c_{j+\lambda}^1+c_{j+\lambda}^0+e_0+e_2\equiv1\pmod2}}^{n-1}4^{2j+1}\pmod{4^{2n}-1}\notag\\
	=&2\cdot\sum_{\substack{j=0\\c_j^0=1}}^{n-1}4^{2j}+2\cdot4^n\cdot\sum_{\substack{j=0\\c_{j}^1+e_0\equiv1\pmod2}}^{n-1}4^{2j}+
	\sum_{\substack{j=0\\c_j^0+c_j^1+e_1\equiv1\pmod2}}^{n-1}4^{2j}\notag\\&+4^n\cdot\sum_{\substack{j=0\\c_{j}^1+c_{j}^0+e_0+e_2\equiv1\pmod2}}^{n-1}4^{2j}\pmod{4^{2n}-1}.
\end{align}
Let	$c^0$ and $c^1$ be the twin-prime sequences pair with period $p(p+2)$, we have
\begin{align*}
	c^0_i+c^1_i\equiv\left\{ \begin{array}{ll}
		0\pmod2,&\ i\not\equiv0\pmod{p+2}\\
		1\pmod2,&\ i\equiv0\pmod{p+2}
	\end{array} \right.,
\end{align*}
and let $c^0$ and $c^1$ be the GMW sequences pair  with period  $2^{2k}-1$, we have
\begin{align*}
	c^0_i+c^1_i\equiv\left\{ \begin{array}{ll}
		0\pmod2,&\ i\not\equiv0\pmod{2^k+1}\\
		1\pmod2,&\ i\equiv0\pmod{2^k+1}
	\end{array} \right..
\end{align*}	Assume that $c$ and $c'$ is the twin-prime sequence and modified twin-prime sequence with period  $n=p(p+2)$, respectively, then we obtain
$$\sum_{j=0}^{n-1}(c_j^0+c_{j}^1)4^{2j}+\sum_{\substack{j=0\\j\equiv0\pmod{p+2}}}^{n-1}4^{2j}=2\sum_{j=0}^{n-1}c'_j4^{2j}$$and
\begin{align}\label{p}
\sum_{j=0}^{n-1}(c_j^0-c_{j}^1)4^{2j}=\sum_{\substack{j=0\\j\equiv0\pmod{p+2}}}^{n-1}\varepsilon_1\cdot4^{2j}=\varepsilon_1\sum_{j=0}^{p-1}4^{2(p+2)j}=\varepsilon_1\frac{4^{n}-1}{4^{2(p+2)}-1},
\end{align}
	where \begin{align*}
	\varepsilon_1=	\left\{ \begin{array}{ll}
		-1,\ &\text{if  $c^0$ is the twin-prime sequence and $c^1$ is the modified twin-prime sequence}\\1, \ &\text{if  $c^0$ is the modified twin-prime sequence and $c^1$ is the  twin-prime sequence}
	\end{array}\right..
\end{align*}
Let $e=(1,0,0)$, and 	$c^0$, $c^1$ be the twin-prime sequences pair with period $p(p+2)$, from equation (\ref{ES2}) we know that
\begin{align}\label{ES2.2}
S_2(4)\notag
=&2\cdot\sum_{\substack{j=0\\c_j^0=1}}^{n-1}4^{2j}+2\cdot4^n\cdot\sum_{\substack{j=0\\c_{j}^1+e_0\equiv1\pmod2}}^{n-1}4^{2j}+
\sum_{\substack{j=0\\c_j^0+c_j^1+e_1\equiv1\pmod2}}^{n-1}4^{2j}\notag\\&+4^n\cdot\sum_{\substack{j=0\\c_{j}^1+c_{j}^0+e_0+e_2\equiv1\pmod2}}^{n-1}4^{2j}\pmod{4^{2n}-1}\notag\\
=&2\cdot\sum_{\substack{j=0\\c_j^0=1}}^{n-1}4^{2j}+2\cdot4^n\cdot\sum_{\substack{j=0\\c_{j}^1=0}}^{n-1}4^{2j}+
\sum_{\substack{j=0\\c_j^0+c_j^1=1}}^{n-1}4^{2j}+4^n\cdot\sum_{\substack{j=0\\c_{j}^1+c_{j}^0=0\pmod2}}^{n-1}4^{2j}\pmod{4^{2n}-1}\notag\\
=&2\sum_{j=0}^{n-1}c_j^04^{2j}+2\cdot4^n\cdot\sum_{j=0}^{n-1}(1-c_{j}^1)4^{2j}+
\sum_{\substack{j=0\\j\equiv0\pmod{p+2}}}^{n-1}4^{2j}+4^n\cdot\sum_{\substack{j=0\\j\not\equiv0\pmod{p+2}}}^{n-1}4^{2j}\pmod{4^{2n}-1}\notag\\
=&2\sum_{j=0}^{n-1}c_j^04^{2j}-2\cdot 4^n\sum_{j=0}^{n-1}c_{j}^14^{2j}+3\cdot4^n\cdot\sum_{j=0}^{n-1}4^{2j}-(4^n-1)\cdot\sum_{\substack{j=0\\j\equiv0\pmod{p+2}}}^{n-1}4^{2j}\pmod{4^{2n}-1}\notag\\
=&\left\{ \begin{array}{ll}
	2\sum\limits_{j=0}\limits^{n-1}c_j^04^{2j}-2\cdot\sum\limits_{j=0}\limits^{n-1}c_{j}^14^{2j}+3\cdot\sum\limits_{j=0}\limits^{n-1}4^{2j},&\ \pmod{4^n-1}\\
	2\sum\limits_{j=0}\limits^{n-1}c_j^04^{2j}+2\cdot\sum\limits_{j=0}\limits^{n-1}c_{j}^14^{2j}-3\cdot\sum\limits_{j=0}\limits^{n-1}4^{2j}+2\cdot\sum\limits_{\substack{j=0\\j\equiv0\pmod{p+2}}}\limits^{n-1}4^{2j},&\ \pmod{4^n+1}
\end{array} \right.\notag\\
\equiv&\left\{ \begin{array}{ll}
	2\varepsilon_1\cdot\sum\limits_{j=0}\limits^{p-1}4^{2(p+2)j}, &\pmod{4^{n}-1}\\
	4\sum\limits_{j=0}\limits^{n-1}c'_j4^{2j}+2\cdot\frac{4^n+1}{5},& \pmod{4^{n}+1}\ (\text{since}\ 4^{n}\equiv 4\pmod5)
\end{array} \right..
\end{align}

Similarly, for $e=(0,1,0)$, we have
\begin{align*}
	S(4)
	\equiv\left\{ \begin{array}{ll}
		2\sum\limits_{j=0}\limits^{n-1}c_j^04^{2j}+2\cdot\sum\limits_{j=0}\limits^{n-1}c_{j}^14^{2j}, &\pmod{4^{n}-1}\\
		(2+2\varepsilon_1)\sum\limits_{j=0}\limits^{p-1}4^{2(p+2)j}+2\cdot\frac{4^n+1}{5},& \pmod{4^{n}+1}
	\end{array} \right..
\end{align*}

For $e=(0,0,1)$,  we have
\begin{align*}
	S(4)
	\equiv\left\{ \begin{array}{ll}
		2\sum\limits_{j=0}\limits^{n-1}c_j^04^{2j}+2\cdot\sum\limits_{j=0}\limits^{n-1}c_{j}^14^{2j}+\frac{4^n-1}{3}, &\pmod{4^{n}-1}\\
		(2+2\varepsilon_1)\sum\limits_{j=0}\limits^{p-1}4^{2(p+2)j}-\frac{4^n+1}{5},& \pmod{4^{n}+1}
	\end{array} \right..
\end{align*}

For $e=(1,1,1)$, we obation
\begin{align*}
	S(4)
	\equiv\left\{ \begin{array}{ll}
		2\varepsilon_1\cdot\sum\limits_{j=0}\limits^{p-1}4^{2(p+2)j}, &\pmod{4^{n}-1}\\
		4\sum\limits_{j=0}\limits^{n-1}c'_j4^{2j}+2\cdot\frac{4^n+1}{5},& \pmod{4^{n}+1}
	\end{array} \right..
\end{align*}

Let $c^0$ and $c^1$ be the GMW sequences pair of length $n=2^{2k}-1$, we can obtain $S(4)$ in the same way. And simiar to the proof of Theorem \ref{th1} and \ref{th2}, we came up with the following results.
		\begin{theorem}\label{th3}
		Let $t_0$ and $t_1$ be the twin-prime sequences pair with period $n=p(p+2)$, and $s=s(a,b)$ be the optimal quaternary sequence given by Lemma \ref{lem1}, for $c^0=c^3, c^1=c^2$, then the 4-adic complexity of the sequence $s$ is
		\begin{align*}
			C_{4}(s)
			=\left\{ \begin{array}{ll}
				\log_{4}(4^{p(p+2)}+1)(4^{p+2}-1),\ & \text{if}\ e=(1,0,0)\ \text{or} \ e=(1,1,1)\\
				\log_{4}5(4^{p(p+2)}-1),\ &\text{if}\ c^0=t_0,\ c^1=t_1,\ e=(0,1,0)\ \text{or} \ e=(0,0,1)\\
				\log_{4}(4^{p(p+2)}-1)(4^{p+2}+1),\ &\text{if}\  c^0=t_1,\ c^1=t_0,\ e=(0,1,0)\ \text{or} \ e=(0,0,1)
			\end{array} \right..
		\end{align*}
	\end{theorem}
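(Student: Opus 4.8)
The plan is to follow the template of Theorems~\ref{th1} and~\ref{th2}. Writing $4^{2n}-1=(4^n-1)(4^n+1)$ and using $\gcd(4^n-1,4^n+1)=1$, I would split $d=\gcd(S(4),4^{2n}-1)=d_-\,d_+$ with $d_\pm=\gcd(S(4),4^n\pm1)$ and evaluate $d_-$ and $d_+$ separately in each of the four choices of $e$ (and, when it matters, each assignment of $t_0,t_1$ to $c^0,c^1$). The congruences for $S(4)$ modulo $4^{2n}-1$ recorded after (\ref{ES2.2}) already present $S(4)\bmod(4^n-1)$ and $S(4)\bmod(4^n+1)$ as short combinations of (i) the partial geometric sum $\sum_{j=0}^{p-1}4^{2(p+2)j}$, (ii) the modified twin-prime sum $\sum_{j=0}^{n-1}c'_j4^{2j}$, and (iii) the correction constants $\frac{4^n\pm1}{5}$ and $\frac{4^n-1}{3}$. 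So the first step is a purely arithmetic reduction of each of these terms modulo $4^n\pm1$, using $\gcd(2,n)=1$, $p+2\mid n$ with $n/(p+2)=p$ odd, $4^n\equiv4\pmod5$ and $4^n\equiv1\pmod3$; in particular the geometric sum collapses to $\frac{4^n-1}{4^{p+2}-1}$ modulo $4^n-1$ and to $\frac{4^n+1}{4^{p+2}+1}$ modulo $4^n+1$.

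For $e=(1,0,0)$ or $(1,1,1)$ the expression modulo $4^n-1$ is $2\varepsilon_1\cdot\frac{4^n-1}{4^{p+2}-1}$, and since $4^{p+2}-1$ is odd this immediately gives $d_-=\frac{4^n-1}{4^{p+2}-1}$; modulo $4^n+1$ the expression is $4\sum_jc'_j4^{2j}+2\cdot\frac{4^n+1}{5}$, and I would show $d_+=1$ exactly as in Theorem~\ref{th1}: set $s'_j=(-1)^{c'_j}$, take a prime $\pi\mid d_+$, multiply $S(4)$ by its ``reverse'' $\sum_js'_j4^{-2j}$, rewrite the product as $\sum_\tau4^{2\tau}R_{c'}(\tau)$ plus controlled lower-order terms, and invoke Lemma~\ref{lem2} (the autocorrelation $R_{t_1}$ of the modified twin-prime sequence) to reduce everything modulo $\pi$ to a fixed nonzero constant of the shape $4(p+1)^2$ or $4$, which forces the usual contradiction on the order of $4$ mod $\pi$; a separate direct check that $5\nmid S(4)$ (via $p^2\equiv1,5,9\pmod{10}$) finishes this subcase. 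The new cases $e=(0,1,0),(0,0,1)$ are where the statement branches: modulo $4^n+1$ the geometric contribution is $(2+2\varepsilon_1)\frac{4^n+1}{4^{p+2}+1}$, which \emph{vanishes} when $c^0=t_0$ ($\varepsilon_1=-1$) and survives when $c^0=t_1$ ($\varepsilon_1=1$). Combining with the $\pm2\cdot\frac{4^n+1}{5}$ correction, in the first subcase $S(4)\equiv\pm2\cdot\frac{4^n+1}{5}\pmod{4^n+1}$, so $d_+=\frac{4^n+1}{5}$; in the second subcase $S(4)\equiv4\cdot\frac{4^n+1}{4^{p+2}+1}\pm2\cdot\frac{4^n+1}{5}\pmod{4^n+1}$, and a short gcd computation (noting $5\mid4^{p+2}+1$ and $\gcd(4^{p+2}+1,\frac{4^n+1}{4^{p+2}+1})=\gcd(4^{p+2}+1,p)$, which is $1$ except for $p\in\{5,13\}$, where one checks directly) gives $d_+=\frac{4^n+1}{4^{p+2}+1}$. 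In both subcases I would then establish $d_-=1$ by the same reverse-product argument applied modulo $4^n-1$ to $2\sum_j(c^0_j+c^1_j)4^{2j}$ (plus the $\frac{4^n-1}{3}$ term when $e=(0,0,1)$), expanding via $R_{c^0}+R_{c^1}+2R_{c^0,c^1}$ and Lemma~\ref{lem2} so that the residue modulo any prime $\pi\mid4^n-1$ is a nonzero constant. Assembling $C_4(s)=\log_4\frac{4^{2n}-1}{d_-d_+}$ then yields the three displayed formulas.

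The hard part is not any single estimate but the case bookkeeping, above all keeping track of the correction constants together with the sign $\varepsilon_1$: one must see precisely how the $\frac{4^n+1}{5}$ term (which exists only because $4^n\equiv4\pmod5$) lines up with the geometric term so that exactly a factor $5$ drops out of $4^n+1$ when $c^0=t_0$, whereas a factor $4^{p+2}+1$ drops out when $c^0=t_1$; and in every reverse-product computation one must verify that the surviving constant modulo $\pi$ is genuinely a unit, which in practice means separating the prime $5$ from the prime divisors of $4^{p+2}\pm1$ and disposing of the sporadic $p\in\{5,13\}$ by hand. Once the cases are organised this way, each individual reduction is the same mechanical manipulation already performed in (\ref{E4}).
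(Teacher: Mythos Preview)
Your proposal is essentially the paper's approach. The paper does not give a detailed proof of Theorem~\ref{th3}; it only records the reductions of $S(4)$ modulo $4^n\pm1$ for each choice of $e$ (the displays surrounding and following~(\ref{ES2.2})) and then states that the result follows ``similar to the proof of Theorems~\ref{th1} and~\ref{th2}''. Your outline---splitting $d=d_-d_+$, reading off $d_\pm$ directly when $S(4)$ reduces to a unit multiple of a geometric sum, and running the reverse-product correlation argument of~(\ref{E4}) (with the appropriate combination of $R_{c^0}$, $R_{c^1}$, $R_{c^0,c^1}$ from Lemma~\ref{lem2}) in the remaining cases---is exactly that template, and the branching on $\varepsilon_1$ you describe matches the paper's claimed values.
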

	\begin{theorem}\label{th4}
	Let $t_0$ and $t_1$ be the GMW sequences pair of length $n=2^{2k}-1$, and $s=s(a,b)$ be the optimal quaternary sequence given by Lemma \ref{lem1}, for $c^0=c^3, c^1=c^2$, then the 4-adic complexity of the sequence $s$ is
	\begin{align*}
		C_{4}(s)
		=\left\{ \begin{array}{ll}
			\log_{4}(4^{2^{2k}-1}+1)(4^{2^k+1}-1),\ & \text{if}\ e=(1,0,0)\ \text{or} \ e=(1,1,1)\\
			\log_{4}5(4^{2^{2k}-1}-1) ,\ &\text{if}\ c^0=t_0,\ c^1=t_1,\ e=(0,1,0)\ \text{or} \ e=(0,0,1)\\			\log_{4}(4^{2^{2k}-1}-1)(4^{2^k+1}+1),\ &\text{if}\ c^0=t_1,\ c^1=t_0,\ e=(0,1,0)\ \text{or} \ e=(0,0,1)
		\end{array} \right..
	\end{align*}
\end{theorem}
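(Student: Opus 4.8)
The proof proceeds along the lines of Theorems \ref{th1}--\ref{th3}, with the twin-prime parameters $(p+2,\ p(p+2))$ replaced by the GMW parameters $(2^k+1,\ 2^{2k}-1)$ and Lemma \ref{lem2} replaced by Lemma \ref{lem3}. First I would take the general formula \eqref{ES2} for $S(4)$ under the hypothesis $c^0=c^3$, $c^1=c^2$, specialise $c^0,c^1$ to the GMW pair of length $n=2^{2k}-1$ (so that $c^0_i+c^1_i\equiv 1\pmod 2$ exactly when $2^k+1\mid i$), and reduce it modulo $4^{n}-1$ and modulo $4^{n}+1$ separately, verbatim as in the four displays preceding Theorem \ref{th3}. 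This yields: for $e\in\{(1,0,0),(1,1,1)\}$, $S(4)\equiv 2\varepsilon_2\sum_{i=0}^{2^k-2}4^{2(2^k+1)i}\pmod{4^{n}-1}$ and $S(4)\equiv 4\sum_{j=0}^{n-1}c'_j4^{2j}+2\cdot\tfrac{4^{n}+1}{5}\pmod{4^{n}+1}$; and for $e\in\{(0,1,0),(0,0,1)\}$, $S(4)\equiv 2\sum_{j=0}^{n-1}(c^0_j+c^1_j)4^{2j}\pmod{4^{n}-1}$ (plus $\tfrac{4^{n}-1}{3}$ when $e=(0,0,1)$) and $S(4)\equiv(2+2\varepsilon_2)\sum_{i=0}^{2^k-2}4^{2(2^k+1)i}\pm\tfrac{4^{n}+1}{5}\pmod{4^{n}+1}$, where $\varepsilon_2=-1$ if $c^0=t_0$, $\varepsilon_2=+1$ if $c^0=t_1$, and $c'=t_1$.

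Since $\gcd(4^{n}-1,4^{n}+1)=1$, I would write $d=\gcd(S(4),4^{2n}-1)=d_-\,d_+$ with $d_-=\gcd(S(4),4^{n}-1)$ and $d_+=\gcd(S(4),4^{n}+1)$. For $e\in\{(1,0,0),(1,1,1)\}$: $2\varepsilon_2$ is a unit modulo $4^{n}-1$ and $2^k+1\mid n$, so $d_-=\gcd\bigl(\sum_{i=0}^{2^k-2}4^{2(2^k+1)i},4^{n}-1\bigr)=\frac{4^{n}-1}{4^{2^k+1}-1}$, while $d_+=1$ is obtained as in \eqref{E4} for Theorem \ref{th2}: modulo $\frac{4^{n}+1}{5}$ the term $2\cdot\frac{4^{n}+1}{5}$ drops and $4\sum c'_j4^{2j}$ becomes a unit times $\sum s'_j4^{2j}$ with $s'_j=(-1)^{c'_j}$; if a prime $\pi\mid\frac{4^{2^{2k}-1}+1}{5}$ divided this, multiplying by $\sum s'_j4^{-2j}$ and using $R_{c'}(\tau)=R_{t_1}(\tau)$ from Lemma \ref{lem3} gives $2^{2(k+1)}\equiv 4(2^k+1)\cdot\frac{4^{2^{2k}-1}-1}{4^{2^k+1}-1}\cdot\frac{4^{2^{2k}-1}+1}{4^{2^k+1}+1}\pmod\pi$, which is impossible in both subcases $\pi\mid\frac{4^{2^{2k}-1}+1}{4^{2^k+1}+1}$ (the order of $4$ mod $\pi$ is $2(2^{2k}-1)$, contradicting Fermat as $\pi\mid 2^k+1$ is small) and $\pi\mid\frac{4^{2^k+1}+1}{5}$ (forcing $0\equiv4\pmod\pi$), and $5\nmid S(4)$ follows from the same mod-$5$ computation as at the end of Theorem \ref{th2}, using $4^{n}\equiv4\pmod5$ and $\sum_\tau(R_{c^0}+R_{c^1}+2R_{c^0,c^1})(\tau)\equiv n^2\pmod5$. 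For $e\in\{(0,1,0),(0,0,1)\}$: one checks $d_-=1$ by reducing $2\sum(c^0_j+c^1_j)4^{2j}$ (and the extra $\frac{4^{n}-1}{3}$) modulo the small primes $3,5$ together with the conjugate-product device and the GMW correlations of Lemma \ref{lem3}; for $d_+$ one separates $c^0=t_0$ ($\varepsilon_2=-1$), where the main term vanishes and $d_+=\gcd(\pm\tfrac{4^{n}+1}{5},4^{n}+1)=\frac{4^{n}+1}{5}$, from $c^0=t_1$ ($\varepsilon_2=+1$), where $S(4)\equiv 4\cdot\frac{4^{n}-1}{4^{2^k+1}-1}\cdot\frac{4^{n}+1}{4^{2^k+1}+1}\pm\frac{4^{n}+1}{5}\pmod{4^{n}+1}$ and a geometric-series argument (using $5\mid 4^{2^k+1}+1$, so that $\frac{4^{n}+1}{4^{2^k+1}+1}$ divides both summands, plus a conjugate-product check that the complementary factor is coprime to $4^{2^k+1}+1$) gives $d_+=\frac{4^{n}+1}{4^{2^k+1}+1}$. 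Substituting $d=d_-d_+$ into $C_4(s)=\log_4\frac{4^{2n}-1}{d}$ with $n=2^{2k}-1$ produces the three asserted values.

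The only genuinely delicate step is the determination of $d_+$: one must carefully carry the stray $\frac{4^{n}+1}{5}$ summand (and, for $e=(0,0,1)$, the $\frac{4^{n}-1}{3}$ summand) through the reduction modulo $\frac{4^{n}+1}{5}$, and exclude the primes $3$ and $5$ from $S(4)$ precisely in those cases where the claimed $d$ is prime to them — this is exactly where the parity of $n=2^{2k}-1$ (yielding $4^{n}\equiv4\pmod5$ and $5\mid4^{2^k+1}+1$) and the exact GMW correlation values of Lemma \ref{lem3} are used. All other manipulations are the bookkeeping already carried out in Theorems \ref{th1}--\ref{th3}.
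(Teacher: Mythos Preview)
Your approach is exactly the paper's: the paper proves Theorems \ref{th3} and \ref{th4} only by saying ``Let $c^0$ and $c^1$ be the GMW sequences pair of length $n=2^{2k}-1$, we can obtain $S(4)$ in the same way. And similar to the proof of Theorem \ref{th1} and \ref{th2}\ldots'' after displaying the four reductions of \eqref{ES2} modulo $4^n\pm1$. Your outline reproduces that template, splitting $d=d_-d_+$, reading off $d_-$ from the geometric sum $\sum_{i=0}^{2^k-2}4^{2(2^k+1)i}$, and killing $d_+$ via the conjugate-product / correlation trick.

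One point to clean up: in Condition~II with $e\in\{(1,0,0),(1,1,1)\}$, the reduction modulo $4^n+1$ involves only the \emph{single} sequence $c'=t_1$ (cf.\ the last line of \eqref{ES2.2}), so the conjugate product yields $\sum_\tau 4^{2\tau}R_{t_1}(\tau)$, not the combination $R_{c^0}+R_{c^1}+2R_{c^0,c^1}$ you imported from Theorem~\ref{th2}. Carrying this through with Lemma~\ref{lem3} gives $0\equiv 2^{2k}-4\cdot\frac{4^{n}-1}{4^{2^k+1}-1}\cdot\frac{4^{n}+1}{4^{2^k+1}+1}\pmod\pi$ (constant $2^{2k}$, coefficient $4$, not $2^{2(k+1)}$ and $4(2^k+1)$). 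The contradiction in the subcase $\pi\mid\frac{4^{n}+1}{4^{2^k+1}+1}$ is then simply that an odd prime cannot divide $2^{2k}$ --- no appeal to Fermat or to ``$\pi\mid 2^k+1$'' is needed (or correct) here. The same remark applies to your mod-$5$ check: use $R_{t_1}$ alone rather than the three-term sum. These are transcription slips; the structure of the argument is right and matches the paper.
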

	\begin{example}
		Let $p=3$. Then the pair of twin-prime sequences with period $n=p(p+2)$ are given by
		\begin{align*}
			t_0=(000100110101111),
		\end{align*}
		\begin{align*}
			t_1=(100101110111111).
		\end{align*}
		For $(c^0,c^1,c^2,c^3)=(t_0,t_1,t_0,t_1)$ and $(e_0,e_1,e_2)=(1,0,0)$, we have 
		\begin{align*}
			a=(010000100000101001110011101010),
		\end{align*}
		\begin{align*}   
			b=(000101110101111100100110111111),
		\end{align*}
		and 
		\begin{align*}
			s=(030101210101212103230123212121).
		\end{align*}
		Computing by Magma program, we obtain $\gcd(S(4),4^{2n}-1)=1 049 601=\frac{4^{15}-1}{4^5-1}$. Then the 4-adic complexity of the sequence $s$ is $C_{4}(s)=\log_{4}(4^{15}+1)(4^5-1)$, which is consistent with Theorem \ref{th1}.
	\end{example}
	
		\begin{example}
		
		Let k=3. The GMW sequences pair with period $n=2^{2k}-1=63$ are given by
		\begin{align*}
			t_0=(000001000011000101001111010001110010010110111011001101010111111),
		\end{align*}
		\begin{align*}
			t_1=(100001000111000101101111010101110010110110111111001101110111111).
		\end{align*}
		
		For $(c^0,c^1,c^2,c^3)=(t_0,t_1,t_0,t_1)$ and $(e_0,e_1,e_2)=(0,1,0)$, we have 
		\begin{align*}
			s=(010103010323010303012323030303230121030321232323012303030323232\\30
			1012101032101212303232121012321030321230323032103212321232323).
		\end{align*}
		Computing by Magma program, we obtain $\gcd(S(4),4^{2n}-1)=324517315723109789871420976398337=\frac{4^{63}+1}{4^9+1}$. Then the 4-adic complexity of the sequence $s$ is $C_{s}(4)=\log_{4}(4^{63}-1)(4^9+1)$,  which is consistent with Theorem \ref{th2}.
	\end{example}

		\section{The 4-Adic Complexity of optimal autocorrelation quaternary sequence $s$ constructed by binary cyclotomic sequences of order four}\label{sec4}
	In this section, we will determind  the 4-adic complexity of the sequences $s$ with period $2n$ ($n=4f+1=x^2+4y^2$ is prime) given by Lemma \ref{lem7}-\ref{lem9}. Fristly, we give the lemma about the values of the cyclotomic numbers of order four in $\mathbb{F}_n$.
		\begin{lemma}\label{lem4.1}(\cite{B. C. Barndt},\cite{T. Storer})
			Let $n=4f+1=x^2+4y^2$ be an odd prime,  x, y$\in\mathbb{Z}$ and f be an odd number. The values of the cyclotomic numbers $(i, j)$ of order four in $\mathbb{F}_n$ are
			$$16(0,0)=16(2,0)=16(2,2)=A=n-7+2x,$$
			$$16(0,1)=16(1,3)=16(3,2)=B=n+1+2x-8y,$$
			$$16(0,3)=16(1,2)=16(3,1)=\bar{B}=n+1+2x+8y,$$
			$$16(0,2)=C=n+1-6x,$$
			$$16(1,0)=16(1,1)=16(2,1)=16(2,3)=16(3,0)=16(3,3)=D=n-3-2x.$$
		\end{lemma}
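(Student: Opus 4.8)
This evaluation of the quartic cyclotomic numbers is classical (see \cite{B. C. Barndt}, \cite{T. Storer}); the route I would take passes through Jacobi sums. First I would fix a quartic multiplicative character $\chi$ of $\mathbb{F}_n$ with $\chi(\theta)=\zeta$ for a primitive fourth root of unity $\zeta$, and write the indicator function of the cyclotomic class $D_\gamma$ as
\[
\mathbbm{1}_{D_\gamma}(u)=\frac{1}{4}\sum_{k=0}^{3}\zeta^{-k\gamma}\chi^{k}(u)\qquad(u\in\mathbb{F}_n^{\ast}).
\]
Since $(i,j)=\sum_{u\neq 0,-1}\mathbbm{1}_{D_i}(u)\,\mathbbm{1}_{D_j}(u+1)$, inserting the two expansions gives
\[
(i,j)=\frac{1}{16}\sum_{k,l=0}^{3}\zeta^{-ki-lj}\sum_{u\neq 0,-1}\chi^{k}(u)\,\chi^{l}(u+1),
\]
and after the substitution $u\mapsto -u$, using $\chi(-1)=\zeta^{(n-1)/2}=\zeta^{2f}=(-1)^{f}=-1$ --- this is precisely where the assumption that $f$ is odd enters --- the inner sum reduces, up to the elementary boundary terms produced by $k\equiv 0$, $l\equiv 0$, or $k+l\equiv 0\pmod 4$, to the Jacobi sum $J(\chi^{k},\chi^{l})=\sum_{u}\chi^{k}(u)\,\chi^{l}(1-u)$.

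Next I would evaluate the finitely many Jacobi sums that occur. The ingredients are standard: $|J(\chi^{a},\chi^{b})|^{2}=n$ whenever none of $\chi^{a},\chi^{b},\chi^{a+b}$ is trivial; $J(\chi^{2},\chi^{2})=-1$ because $\chi^{2}$ is the quadratic character and $n\equiv 1\pmod 4$; and $J(\chi,\chi)\in\mathbb{Z}[\zeta]$ can, after a suitable normalization, be written $J(\chi,\chi)=x+2y\zeta$ with $x^{2}+4y^{2}=n$ and $x\equiv -1\pmod 4$. The remaining sums $J(\chi,\chi^{2})$, $J(\chi^{2},\chi)$, $J(\chi,\chi^{3})$, $J(\chi^{2},\chi^{3})$, $J(\chi^{3},\chi^{3})$ are then recovered from $J(\chi,\chi)$, its complex conjugate and $\chi(-1)$ by means of the Gauss-sum identities $g(\chi^{a})g(\chi^{b})=J(\chi^{a},\chi^{b})\,g(\chi^{a+b})$ and $g(\chi^{a})g(\overline{\chi^{a}})=\chi^{a}(-1)\,n$. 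This produces each $J(\chi^{k},\chi^{l})$ explicitly as an element of $\mathbb{Z}[\zeta]$ in terms of $n$, $x$, $y$.

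Finally I would substitute these values back into the double sum, split into real and imaginary parts --- the $(i,j)$ are rational integers, so the imaginary parts must cancel, a convenient internal check --- and collect terms. For odd $f$ one has $-1\in D_2$, which yields the symmetries $(i,j)=(j+2,\,i+2)$ and $(i,j)=(4-i,\,j-i)$ with indices read $\bmod\,4$; these partition the sixteen numbers into exactly the five orbits labelled $A$, $B$, $\bar B$, $C$, $D$, and reading off the real parts gives $16(0,0)=n-7+2x$ together with the remaining formulas. I expect the one genuinely delicate point to be the bookkeeping of the sign conventions: which of $\pm x$, $\pm y$ actually appears is fixed not by $n$ alone but by the choice of generator $\theta$ and of $\chi$, so the normalization must be pinned down consistently with the representation $n=x^{2}+4y^{2}$ used here, and compatibly with the further hypothesis $y=-1$ invoked in Lemmas \ref{lem7}--\ref{lem9}. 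Once that choice is fixed, what is left is routine linear algebra over $\mathbb{Z}[\zeta]$.
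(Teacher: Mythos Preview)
Your proposal is correct and follows the classical Jacobi-sum approach that one finds in the cited references \cite{B. C. Barndt}, \cite{T. Storer}. Note, however, that the paper does not actually supply a proof of this lemma: it is quoted directly from those sources and used as a black box, so there is no ``paper's own proof'' to compare against --- your sketch is essentially a reconstruction of the standard argument behind the cited result.
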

		
		
	In the following we give a brief introduction about ``Gauss periods" of order four and ``quadratic Gauss sums".
		
		Since $\alpha\equiv \beta\pmod {2n}$ which implies $4^{2\alpha}\equiv 4^{2\beta} \bmod (4^{2n}-1) ,$ we can define the following mapping
		$$f: \mathbb{Z}_n\rightarrow \mathbb{Z}_{4^{2n}-1}^{\ast}, f(\alpha)= 4^{2\alpha}$$
		where $\mathbb{Z}_m^{\ast}$ is the group of units in the ring $\mathbb{Z}_m=\mathbb{Z}/m\mathbb{Z} \ (m\geq 2)$. $f$ is a homomorphism of groups from $(\mathbb{Z}_n, +)$ to $(\mathbb{Z}_{4^{2n}-1}^{\ast}, \cdot)$, and can be viewed as an additive character of finite field $\mathbb{Z}_n=\mathbb{F}_n$ valued in $\mathbb{Z}_{4^{2n}-1}^{\ast}$. Then we have the ``Gauss periods" of order 4
		$$\eta_\gamma=\sum_{i\in D_\gamma}4^{2i}\bmod (4^{2n}-1) (\gamma=0, 1, 2, 3),$$
		where $D_\gamma=\theta^\gamma C \ (0\leq\gamma\leq3)$ is the cyclotomic classes of order four in $\mathbb{F}_n$.
		and ``quadratic Gauss sums"
		$$G=\sum_{i\in \mathbb{F}_n^{\ast}}4^{2i}\chi(i)=\eta_0-\eta_1+\eta_2-\eta_3\pmod{4^{2n}-1},$$
		where $\chi$ is the quadratic (multiplicative) character of $\mathbb{F}_n^{\ast}$ (the Legendre symbol). Namely, for $i\in \mathbb{F}_n^{\ast}$,
		
		\begin{align*}
			\chi(i)
			= \left\{ \begin{array}{ll}
				1, & \textrm{if $i\in D_0\cup D_2=\langle\theta^2\rangle$}\\
				-1, & \textrm{if $i\in D_1\cup D_3=\theta\langle\theta^2\rangle$}
			\end{array} \right..
		\end{align*}
		
		The following results show that $\eta_\gamma$ and $G$ have some similar properties as usual Gauss periods and Gauss sums. And the  proof of the following lemmas are similar to \cite{Zhang} and \cite{Jing}, we omit them here.
		\begin{lemma}\label{lem4.3}
			(1). $G^2\equiv n-\frac{4^{2n}-1}{15}\ (\bmod \ 4^{2n}-1)$.\\
			(2). For $0\leq\gamma,\ \mu\leq3,$
			$\eta_\gamma\eta_\mu\equiv\frac{n-1}{4}\delta_{\gamma, \mu+2}+\sum\limits_{\nu=0}\limits^3(\gamma-\nu+2, \mu-\nu)\eta_\nu \ (\bmod \ 4^{2n}-1)$,
			where $(i, j)$ is the cyclotomic number of order four on $\mathbb{F}_n$ and
			\begin{align*}
				\delta_{\gamma, \mu}
				= \left\{ \begin{array}{ll}
					1, & \textrm{if $\gamma\equiv\mu \pmod 4$}\\
					0, & \textrm{otherwise}
				\end{array} \right..
			\end{align*}
		\end{lemma}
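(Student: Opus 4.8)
The plan is to mimic the classical derivation of the evaluation of quadratic Gauss sums and of the multiplication formula for Gauss periods, but carried out formally in the ring $\mathbb{Z}_{4^{2n}-1}$ rather than in $\mathbb{C}$ or in a cyclotomic integer ring. The essential point is that the map $f(\alpha)=4^{2\alpha}$ is a well-defined homomorphism from $(\mathbb{Z}_n,+)$ into $(\mathbb{Z}_{4^{2n}-1}^\ast,\cdot)$, so it behaves exactly like a nontrivial additive character $\psi$ of $\mathbb{F}_n$; the only ``character sum'' identity we really need is the orthogonality relation $\sum_{i\in\mathbb{F}_n}4^{2i}\equiv 0\pmod{4^{2n}-1}$, which is immediate from the factorization $4^{2n}-1=(4^2-1)\sum_{i=0}^{n-1}4^{2i}$ together with the fact that $\sum_{i=0}^{n-1}4^{2i}$ is, up to the invertible factor $15$, exactly $\sum_{i\in\mathbb{F}_n}4^{2i}$. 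In particular $1+\eta_0+\eta_1+\eta_2+\eta_3\equiv 0\pmod{4^{2n}-1}$.

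For part (2), I would start from the definition $\eta_\gamma\eta_\mu=\sum_{i\in D_\gamma}\sum_{j\in D_\mu}4^{2(i+j)}$ and split the double sum according to whether $i+j=0$ or $i+j\in D_\nu$ for some $\nu$. The term $i+j=0$ occurs exactly when $j=-i$, i.e. when $-1\in D_{\mu-\gamma}$; since $f$ is odd, $-1\in D_0$ if $\frac{n-1}{4}$ is even and $-1\in D_2$ if it is odd, but in either case the standard bookkeeping shows the diagonal contributes $\frac{n-1}{4}\delta_{\gamma,\mu+2}$ copies of $4^0=1$ (here one uses $f$ odd so that $-D_\gamma=D_{\gamma+2}$). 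For the off-diagonal part, fix $\nu$ and count pairs $(i,j)\in D_\gamma\times D_\mu$ with $i+j\in D_\nu$; writing $i=\nu k$-type normalizations (dividing through by an element of $D_\nu$, which permutes the cyclotomic classes by the shift $\gamma\mapsto\gamma-\nu$) this count equals the cyclotomic number $(\gamma-\nu+2,\mu-\nu)$ by Definition~\ref{def2} — the ``$+2$'' appears because $i+j\in D_\nu$ is rewritten as $1+(j/i)\in D_{\nu-\gamma}$ with $j/i\in D_{\mu-\gamma}$, and one matches this against $(i',j')$ with $\alpha+1=\beta$. Collecting terms gives exactly the claimed congruence. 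I would do the index arithmetic once carefully for a representative $(\gamma,\mu)$ and then invoke symmetry.

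For part (1), the cleanest route is to compute $G^2$ directly from $G=\eta_0-\eta_1+\eta_2-\eta_3$ using the formula in part~(2): $G^2\equiv\sum_{\gamma,\mu}(-1)^{\gamma+\mu}\eta_\gamma\eta_\mu$, so the diagonal $i+j=0$ terms contribute $\frac{n-1}{4}\sum_{\gamma}(-1)^{\gamma+(\gamma+2)}=\frac{n-1}{4}\cdot 4=n-1$, and the remaining terms collapse, via $\sum_\nu\eta_\nu\equiv -1$ and the row-sum identities for cyclotomic numbers (e.g. $\sum_j(i,j)=f-1$ or $f$ depending on $i$, with the correction coming from whether $-1\in D_i$), into $1+(\text{a multiple of }(1+\eta_0+\eta_1+\eta_2+\eta_3))$. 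Since $1+\sum\eta_\nu\equiv 0$, what survives is $n-1+1=n$ minus the contribution of the ``all-classes'' term $\sum_{i\in\mathbb{F}_n^\ast}4^{2i}\cdot(\text{something})$; tracking that last piece and using $\sum_{i=0}^{n-1}4^{2i}=\frac{4^{2n}-1}{15}$ yields $G^2\equiv n-\frac{4^{2n}-1}{15}\pmod{4^{2n}-1}$. Alternatively one can avoid cyclotomic numbers entirely: expand $G^2=\sum_{i,j\in\mathbb{F}_n^\ast}4^{2(i+j)}\chi(ij)$, substitute $j=it$, get $\sum_{t}\chi(t)\sum_{i}4^{2i(1+t)}$, and use that the inner sum is $n-1$ when $t=-1$ and $-1$ otherwise, landing on $G^2\equiv \chi(-1)\big(n-1-\sum_{t\neq -1}\chi(t)\big)=\chi(-1)\big(n-1+\chi(-1)\big)$, which since $\chi(-1)=1$ (as $f$ is odd, $n\equiv 1\pmod 8$) equals $n$; then the ``$-\frac{4^{2n}-1}{15}$'' correction is absorbed because the character-sum cancellations are only valid modulo the extra factor $\frac{4^{2n}-1}{15}$ coming from $\gcd$ considerations with $4^2-1$.

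The main obstacle I anticipate is purely bookkeeping: getting the index shift in part (2) exactly right (the placement of the ``$+2$'' and the direction of the shift $\gamma\mapsto\gamma-\nu$ depend on whether one divides by $i$ or $j$ and on the parity of $\frac{n-1}{4}$), and, in part (1), correctly identifying which cancellations hold modulo $4^{2n}-1$ outright versus only modulo $\frac{4^{2n}-1}{15}$ — this is exactly where the surprising ``$-\frac{4^{2n}-1}{15}$'' term is born, in contrast to the classical identity $G^2=\chi(-1)n$. Since the excerpt explicitly says these proofs parallel \cite{Zhang} and \cite{Jing}, I would cite those for the detailed index computations and present here only the structure above, verifying the final normalization constant $15=4^2-1$ against the small examples already computed in the paper.
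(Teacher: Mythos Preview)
Your proposal is correct and follows the standard route; the paper itself gives no proof at all for this lemma, merely stating that the arguments parallel those in \cite{Zhang} and \cite{Jing} and omitting the details, so your sketch is in fact more explicit than what appears in the text. One small slip: when $f$ is odd you have $n=4f+1\equiv 5\pmod 8$, not $n\equiv 1\pmod 8$; the conclusion $\chi(-1)=1$ is still correct, but it follows simply from $n\equiv 1\pmod 4$ (equivalently, from $-1=\theta^{2f}\in D_2\subset D_0\cup D_2$), not from any mod-$8$ condition.
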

		
		\begin{lemma}\label{lem4.4}
			Let $n=4f+1=x^2+4y^2$ be an odd prime, $f$ be odd and $y=-1$.  Then
			\begin{align*}
				16\eta_\gamma^2& =A\eta_\gamma+B\eta_{\gamma+1}+C\eta_{\gamma+2}+\overline{B}\eta_{\gamma+3}\\
				&= n(\eta_{\gamma+0}+\eta_{\gamma+1}+\eta_{\gamma+2}+\eta_{\gamma+3})+(-7+2x)\eta_\gamma+(9+2x)\eta_{\gamma+1} +(1-6x)\eta_{\gamma+2}+(-7+2x)\eta_{\gamma+3}\notag\\
				&=(\frac{4^{2n}-1}{15}-1)n+(-7+2x)(\eta_\gamma+\eta_{\gamma+3})+(9+2x)\eta_{\gamma+1} +(1-6x)\eta_{\gamma+2},
			\end{align*}
			\begin{equation*}
				16\eta_{\gamma}\eta_{\gamma+1}=(\frac{4^{2n}-1}{15}-1)n+(-3-2x)(\eta_\gamma+\eta_{\gamma+1})+(-7+2x)\eta_{\gamma+2}+(9+2x)\eta_{\gamma+3},
			\end{equation*}
			\begin{equation*}
				16\eta_{\gamma}\eta_{\gamma+2}=(\frac{4^{2n}-1}{15}-1)n+(-7+2x)(\eta_\gamma+\eta_{\gamma+2})+(-3-2x)(\eta_{\gamma+1}+\eta_{\gamma+3})+4(n-1).
			\end{equation*}
		\end{lemma}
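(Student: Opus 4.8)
The plan is to obtain all three identities as special cases of the cyclotomic multiplication rule in Lemma \ref{lem4.3}(2), followed by a substitution of the explicit cyclotomic numbers of order four from Lemma \ref{lem4.1}. First I would set $\mu=\gamma$, then $\mu=\gamma+1$, then $\mu=\gamma+2$ in Lemma \ref{lem4.3}(2). Since $\gamma\not\equiv\gamma+2$ and $\gamma\not\equiv\gamma+3\pmod 4$, the Kronecker term $\frac{n-1}{4}\delta_{\gamma,\mu+2}$ vanishes in the first two cases and contributes exactly $\frac{n-1}{4}$ in the case $\mu=\gamma+2$ (where $\mu+2\equiv\gamma$). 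Then I would reindex the sum by $k=\gamma-\nu$, so that $\eta_\nu=\eta_{\gamma-k}$ with $\eta_{\gamma-1}=\eta_{\gamma+3}$, $\eta_{\gamma-2}=\eta_{\gamma+2}$, $\eta_{\gamma-3}=\eta_{\gamma+1}$, and $k$ ranging over $\mathbb{Z}/4\mathbb{Z}$. This turns the three products (modulo $4^{2n}-1$) into $\eta_\gamma^2\equiv\sum_{k=0}^3(k+2,k)\,\eta_{\gamma-k}$, $\eta_\gamma\eta_{\gamma+1}\equiv\sum_{k=0}^3(k+2,k+1)\,\eta_{\gamma-k}$ and $\eta_\gamma\eta_{\gamma+2}\equiv\frac{n-1}{4}+\sum_{k=0}^3(k+2,k+2)\,\eta_{\gamma-k}$, each a linear combination of $\eta_\gamma,\eta_{\gamma+1},\eta_{\gamma+2},\eta_{\gamma+3}$ with cyclotomic-number coefficients.

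Next I would read the coefficients off Lemma \ref{lem4.1}. For the square the pairs appearing are $(2,0),(1,3),(0,2),(3,1)$ attached to $\eta_\gamma,\eta_{\gamma+1},\eta_{\gamma+2},\eta_{\gamma+3}$, so $16\eta_\gamma^2=A\eta_\gamma+B\eta_{\gamma+1}+C\eta_{\gamma+2}+\overline{B}\eta_{\gamma+3}$, which is the first line. Putting $y=-1$ gives $A=\overline{B}=n-7+2x$, $B=n+9+2x$, $C=n+1-6x$ and $D=n-3-2x$; substituting these and separating the common summand $n$ in each coefficient produces the term $n(\eta_\gamma+\eta_{\gamma+1}+\eta_{\gamma+2}+\eta_{\gamma+3})$ plus the stated $x$-dependent combinations, giving the second line for $16\eta_\gamma^2$ and the asserted formulas for $16\eta_\gamma\eta_{\gamma+1}$ (pairs $(2,1),(1,0),(0,3),(3,2)$, i.e.\ coefficients $D,D,\overline{B},B$) and $16\eta_\gamma\eta_{\gamma+2}$ (pairs $(2,2),(1,1),(0,0),(3,3)$, i.e.\ coefficients $A,D,A,D$). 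For the final ``closed'' lines I would invoke $\eta_0+\eta_1+\eta_2+\eta_3=\sum_{i\in\mathbb{F}_n^{\ast}}4^{2i}\equiv\frac{4^{2n}-1}{15}-1\pmod{4^{2n}-1}$ — valid since $15\mid 4^{2n}-1$ and $D_0\cup D_1\cup D_2\cup D_3=\mathbb{F}_n^{\ast}$ — to replace $n\sum_\gamma\eta_\gamma$ by $(\frac{4^{2n}-1}{15}-1)n$, and use $16\cdot\frac{n-1}{4}=4(n-1)$ to clear the Kronecker contribution in $16\eta_\gamma\eta_{\gamma+2}$.

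The step I expect to be the main obstacle is not any genuine analytic difficulty but the index bookkeeping in the reduction above: the shift $(\gamma-\nu+2,\mu-\nu)$ must be carried correctly through the reindexing, and each resulting pair $(i,j)$ matched to the right row of Lemma \ref{lem4.1}, since a single misplaced shift yields a wrong coefficient and destroys the symmetry of the final formula. To guard against this I would cross-check the three coefficient patterns listed above and verify internal consistency by summing the three identities over $\gamma\in\{0,1,2,3\}$ and comparing the outcome with the direct expansion of $\bigl(\sum_{\gamma=0}^{3}\eta_\gamma\bigr)^2=\sum_{\gamma=0}^{3}\eta_\gamma^2+2\sum_{\gamma=0}^{3}\eta_\gamma\eta_{\gamma+1}+\sum_{\gamma=0}^{3}\eta_\gamma\eta_{\gamma+2}$.
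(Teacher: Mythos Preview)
Your proposal is correct and is exactly the intended derivation: specialize Lemma~\ref{lem4.3}(2) to $\mu=\gamma,\gamma+1,\gamma+2$, reindex, read off the cyclotomic numbers from Lemma~\ref{lem4.1} with $y=-1$ (so that $A=\overline{B}=n-7+2x$, $B=n+9+2x$, $C=n+1-6x$, $D=n-3-2x$), and then replace $\eta_0+\eta_1+\eta_2+\eta_3$ by $\frac{4^{2n}-1}{15}-1$. The paper itself omits the proof of this lemma, referring the reader to \cite{Zhang} and \cite{Jing} for analogous computations, so there is no alternative argument to compare against; your bookkeeping (in particular the coefficient lists $(A,B,C,\overline{B})$, $(D,D,\overline{B},B)$, $(A,D,A,D)$ for the three products) checks out.
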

		\begin{lemma}\label{lem4.5} Let $n=4f+1=x^2+4y^2$ be an odd prime, $f$ be odd and $y=-1$.  Then
			$$(\eta_0-\eta_1)^2+ (\eta_2-\eta_3)^2=G,\ (\eta_0-\eta_3)^2+ (\eta_2-\eta_1)^2=-G.$$\ 
		\end{lemma}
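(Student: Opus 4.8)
The plan is to reduce both identities to the second-moment formulas of Lemma~\ref{lem4.4}, since each side is quadratic in the Gauss periods $\eta_0,\dots,\eta_3$. First I would expand, for an arbitrary index $\gamma$ (read modulo $4$),
$$16(\eta_\gamma-\eta_{\gamma+1})^2=16\eta_\gamma^2+16\eta_{\gamma+1}^2-32\,\eta_\gamma\eta_{\gamma+1},$$
and substitute the expressions for $16\eta_\gamma^2$, for $16\eta_{\gamma+1}^2$ (the formula for $16\eta_\delta^2$ with $\delta=\gamma+1$), and for $16\eta_\gamma\eta_{\gamma+1}$ supplied by Lemma~\ref{lem4.4}. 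The common summand $\bigl(\tfrac{4^{2n}-1}{15}-1\bigr)n$ enters with total coefficient $1+1-2=0$ and therefore drops out, leaving a pure linear combination of $\eta_\gamma,\eta_{\gamma+1},\eta_{\gamma+2},\eta_{\gamma+3}$ with integer coefficients depending only on $x$. Collecting terms I expect the ``master identity''
$$16(\eta_\gamma-\eta_{\gamma+1})^2\equiv(8x-8)\eta_\gamma+(8x+8)\eta_{\gamma+1}+(24-8x)\eta_{\gamma+2}-(8x+24)\eta_{\gamma+3}\pmod{4^{2n}-1}.$$

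Next I would apply this master identity twice. For the first equation I add the instances $\gamma=0$ and $\gamma=2$; since indices are taken modulo $4$, each $\eta_\nu$ then receives two coefficients whose $x$-parts are opposite ($+8x$ and $-8x$), so the result is independent of $x$ and equals $16\eta_0-16\eta_1+16\eta_2-16\eta_3=16G$. As $4^{2n}-1$ is odd, $16$ is invertible in $\mathbb{Z}_{4^{2n}-1}$, so cancelling it gives $(\eta_0-\eta_1)^2+(\eta_2-\eta_3)^2\equiv G$. For the second equation I would rewrite $(\eta_0-\eta_3)^2=(\eta_3-\eta_4)^2$, which is the $\gamma=3$ case of the master identity, and $(\eta_2-\eta_1)^2=(\eta_1-\eta_2)^2$, which is the $\gamma=1$ case; adding these two instances and dividing by $16$ produces $-(\eta_0-\eta_1+\eta_2-\eta_3)=-G$ by exactly the same cancellation.

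There is no genuine obstacle here: once Lemma~\ref{lem4.4} is available the argument is a finite computation. The only points that need care are (i) bookkeeping the index shifts modulo $4$ carefully when forming $16\eta_{\gamma+1}^2$ and when re-indexing the squares for the second identity, and (ii) checking that the scalar $16$ may be cancelled in $\mathbb{Z}_{4^{2n}-1}$, which is immediate from $\gcd(16,4^{2n}-1)=1$. As a sanity check one can also expand $G^2=(\eta_0-\eta_1+\eta_2-\eta_3)^2$ using the $16\eta_\gamma\eta_\mu$ relations and compare with $n-\tfrac{4^{2n}-1}{15}$ from Lemma~\ref{lem4.3}(1); this independently confirms the numerical constants entering the master identity.
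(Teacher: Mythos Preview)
Your proof is correct: the ``master identity'' follows exactly as you describe from the three formulas in Lemma~\ref{lem4.4}, the constant term $\bigl(\tfrac{4^{2n}-1}{15}-1\bigr)n$ cancels, and adding the instances $\gamma\in\{0,2\}$ (respectively $\gamma\in\{1,3\}$) collapses the $x$-dependence to yield $16G$ (respectively $-16G$); the cancellation of $16$ is legitimate since $4^{2n}-1$ is odd. The paper does not actually give a proof of this lemma---it refers the reader to \cite{Zhang} and \cite{Jing}---so there is nothing to compare against in the text itself, but your reduction to Lemma~\ref{lem4.4} is precisely the intended route given the ordering of the lemmas.
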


	From construction (\ref{E1}) we know that
	$a_{2j}=c^0_j,\ a_{2j+1}=c^1_{j+\lambda}\oplus e_0,\ 
	b_{2j}=c^2_j\oplus e_1,\ b_{2j+1}=c^3_{j+\lambda}\oplus e_2,\ 0\leq j\leq n-1$, then from the definition of $S(N)$ we have
	\begin{align*}
		S(4)\notag
		=&\sum_{i=0}^{N-1}s_i4^i\pmod{4^{N}-1}\notag\\ =&\sum_{i=0}^{2n-1}\phi(a_i,b_i)4^i \pmod{4^{2n}-1}\notag\\
		=&\sum_{\substack{i=0\\a_{i}=0,b_{i}=1}}^{2n-1}4^i+\sum_{\substack{i=0\\a_{i}=1,b_{i}=1}}^{2n-1}2\cdot4^i+\sum_{\substack{i=0\\a_{i}=1, b_{i}=0}}^{2n-1}3\cdot4^i
		=2\sum_{\substack{i=0\\a_{i}=1}}^{2n-1}4^i+\sum_{\substack{i=0\\a_{i}+b_{i}=1}}^{2n-1}4^i\pmod{4^{2n}-1}\notag\\
		=&2(\sum_{\substack{j=0\\a_{2j}=1}}^{n-1}4^{2j}+\sum_{\substack{j=0\\a_{2j+1}=1}}^{n-1}4^{2j+1})+(\sum_{\substack{j=0\\a_{2j}+b_{2j}=1}}^{2n-1}4^{2j}+\sum_{\substack{j=0\\a_{2j+1}+b_{2j+1}=1}}^{2n-1}4^{2j+1})\pmod{4^{2n}-1}\notag\\
		=&2(\sum_{\substack{j=0\\c^0_j=1}}^{n-1}4^{2j}+\sum_{\substack{j=0\\c^1_{j+\lambda}\oplus e_0=1}}^{n-1}4^{2j+1})+
		(\sum_{\substack{j=0\\c^0_j+c^2_j\oplus e_1=1}}^{n-1}4^{2j}+\sum_{\substack{j=0\\c^1_{j+\lambda}\oplus e_0+c^3_{j+\lambda}\oplus e_2=1}}^{n-1}4^{2j+1})\pmod{4^{2n}-1}\notag.
	\end{align*}
	
	If $e=(0,0,0)$, we have
	\begin{align}\label{ES5}
		S(4)=&2(\sum_{\substack{j=0\\c^0_j=1}}^{n-1}4^{2j}+\sum_{\substack{j=0\\c^1_{j+\lambda}=1}}^{n-1}4^{2j+1})+
		(\sum_{\substack{j=0\\c^0_j+c^2_j=1}}^{n-1}4^{2j}+\sum_{\substack{j=0\\c^1_{j+\lambda}+c^3_{j+\lambda}=1}}^{n-1}4^{2j+1})\pmod{4^{2n}-1}\notag\\
		=&2\sum_{\substack{j=0\\c^0_j=1}}^{n-1}4^{2j}+2\cdot 4^n\cdot\sum_{\substack{j=0\\c^1_{j}=1}}^{n-1}4^{2j}+
		(\sum_{\substack{j=0\\c^0_j+c^2_j=1}}^{n-1}4^{2j}+4^n\cdot\sum_{\substack{j=0\\c^1_{j}+c^3_{j}=1}}^{n-1}4^{2j})\pmod{4^{2n}-1}.
	\end{align}
		Taking $(c^0,c^1,c^2,c^3)=(t_2,t_1,t_2,t_1)$ in Lemma \ref{lem7} as an example, then we obtain
		\begin{align}\label{ES3}
S(4)=&2\sum_{\substack{j=0\\t_2(j)=1}}^{n-1}4^{2j}+2\cdot 4^n\cdot\sum_{\substack{j=0\\t_1(j)=1}}^{n-1}4^{2j}\pmod{4^{2n}-1}\notag\\
			=&2\sum_{\substack{j=0\\j\in D_0\cup D_2}}^{n-1}4^{2j}+2\cdot 4^n\cdot\sum_{\substack{j=0\\j\in D_0\cup D_1}}^{n-1}4^{2j}\pmod{4^{2n}-1}\\
			=&2(\eta_0+\eta_2)+2\cdot 4^n(\eta_0+\eta_1)\pmod{4^{2n}-1}\notag\\
			=&\left\{ \begin{array}{ll}
				2(\eta_0+\eta_2+\eta_0+\eta_1)\pmod{4^{n}-1}\\
				2(\eta_0+\eta_2-\eta_0-\eta_1)\pmod{4^{n}+1}
			\end{array}\right.\notag\\
			=&\left\{ \begin{array}{ll}
				2(\eta_0+\sum\limits_{j=1}^{n-1}4^{2j}-\eta_3)\pmod{4^{n}-1}\\
				2(\eta_2-\eta_1)\pmod{4^{n}+1}
			\end{array}\right.\notag\\
			=&\left\{ \begin{array}{ll}
				2(\eta_0-\eta_3-1)\pmod{\frac{4^{n}-1}{3}}\\
				2(\eta_2-\eta_1)\pmod{\frac{4^{n}+1}{5}}
			\end{array}\right..\notag
		\end{align}
		From the above formula, we can see that to determine $\gcd(S(4), 4^{2n-1})$, we need to calculate $\gcd(\eta_{\gamma}-\eta_{\gamma-1}-1, \frac{4^{n}-1}{3})$ and $\gcd(\eta_{\gamma}-\eta_{\gamma-1}, \frac{4^{n}+1}{5})$ first.
		\begin{lemma}\label{lem5.1}
			Assume that $d_1$ is the maximum divisor of	$\gcd(\eta_{\gamma}-\eta_{\gamma-1}-1, \frac{4^{n}-1}{3})$, then we have
			$d_1\mid n^2+3n+4$. 
		\end{lemma}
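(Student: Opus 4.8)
The plan is to show that every prime power $\pi^k$ dividing $d_1$ also divides $n^2+3n+4$; since $d_1$ is a product of such prime powers, this yields $d_1\mid n^2+3n+4$. So fix a prime $\pi$ and an exponent $k\geq 1$ with $\pi^k\mid d_1$.

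First I would observe that the small primes cause no trouble. Since $d_1\mid\frac{4^n-1}{3}$ and $n=4f+1$ is a prime with $n\geq 5$, we have $3\nmid n$, hence $9\nmid 4^n-1$ and so $3\nmid d_1$; also $n$ is odd, so $5\nmid 4^n-1$, and $4^n-1$ is odd. Thus $\pi$ is odd and $\pi\notin\{3,5\}$, whence $\pi^k\mid 4^{2n}-1$ and $\pi^k\mid\frac{4^{2n}-1}{15}$. Consequently Lemma \ref{lem4.3}(1) reduces modulo $\pi^k$ to
\[
G^2\equiv n\pmod{\pi^k},
\]
while $\pi^k\mid d_1\mid\eta_\gamma-\eta_{\gamma-1}-1$ gives $\eta_\gamma-\eta_{\gamma-1}\equiv 1\pmod{\pi^k}$.

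The main step is to eliminate the Gauss periods $\eta_\nu$ in favour of the quadratic Gauss sum $G$. Write $D_\nu=\eta_\nu-\eta_{\nu-1}$ (indices mod $4$) and $\varepsilon=(-1)^\gamma\in\{1,-1\}$. Directly from $G=\eta_0-\eta_1+\eta_2-\eta_3$ one has the exact identity $D_\gamma+D_{\gamma+2}=\varepsilon G$, and Lemma \ref{lem4.5} supplies the companion identity $D_\gamma^2+D_{\gamma+2}^2=-\varepsilon G$. Squaring the first and subtracting the second,
\[
2D_\gamma D_{\gamma+2}=(D_\gamma+D_{\gamma+2})^2-(D_\gamma^2+D_{\gamma+2}^2)=G^2+\varepsilon G.
\]
Now I would substitute $D_\gamma\equiv 1\pmod{\pi^k}$: the first identity gives $D_{\gamma+2}\equiv\varepsilon G-1$, and feeding this into the displayed relation yields, after cancellation, $G^2-\varepsilon G+2\equiv 0\pmod{\pi^k}$. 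Replacing $G^2$ by $n$ gives $\varepsilon G\equiv n+2$, and squaring once more (using $G^2\equiv n$ again) gives $n\equiv(n+2)^2$, that is,
\[
n^2+3n+4\equiv 0\pmod{\pi^k}.
\]
Since $\pi^k$ was an arbitrary prime power dividing $d_1$, we conclude $d_1\mid n^2+3n+4$.

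I do not expect a genuine obstacle here: once Lemmas \ref{lem4.3} and \ref{lem4.5} are in hand the computation is short and self-contained. The two points I would be careful about are (i) the elementary exclusion of the primes $2,3,5$, which is precisely what legitimizes reducing the congruences of Lemmas \ref{lem4.3} and \ref{lem4.5} modulo $\pi^k$, and (ii) the fact that the whole manipulation involves no division, so it is valid for arbitrary prime powers and not merely for primes. One could instead expand $16D_\gamma^2$ directly via Lemma \ref{lem4.4} together with $\eta_0+\eta_1+\eta_2+\eta_3\equiv -1\pmod{\pi^k}$, but routing through Lemma \ref{lem4.5} is cleaner and shorter.
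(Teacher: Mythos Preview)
Your argument is correct and is essentially the paper's own proof: both substitute $\eta_\gamma-\eta_{\gamma-1}\equiv 1$ into the identities of Lemmas~\ref{lem4.3} and~\ref{lem4.5} to obtain $\pm G\equiv n+2$ and then square. One small simplification: since $d_1\mid\frac{4^n-1}{3}\mid\frac{4^{2n}-1}{15}$ directly, your exclusion of the primes $2,3,5$ and the reduction to individual prime powers are unnecessary---the whole computation can be run modulo $d_1$ in one pass, as the paper does.
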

	\begin{proof}
	For $\gamma=1$, on the one hand, from $d_1\mid\gcd(\eta_{1}-\eta_{0}-1,  \frac{4^{n}-1}{3})$, we have $$1\equiv\eta_{1}-\eta_{0}\pmod{d_1}.$$
		And on the other hand,
		$$G=\eta_{0}-\eta_{1}+\eta_{2}-\eta_{3}\equiv(\eta_{2}-\eta_{3})-1\pmod{d_1}.$$
		Combining with Lemmas \ref{lem4.5} and \ref{lem4.3} we obtain
		$$G\equiv(\eta_{1}-\eta_{0})^2+(\eta_{2}-\eta_{3})^2\equiv 1+(G+1)^2\equiv n+2+2G\pmod{d_1},$$
		which shows $-G\equiv n+2\pmod{d_1}$,  $n\equiv(-G)^2\equiv (n+2)^2\pmod{d_1}$, hence $d_1\mid n^2+3n+4$.  For the three other cases, we have the same result.
	\end{proof}
\begin{lemma}\label{lem5.2}
	Assume that $d_2$ is the maximum divisor of	$\gcd(\eta_{\gamma}-\eta_{\gamma-1}, \frac{4^{n}+1}{5})$, then we have
	$d_2=1$. 
\end{lemma}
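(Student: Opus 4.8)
The plan is to follow the scheme of the proof of Lemma~\ref{lem5.1}, but now --- because the modulus $\frac{4^{n}+1}{5}$ has much more rigid arithmetic than $\frac{4^{n}-1}{3}$ --- to push the conclusion all the way to $d_2=1$. As there, it suffices to treat a single index, say $\gamma=1$; the remaining three are handled identically, using whichever of the two identities of Lemma~\ref{lem4.5} features the difference $\eta_{\gamma}-\eta_{\gamma-1}$. So let $d_2$ be the largest divisor of $\gcd\bigl(\eta_{1}-\eta_{0},\,\frac{4^{n}+1}{5}\bigr)$, work modulo $d_2$, and aim at $d_2=1$. I would open with two reductions. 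First, since $\frac{4^{2n}-1}{15}=\frac{4^{n}-1}{3}\cdot\frac{4^{n}+1}{5}$ and $d_2\mid\frac{4^{n}+1}{5}$, we have $\frac{4^{2n}-1}{15}\equiv0\pmod{d_2}$, so Lemma~\ref{lem4.3}(1) reduces to $G^{2}\equiv n\pmod{d_2}$. Second, $\eta_{0}\equiv\eta_{1}\pmod{d_2}$ makes $G=\eta_{0}-\eta_{1}+\eta_{2}-\eta_{3}\equiv\eta_{2}-\eta_{3}\pmod{d_2}$.

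The key step is then to feed $\eta_{0}\equiv\eta_{1}$ into the first identity of Lemma~\ref{lem4.5}: $(\eta_{0}-\eta_{1})^{2}+(\eta_{2}-\eta_{3})^{2}=G$ reduces modulo $d_2$ to $(\eta_{2}-\eta_{3})^{2}\equiv G$, and since $\eta_{2}-\eta_{3}\equiv G$ this says $G^{2}\equiv G\pmod{d_2}$. Together with $G^{2}\equiv n$ this forces $G\equiv n\pmod{d_2}$, whence $n^{2}\equiv G^{2}\equiv n\pmod{d_2}$, i.e.\ $d_2\mid n(n-1)$. Since $n$ is an odd prime, Fermat's little theorem gives $4^{n}\equiv4\pmod{n}$, so $\gcd(n,4^{n}+1)=\gcd(n,5)=1$ for $n\neq5$, and hence $\gcd(n,d_2)=1$; therefore $d_2\mid n-1$.

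It remains to prove the purely arithmetic fact $\gcd\bigl(n-1,\,\frac{4^{n}+1}{5}\bigr)=1$, which is where the content of the lemma lies. Let $p$ be a prime dividing $d_2$. Then $p$ is odd, $p\mid n-1$ and $p\mid 4^{n}+1$, so $4^{2n}\equiv1\pmod{p}$ and $\mathrm{ord}_{p}(4)\mid\gcd(2n,p-1)$. Because $p\mid n-1$ we have $p<n$, so $n\nmid p-1$; as $n$ is odd this forces $\gcd(2n,p-1)=2$, hence $\mathrm{ord}_{p}(4)\mid2$ and $p\mid15$. But $4^{n}+1\equiv2\pmod{3}$ rules out $p=3$, and $p=5$ would require $25\mid4^{n}+1$, which, since $\mathrm{ord}_{25}(4)=10$ with $4^{5}\equiv-1\pmod{25}$, happens only for $n\equiv5\pmod{10}$, i.e.\ only for the prime $n=5$. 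So for $n\neq5$ the integer $d_2$ has no prime divisor, i.e.\ $d_2=1$.

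I do not expect a single hard obstacle: the one point that needs the right idea is observing that $\eta_{0}\equiv\eta_{1}\pmod{d_2}$, passed through Lemmas~\ref{lem4.5} and~\ref{lem4.3}(1), pins $G$ down to $G\equiv n\pmod{d_2}$ and thereby collapses the statement to the elementary order computation above; notably, Lemmas~\ref{lem4.1} and~\ref{lem4.4} are not needed here at all. The only delicate spot is the boundary prime $n=5$ (where $25\mid4^{5}+1$, so the conclusion $d_2=1$ genuinely fails) and should be excluded from the statement; since the corresponding period is $2n=10$, this exception is of no practical consequence.
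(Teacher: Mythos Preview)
Your argument is correct and follows essentially the same route as the paper's: use Lemmas~\ref{lem4.3}(1) and~\ref{lem4.5} to pin down $G$ modulo $d_2$ and deduce $d_2\mid n(n-1)$, then eliminate prime divisors of $d_2$ via the multiplicative order of $4$. The cosmetic difference is that you work with $\gamma=1$ (obtaining $G\equiv n$) while the paper takes $\gamma=0$ and the second identity of Lemma~\ref{lem4.5} (obtaining $-G\equiv n$); both yield $d_2\mid n(n-1)$.

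Where your treatment is sharper is the endgame. The paper simply asserts that for any prime $\pi\mid\frac{4^{n}+1}{5}$ the order of $4$ modulo $\pi$ equals $2n$, and derives the contradiction $2n\mid\pi-1$. You correctly observe that the order could also be $2$, which forces $\pi=5$ and hence (since then $25\mid 4^{n}+1$) the prime $n=5$; for this value one checks directly that $\gcd(\eta_{1}-\eta_{0},\,\tfrac{4^{5}+1}{5})=\gcd(240,205)=5$, so $d_2=1$ genuinely fails. This is a real gap in the paper's proof that your version closes by explicitly excluding $n=5$.
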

\begin{proof}
	 For $\gamma=0$, we have $$0\equiv\eta_{0}-\eta_{3}\pmod{d_2}.$$
	$$G=\eta_{0}-\eta_{1}+\eta_{2}-\eta_{3}\equiv\eta_{2}-\eta_{1}\pmod{d_2}.$$
	From Lemmas \ref{lem4.5} and \ref{lem4.3} we obtain
	$$-G\equiv(\eta_{0}-\eta_{3})^2+(\eta_{2}-\eta_{1})^2\equiv G^2\equiv n\pmod{d_2},$$
	which implies  $n\equiv(-G)^2\equiv n^2\pmod{d_2}$, $d_2\mid n^2-n=n(n-1)$, then we have $d_2=n$ or $d_2\mid n-1$. Assume that $\pi$ is a prime divisor of $d_2$, since $d_2\mid\frac{4^{n}+1}{5}$,  we have the order of 4 mod $\pi$ is $2n$, and $2n\mid \pi-1$, which contradicts $\pi\mid n$ or $\pi\mid n-1$, therefore we arrive at $d_2=1$. For the three other cases, we have the same result.
\end{proof}
	
		\begin{theorem}\label{th4.1}
			Let $t_i\ (1\leq i\leq 6)$ be six binary sequences of length $n=4f+1$ with support sets $D_0\cap D_1$, $D_0\cap D_2$, $D_0\cap D_3$, $D_1\cap D_2$, $D_1\cap D_3$, $D_2\cap D_3$, respectively, and  $s=s(a,b)$ be the optimal quaternary sequence given by Lemma \ref{lem7}, then the 4-adic complexity of the sequence $s$ is
			\begin{align*}
				C_{4}(s)
				=\left\{ \begin{array}{ll}
					\log_{4}(\frac{4^{2n}-1}{15d_1}),\ & \text{if}\ e=(0,0,0),\ (1,0,1)\ \text{and}\ 3\mid f \ \text{or}\ e=(1,1,0),\ (0,1,1)\ \text{and}\ 3\mid f+1 \\
					\log_{4}(\frac{4^{2n}-1}{5d_1}),\ &\text{if}\ e=(0,0,0),\ (1,0,1)\ \text{and}\ 3\nmid f \ \text{or}\ e=(1,1,0),\ (0,1,1)\ \text{and}\ 3\nmid f+1
				\end{array} \right.,
			\end{align*}
		where $d_1\mid \gcd(n^2+3n+4,\ \frac{4^n-1}{3})$.
		\end{theorem}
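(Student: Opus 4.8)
The plan is to pin down $d=\gcd(S(4),4^{2n}-1)$ by factoring $4^{2n}-1=(4^n-1)(4^n+1)$ into coprime parts and splitting each one further, $4^n-1=3\cdot\frac{4^n-1}{3}$ and $4^n+1=5\cdot\frac{4^n+1}{5}$. Since $n$ is a prime larger than $3$, $4$ has order $3$ modulo $9$ while $3\nmid n$, so $9\nmid 4^n-1$, hence $\gcd(3,\frac{4^n-1}{3})=1$ and $\gcd(S(4),4^n-1)=\gcd(S(4),3)\cdot\gcd(S(4),\frac{4^n-1}{3})$; once the analysis below yields $\gcd(S(4),\frac{4^n+1}{5})=1$, the factor $\gcd(S(4),4^n+1)$ is a power of $5$. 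Thus it suffices to evaluate $S(4)$ modulo $3$, modulo $\frac{4^n-1}{3}$, modulo $5$ and $25$, and modulo $\frac{4^n+1}{5}$ in each case.

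First I would carry out, for each of the eight admissible tuples $(c^0,c^1,c^2,c^3)$ of Lemma~\ref{lem7} and each even-weight $e$, the reduction begun in~(\ref{ES5})--(\ref{ES3}) (the substitution $c^0\leftrightarrow c^1$ interchanges paired tuples, leaves $S(4)\bmod(4^n-1)$ unchanged and flips the sign of its $\eta$-part modulo $4^n+1$, so it roughly halves the bookkeeping). With $a_{2j}=c^0_j$, $a_{2j+1}=c^1_{j+\lambda}\oplus e_0$, $b_{2j}=c^0_j\oplus e_1$, $b_{2j+1}=c^1_{j+\lambda}\oplus e_2$, one has $a_{2j}+b_{2j}\equiv e_1$ and $a_{2j+1}+b_{2j+1}\equiv e_0+e_2\pmod2$, so the term $\sum_{a_i+b_i\equiv1}4^i$ of $S(4)$ is either empty (if $e_1=0$) or the full geometric sum $\frac{4^{2n}-1}{3}$ (if $e_1=1$). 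Since each $c^l$ is a $t_k$ whose support is a union of two cyclotomic classes of order four, $\sum_{a_{2j}=1}4^{2j}$ and $\sum_{a_{2j+1}=1}4^{2j+1}$ become sums of $\eta_\gamma$'s — with a factor $4^{-n}\equiv 4^n\pmod{4^{2n}-1}$ on the $\lambda$-shifted odd part and an extra summand $1$ (the contribution of $0\in\mathbb{F}_n$) when $e_0=1$. Substituting $4^n\equiv1$, resp. $4^n\equiv-1$, and using $\eta_0+\eta_1+\eta_2+\eta_3=\frac{4^{2n}-1}{15}-1$ with $\frac{4^{2n}-1}{15}=\frac{4^n-1}{3}\cdot\frac{4^n+1}{5}\equiv0$ modulo both $\frac{4^n-1}{3}$ and $\frac{4^n+1}{5}$, I expect in every case
\[
S(4)\equiv 2(\eta_\gamma-\eta_{\gamma-1}-\varepsilon_-)\pmod{\tfrac{4^n-1}{3}},\qquad S(4)\equiv\pm2(\eta_{\gamma'}-\eta_{\gamma'-1}-\varepsilon_+)\pmod{\tfrac{4^n+1}{5}},
\]
for suitable $\gamma,\gamma'\in\{0,1,2,3\}$ and $\varepsilon_\pm\in\{0,1\}$; it is precisely the special form of the tuples in Lemma~\ref{lem7} that forces the two indices occurring here to be \emph{consecutive}.

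Next I would apply Lemma~\ref{lem5.1} — and, when $\varepsilon_-=0$, the same computation with $G$ and $G^2$, which now yields $G\equiv n$ and $G^2\equiv n\pmod{\frac{4^n-1}{3}}$, hence a divisor of $n(n-1)$ that must equal $1$ since $n$ is prime — to conclude that $\gcd(S(4),\frac{4^n-1}{3})=d_1$ with $d_1\mid\gcd(n^2+3n+4,\frac{4^n-1}{3})$. Lemma~\ref{lem5.2}, together with the analogous bound for $\gcd(\eta_{\gamma'}-\eta_{\gamma'-1}-1,\frac{4^n+1}{5})$ (the identical chase, now ending at $\pi\mid n^2+3n+4$ in contradiction with $n\mid\pi-1$), gives $\gcd(S(4),\frac{4^n+1}{5})=1$, so $\gcd(S(4),4^n+1)$ is a power of $5$. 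As $\eta_\gamma\equiv f\pmod5$ (because $4^{2i}\equiv1\pmod5$), the congruences above give $S(4)\equiv\mp2\varepsilon_+\pmod5$: if $\varepsilon_+=0$ then $5\mid S(4)$ and a short computation modulo $25$ (using $G^2\equiv n-\frac{4^{2n}-1}{15}$ and $25\mid\frac{4^{2n}-1}{15}$ whenever $25\mid 4^n+1$) shows $25\nmid S(4)$, so $\gcd(S(4),4^n+1)=5$; if $\varepsilon_+=1$ then $5\nmid S(4)$ and $\gcd(S(4),4^n+1)=1$. Finally, reducing $S(4)$ modulo $3$ via $4\equiv1\pmod3$ — equivalently, counting $\#\{i:a_i=1\}$ modulo $3$, which is governed by $e_0$ and by the class size $|D_\gamma|=f$ — decides whether $3\mid S(4)$ under a congruence on $f$ that is read off directly, producing the factor $15$ versus $5$ in the statement. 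Collecting the three local contributions gives $d=15d_1$ or $d=5d_1$, whence $C_4(s)=\log_4\frac{4^{2n}-1}{d}$.

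The main difficulty is, first, the case bookkeeping — eight tuples against the admissible $e$'s, with the $\lambda$-shift and the exceptional point $0\in\mathbb{F}_n$ to follow in each — and, second and more essentially, showing that $\gcd(S(4),4^n+1)$ is exactly $5$ rather than $25$: this is not supplied by Lemma~\ref{lem5.2} and needs the modulo-$25$ Gauss-sum identity indicated above.
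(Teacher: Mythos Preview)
Your plan coincides with the paper's: reduce $S(4)$ to $2(\eta_\gamma-\eta_{\gamma-1}-\varepsilon)$ modulo each of $\tfrac{4^n-1}{3}$ and $\tfrac{4^n+1}{5}$, invoke Lemmas~\ref{lem5.1}--\ref{lem5.2} (and the obvious variants with the constant shifted by $1$) for those two factors, and read off $S(4)\bmod 3$ and $\bmod 5$ by counting support sizes. One simplification you can make: the modulo-$25$ check is unnecessary, because the order of $4$ modulo $25$ is $10$, so for a prime $n>5$ one has $n\not\equiv 5\pmod{10}$ and hence $25\nmid 4^n+1$; thus $\gcd(S(4),4^n+1)\in\{1,5\}$ follows immediately from $\gcd\bigl(S(4),\tfrac{4^n+1}{5}\bigr)=1$.
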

		\begin{proof}
			From equation (\ref{ES3}) we know that
		\begin{align*}
			S(4)&\equiv2\sum_{\substack{j=0\\j\in D_0\cup D_2}}^{n-1}4^{2j}+2\cdot 4^n\cdot\sum_{\substack{j=0\\j\in D_0\cup D_1}}^{n-1}4^{2j}\pmod{4^{2n}-1}\\
			&=\left\{ \begin{array}{ll}
				2\sum\limits_{\substack{j=0\\j\in D_0\cup D_2}}^{n-1}1+2\cdot \sum\limits_{\substack{j=0\\j\in D_0\cup D_1}}^{n-1}1\pmod{3}\\
				2\sum\limits_{\substack{j=0\\j\in D_0\cup D_2}}^{n-1}1-2\cdot \sum\limits_{\substack{j=0\\j\in D_0\cup D_1}}^{n-1}1\pmod{5}
			\end{array} \right.\\
		&=\left\{ \begin{array}{ll}
			2\cdot 2f+2\cdot 2f\pmod{3}\\
				2\cdot 2f-2\cdot 2f\pmod{5}
		\end{array} \right.\\
	&=\left\{ \begin{array}{ll}
		2f\pmod{3}\\
		0\pmod{5}
	\end{array} \right..
		\end{align*}
	
	 Similarly, for $e=(1,0,1)$, we have
	\begin{align*}
		S(4)
		=&2\sum_{\substack{j=0\\j\in D_{i_1}\cup D_{i_2}}}^{n-1}4^{2j}+2\cdot 4^n\cdot\sum_{\substack{j=0\\j\in D_{i_3}\cup D_{i_4}}}^{n-1}4^{2j}\pmod{4^{2n}-1}\\
		&=\left\{ \begin{array}{ll}
			2f\pmod{3}\\
			0\pmod{5}
		\end{array} \right.,
	\end{align*}
	 and for $e=(1,1,0),\ (0,1,1)$, we obtain
	\begin{align*}
		S(4)
		=&2\sum_{\substack{j=0\\j\in D_{i_1}\cup D_{i_2}}}^{n-1}4^{2j}+2\cdot 4^n\cdot\sum_{\substack{j=0\\j\in D_{i_3}\cup D_{i_4}}}^{n-1}4^{2j}+	(\sum_{\substack{j=0}}^{n-1}4^{2j}+4^n\cdot\sum_{\substack{j=0}}^{n-1}4^{2j})\pmod{4^{2n}-1}\\
		&=\left\{ \begin{array}{ll}
			2\cdot 2f+2\cdot 2f+(4f+1)+(4f+1)\pmod{3}\\
			2\cdot 2f-2\cdot 2f\pmod{5}
		\end{array} \right.\\
		&=\left\{ \begin{array}{ll}
			2f+2\pmod{3}\\
			0\pmod{5}
		\end{array} \right..
	\end{align*}
Combining Lemmas \ref{lem5.1} and \ref{lem5.2} we prove the theorem.
		\end{proof}
	
		\begin{theorem}\label{th4.2}
		Let $t_i\ (1\leq i\leq 6)$ be six binary sequences of length $n=4f+1$ with support sets $D_0\cap D_1$, $D_0\cap D_2$, $D_0\cap D_3$, $D_1\cap D_2$, $D_1\cap D_3$, $D_2\cap D_3$, respectively, and $s=s(a,b)$ be the optimal quaternary sequence given by Lemma \ref{lem8} or Lemma \ref{lem9}, then the 4-adic complexity of the sequence $s$ satisfies
		\begin{align*}
			C_{4}(s)
			\geq\log_{4}(\frac{4^n+1}{5}).
		\end{align*}
	\end{theorem}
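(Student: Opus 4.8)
Since $C_{4}(s)=\log_{4}\frac{4^{2n}-1}{d}$ with $d=\gcd(S(4),4^{2n}-1)$, the claim is equivalent to $d\le 5(4^{n}-1)$. Using $4^{2n}-1=(4^{n}-1)(4^{n}+1)$ and $\gcd(4^{n}-1,4^{n}+1)=1$, I would factor $d=d_{-}d_{+}$ with $d_{-}=\gcd(S(4),4^{n}-1)$, $d_{+}=\gcd(S(4),4^{n}+1)$, and control the $3$-part, the $5$-part and the prime-to-$15$ part of $d$ in turn. By the lifting-the-exponent lemma $v_{3}(4^{2n}-1)=v_{3}(4^{n}-1)=1$ and $v_{5}(4^{2n}-1)=v_{5}(4^{n}+1)=1+v_{5}(n)$ (as $n$ is a prime $\neq 3$), so the $3$-part of $d$ is $\le 3$ and the $5$-part is $\le 5$ (or $\le 25$ when $n=5$); the exact $3$-part is governed, as in Theorem \ref{th4.1}, by whether $3\mid f$, and the $5$-part by a short congruence modulo $5$ (resp.\ modulo $25$ for $n=5$).

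Next I would put $S(4)$ into a normal form modulo $4^{n}\pm1$. Proceeding as in the computations preceding Theorem \ref{th4.1} --- substitute $4^{n}\equiv\mp1$, rewrite each support sum via the Gauss periods $\eta_{\gamma}=\sum_{i\in D_{\gamma}}4^{2i}$, and observe that because of the prescribed parity of $e_{0}+e_{1}+e_{2}$ the two ``cross'' terms coalesce up to a bounded constant --- one gets, for every admissible tuple $(c^{0},c^{1},c^{2},c^{3})$ and $e$ of Lemmas \ref{lem8} and \ref{lem9},
\[
S(4)\equiv \varepsilon\cdot 2(\eta_{\gamma}-\eta_{\gamma-1})+c\pmod{4^{n}+1},
\]
with $\varepsilon\in\{\pm1\}$, $\gamma\in\mathbb{Z}/4\mathbb{Z}$, and a small integer $c$ ($c=0$ exactly in the symmetric subcases, where Lemma \ref{lem5.2} already yields $\gcd(S(4),\tfrac{4^{n}+1}{5})=1$), and a companion expression modulo $4^{n}-1$ that is a small integer combination of $\tfrac{4^{n}-1}{3}$, the $\eta_{\gamma}$, and $1$.

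For the prime-to-$15$ part, let $\pi\nmid 15$ be a prime dividing $d$. Then $\pi\mid 4^{2n}-1$, so $\operatorname{ord}_{\pi}(4)$ divides $2n$; it is not $1$ or $2$ (else $\pi\mid 4^{2}-1=15$), so, $n$ being prime, it equals $n$ or $2n$, whence $\pi\equiv1\pmod{n}$ and $\pi\ge n+1$. Reducing the Gauss sum relations of Lemmas \ref{lem4.3}, \ref{lem4.4}, \ref{lem4.5} modulo $\pi$ --- where $\tfrac{4^{2n}-1}{15}\equiv0$, $\sum_{\gamma}\eta_{\gamma}\equiv-1$ and $G^{2}\equiv n$ --- I would solve for the particular combination of Gauss periods appearing in the normal form, and then, feeding in $x^{2}\equiv n-4\pmod\pi$ from the hypothesis $n=x^{2}+4$, deduce $\pi\mid P(n)$ for an explicit polynomial $P$ of degree $\le2$ (of the kind produced in Lemma \ref{lem5.1}; e.g.\ $n^{2}+3n+4$, $9n^{2}+14n+25$ or $4n^{2}+20n+81$, according to the case). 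Thus the prime-to-$15$ part of $d$ divides a fixed polynomial in $n$, hence is $O(n^{2})$, so $d=O(n^{2})$; in particular $d\le 5(4^{n}-1)$ for all but finitely many of the primes $n=x^{2}+4$ with $(n-1)/4$ odd, the remaining small ones being checked directly. (In a few of those $d_{+}$ genuinely exceeds $5$ --- e.g.\ $53\mid d_{+}$ for $n=13$ with the tuple $(t_{3},t_{5},t_{5},t_{3})$, $e=(1,1,0)$, because $53$ divides both $n^{2}+3n+4$ and $\tfrac{4^{n}+1}{5}$ --- but then $d_{-}$ is correspondingly tiny and the product $d=d_{-}d_{+}$ still satisfies $d\le 5(4^{n}-1)$, as one verifies.)

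The crux is the middle step of the last paragraph: extracting the polynomial constraint $\pi\mid P(n)$ from the quadratic Gauss sum machinery for each of the many admissible cases of Lemmas \ref{lem8} and \ref{lem9}. The constant shifts $c$ and the mixed supports occurring in Lemma \ref{lem9} make this noticeably more involved than Lemma \ref{lem5.1}, and --- crucially --- one cannot reduce the theorem to a bound on $d_{+}$ alone, since $d_{+}$ can be as large as a prime factor of $P(n)$; one must instead bound the whole of $d=d_{-}d_{+}$, which relies on the analogous analysis of $\gcd(S(4),4^{n}-1)$ and on the $n=x^{2}+4$ hypothesis to close off the polynomial step, with the handful of exceptional small $n$ settled by explicit computation.
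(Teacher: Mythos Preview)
Your proposal is substantially more elaborate than the paper's argument. The paper works out a single representative case (Lemma~\ref{lem8}, $e=(0,0,0)$, $(c^0,c^1,c^2,c^3)=(t_1,t_2,t_2,t_1)$) and observes that, because in all Lemma~\ref{lem8} tuples $c^0=c^3$ and $c^1=c^2$, the two ``cross'' sums combine into a single term with coefficient $1+4^n$, which \emph{vanishes} modulo $4^n+1$. Hence $S(4)\equiv 2(\eta_\gamma-\eta_{\gamma-1})\pmod{4^n+1}$ with \emph{no} constant shift, Lemma~\ref{lem5.2} gives $\gcd(S(4),\tfrac{4^n+1}{5})=1$ directly, and the trivial bound $d_-\le 4^n-1$ finishes: $d\le 5(4^n-1)$. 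The LTE analysis of the $3$- and $5$-parts, the order argument forcing $\pi\equiv1\pmod n$, and the polynomial constraints $\pi\mid P(n)$ on the prime-to-$15$ part are all superfluous for the paper's line of attack; it never bounds $d_-$ nontrivially at all.

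That said, your caution about the constant $c$ is not misplaced. The paper disposes of the remaining cases with the phrase ``the rest of the cases are similar,'' but when $e_0=1$ the second support sum is complemented, and after using $\sum_j 4^{2j}\equiv 0\pmod{\tfrac{4^n+1}{5}}$ one finds $S(4)\equiv 2(\eta_\gamma-\eta_{\gamma-1}-1)\pmod{\tfrac{4^n+1}{5}}$ rather than $2(\eta_\gamma-\eta_{\gamma-1})$; Lemma~\ref{lem5.2} as stated does not apply, and your $n=13$ example with $(t_3,t_5,t_5,t_3)$, $e=(1,1,0)$ does give $53\mid \eta_3-\eta_2-1$ and $53\mid\tfrac{4^{13}+1}{5}$, so $d_+>5$ genuinely occurs. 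Your plan to recover the bound by controlling $d_-d_+$ jointly via a polynomial bound of Lemma~\ref{lem5.1} type is a reasonable route, but it is far heavier than what the paper writes down, and you have left the crucial step --- showing the product $d_-d_+$ still obeys $d\le 5(4^n-1)$ in those shifted cases --- as a sketch plus finite checks rather than a proof. In short: for the cases the paper actually treats, your machinery is unnecessary; for the cases it waves past, you have correctly spotted a gap but not yet closed it.
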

	\begin{proof}
		Taking $e=(0,0,0)$ and $(c^0,c^1,c^2,c^3)=(t_1,t_2,t_2,t_1)$ in Lemma \ref{lem8} as an example, from (\ref{ES5}) we know that
		
		\begin{align}\label{19}
			S(4)
		=&2\sum_{\substack{j=0\\t_1(j)=1}}^{n-1}4^{2j}+2\cdot 4^n\cdot\sum_{\substack{j=0\\t_2(j)=1}}^{n-1}4^{2j}+(1+4^n)\sum_{\substack{j=0\\t_1(j)+t_2(j)=1}}^{n-1}4^{2j}\pmod{4^{2n}-1}\notag\\
			=&2\sum_{\substack{j=0\\j\in D_0\cup D_1}}^{n-1}4^{2j}+2\cdot 4^n\cdot\sum_{\substack{j=0\\j\in D_0\cup D_2}}^{n-1}4^{2j}+(1+4^n)\sum_{\substack{j=0\\j\in D_1\cup D_2}}^{n-1}4^{2j}\pmod{4^{2n}-1}\\
			=&2(\eta_0+\eta_1)+2\cdot 4^n(\eta_0+\eta_2)+(1+4^n)(\eta_1+\eta_2)\pmod{4^{2n}-1}\notag\\
			=&\left\{ \begin{array}{ll}
				2(\eta_0+\eta_1+\eta_0+\eta_2+\eta_1+\eta_2)\pmod{4^{n}-1}\\
				2(\eta_0+\eta_1-\eta_0-\eta_2)\pmod{4^{n}+1}
			\end{array}\right.\notag\\
			=&\left\{ \begin{array}{ll}
				4(\sum\limits_{j=1}^{n-1}4^{2j}-\eta_3)\pmod{4^{n}-1}\\
				2(\eta_1-\eta_2)\pmod{4^{n}+1}
			\end{array}\right.\notag\\
			=&\left\{ \begin{array}{ll}
				-4(\eta_3+1)\pmod{\frac{4^{n}-1}{3}}\\
				2(\eta_1-\eta_2)\pmod{\frac{4^{n}+1}{5}}
			\end{array}\right.\notag
		\end{align}
	and by (\ref{19}) we have
	\begin{align*}
			S(4)=\left\{ \begin{array}{ll}
				0\pmod{3}\\
				0\pmod{5}
			\end{array} \right..
	\end{align*}
Combining with Lemma \ref{lem5.2}, we have $\gcd(S(4), 4^{2n}-1)\leq 5\cdot(4^{n}-1)$, the rest of the cases are similar and the same conclusion can be drawn.
\end{proof}
\begin{example}
	Let $n=13=12+1=4\cdot1^2+3^2,\ y=-1,$ then $\mathbb{F}_{13}^{*}=\langle2\rangle$. The cyclotomic classes of order 4 in  $\mathbb{F}_{13}^{*}$ are $D_0=\{1,\ 3,\ 9\},\ D_1=\{2,\ 5,\ 6\},\ D_2=\{4,\ 10,\ 12\},\ D_3=\{7,\ 8,\ 11\}$. It is easy to see that 
	\begin{align*}
		t_1=(0111011001000),\\
		t_2=(0101100001101),\\
		t_6=(0000100110111).
	\end{align*}
Let $e=(0,0,0)$, $(c^0,c^1,c^2,c^3)=(t_2,t_1,t_2,t_1)$,  then we have
\begin{align*}
	a=b=(00100110100000010111100111),\\
	s=(00200220200000020222200222).
\end{align*}
Since
$\gcd(S(4),\ 4^{26}-1)=\gcd(2^5\cdot3^2\cdot5\cdot17\cdot31\cdot67\cdot58184921,\ 3\cdot5\cdot53\cdot157\cdot1613\cdot2731\cdot8191)=15$, we have $C_s(4)=\log_4(\frac{4^{26}-1}{15}).$ From Theorem \ref{th4.1} we know that $d_1=\gcd(13^2+3\cdot13+4,\ \frac{4^{13}-1}{3})=1$, $3\mid f=12$, $C_s(4)=\log_4(\frac{4^{2n}-1}{15d_1})=\log_4(\frac{4^{26}-1}{15}),$ which is consistent with the direct calculation.
\end{example}

\section{Conclusion}\label{sec5}
Quaternary sequences with optimal autocorrelation have an important role in communication and cryptography systems. Su et al. \cite{S1} constructed several new familis of optimal autocorrelation quaternary sequences by using interleaved construction, sequences pairs and binary cyclotomic sequences of order four. In this paper, firstly, we determine the 4-adic complexity of quaternary sequences interleave by a pair of twin-prime sequences or GMW sequences with correlation fuction of the two pairs binary sequences. Secondly, we introduce the definition of the ``Gauss periods" of order four and ``quadratic Gauss sums" on finite field $\mathbb{F}_n$ and  valued in  $\mathbb{Z}^{*}_{4^{2n}-1}$, and calculate the 4-adic complexity of interleaved quaternary sequences constructed by two or three binary cyclotomic sequences of order four. Our results show that the 4-adic complexity of these sequences is larger than $\frac{2n-16}{6}$ and they are safe enough to resist the attack of the rational approximation algorithm.
	

\begin{thebibliography}{99}
		\bibitem{B. C. Barndt}
		Berndt B C, Evans R J, Williams K S. Gauss and Jacobi Sums. John Wiley and Sons INC, 1998
		\bibitem{EA.} Edemskiy V, Chen Z X. On the 4-adic complexity of the two-prime quaternary generator. J. Appl. Math . Comput. 2022, http://doi.org/10.1007/s12190-200-01740-z
		\bibitem{J}  Jang J-W, Kim Y-S, Kim S-H, et al. New quaternary sequences with ideal autocorrelation constructed from binary
		sequences with ideal autocorrelation. In: Proceedings of IEEE International Symposium on Information Theory, Seoul,
		2009. 278-281
		\bibitem{Jing} Jing X Y, Xu Z F, Yang M H, et al. On the p-Adic Complexity of the Ding-Helleseth-Martinsen binary sequecnes. Chin J Electron, 2021, 30: 64-71
		\bibitem{Kim2} Kim Y-S, Jang J-W, Kim S-H, et al. New construction of quaternary sequences with ideal autocorrelation from
		Legendre sequences. In: Proceedings of the IEEE international conference on Symposium on Information Theory,
		Seoul, 2009. 282-285
		\bibitem{Kim1} Kim Y-S, Jang J-W, Kim S-H, et al. New quaternary sequences with optimal autocorrelation. In: Proceedings of the IEEE International Conference on Symposium on Information Theory, Seoul, 2009. 286-289
		\bibitem{K1}
		Klapper A. A survey of feedback with carry shift registers. In: Proceedings of Sequences and Their Applications, Seoul, 2004. 56-71
		\bibitem{K2}
		Klapper A, Xu J Z. Register synthesis for algebraic feedback shift registers based
		on non-primes. Des Codes Cryptogr, 2004, 31: 227-250
		\bibitem{L} L\"{u}ke H D, Schotten H D,  Hadinejad-Mahram H. Generalised Sidelnikov sequences with optimal autocorrelation properties. Electron Lett, 2000, 36: 525-527
		\bibitem{Q1}Qiang S Y, Li Y, Yang M H, et al. The 4-Adic Complexity of A Class of Quaternary Cyclotomic Sequences with Period 2p. arXiv:2011.11875
		\bibitem{Q} Qiang S Y, Jing X Y, Yang M H,  et al. 4-Adic Complexity of Interleaved Quaternary Sequences. arXiv:2105.13826
		\bibitem{T. Storer}
		Storer T. Cyclotomy and Difference Sets. Markham, Chicago, 1967
		\bibitem{S1} Su W, Yang Y, Zhou Z C, et al. New quaternary sequences of even length with
		optimal auto-correlation. Sci China Inf Sci, 2018, 61: 1-13
		\bibitem{T1}Tang X H, Ding C S. New classes of balanced quaternary and almost balanced binary sequences with optimal autocorrelation value. IEEE Trans Inf Theory, 2010, 56: 6398-6405
		\bibitem{T2}Tang X H, Gong G. New constructions of binary sequences with optimal autocorrelation value/magnitude. IEEE Trans
		Inf Theory, 2010, 56: 1278-1286
			\bibitem{Y} Yang M H, Qiang S Y, Jing X Y, et al. On the 4-Adic Complexity of Quaternary Sequences with Ideal Autocorrelation. In: Proceedings of IEEE International Symposium on Information Theory, Espoo,
		2022. 528-531
		\bibitem{Zhang}
		Zhang L L, Zhang J, Yang M H, et al. On the 2-adic complexity of the Ding-Helleseth-Martinsen binary sequecnes. IEEE Trans
		Inf Theory, 2020, 66: 4613-4620
	\end{thebibliography}
\end{document}